\definecolor{DarkGreen}{rgb}{0.1,0.5,0.1}
\definecolor{DarkRed}{rgb}{0.5,0.1,0.1}
\definecolor{DarkBlue}{rgb}{0.1,0.1,0.5}
\newtheorem{theorem}{Theorem}[section]
\newtheorem*{namedtheorem}{\theoremname}
\newcommand{\theoremname}{testing}
\newtheorem{lemma}{Lemma}
\newtheorem{claim}{Claim}
\newtheorem{example}{Example}
\newtheorem{fact}{Fact}
\newtheorem{corollary}{Corollary}
\newtheorem*{question*}{Question}
\theoremstyle{definition}
\newtheorem{definition}{Definition}
\theoremstyle{plain}
\newtheorem{Alg}{Algorithm}
\definecolor{DarkGreen}{rgb}{0.1,0.5,0.1}
\definecolor{DarkRed}{rgb}{0.5,0.1,0.1}
\definecolor{DarkBlue}{rgb}{0.1,0.1,0.5}
\newcommand{\ignore}[1]{}
\renewcommand{\Pr}{\mathop{\bf Pr\/}}                    % should we change these to \mathbb for consistency of single-letter functionals
\renewcommand{\floatc@ruled}[2]{\vspace{2pt}{\@fs@cfont \#1.\:} \#2 \par
 \vspace{1pt}}
\author{}
\begin{document}
%\pagenumbering{gobble}

\title{Efficiently Learning Mixtures of Mallows Models}
\author{Allen Liu \\MIT  \and Ankur Moitra\thanks{This work was supported in part by NSF CAREER Award CCF-1453261, NSF Large CCF-1565235, a David and Lucile Packard Fellowship, and an Alfred P. Sloan Fellowship.}\\ MIT}
\date{}
\maketitle

\begin{abstract}
Mixtures of Mallows models are a popular generative model for ranking data coming from a heterogeneous population. They have a variety of applications including social choice, recommendation systems and natural language processing. Here we give the first polynomial time algorithm for provably learning the parameters of a mixture of Mallows models with any constant number of components. Prior to our work, only the two component case had been settled. Our analysis revolves around a determinantal identity of Zagier \cite{RepTheory} which was proven in the context of mathematical physics, which we use to show polynomial identifiability and ultimately to construct test functions to peel off one component at a time. 

To complement our upper bounds, we show information-theoretic lower bounds on the sample complexity as well as lower bounds against restricted families of algorithms that make only local queries. Together, these results demonstrate various impediments to improving the dependence on the number of components. They also motivate the study of learning mixtures of Mallows models from the perspective of beyond worst-case analysis. In this direction, we show that when the scaling parameters of the Mallows models have separation, there are much faster learning algorithms.  
\end{abstract}

\thispagestyle{empty}
\setcounter{page}{0}

\newpage

\section{Introduction}

\subsection{Background}

User preferences \---- in a wide variety of settings ranging from voting \cite{Voting} to information retrieval \cite{Rank} \---- are often modeled as a distribution on permutations. Here we study the problem of learning a mixture of Mallows models from random samples. First, a Mallows model $M(\phi, \pi^*)$ is described by a center $\pi^*$ and a scaling parameter $\phi$. The probability of generating a permutation $\pi$ is $$\Pr_{M(\phi, \pi^*)}[\pi] = \frac{\phi^{d_{KT}(\pi, \pi^*)}}{Z}$$ where $d_{KT}$ is the Kendall-Tau distance \cite{KT} and $Z$ is a normalizing constant that only depends on $\phi$ and the number of elements being permuted which we denote by $n$. C. L. Mallows introduced this model in $1957$ and gave an inefficient procedure for sampling from them: Rank every pair of elements randomly and independently so that they agree with $\pi^*$ with probability $\frac{1}{1+\phi}$ and output the ranking if it is a total ordering. Doignon et al. \cite{Insertion} discovered a repeated insertion-based model that they proved is equivalent to the Mallows model and more directly lends itself to an efficient sampling procedure. 

The Mallows model is a natural way to represent noisy data when there is one true ranking that correlates well with everyone's own individual ranking. However in many settings (e.g. voting \cite{Gormley}, recommendation systems) the population is heterogenous and composed of two or more subpopulations. In this case, it is more appropriate to model the data as a mixture of simpler models. Along these lines, there has been considerable interest in fitting the parameters of a mixture of Mallows models to ranking data \cite{Pairwise, Mixtures, Dirichlet}. However most of the existing approaches (e.g. Expectation-Maximization \cite{Pairwise}) are heuristic and only recently were the first algorithms with provable guarantees given. For a single Mallows model, Braverman and Mossel \cite{MallowsReconstruction} showed how to learn it by quantifying how close the empirical average of the ordering of elements is to the ordering given by $\pi^*$ as the number of samples increases.

Awasthi et al. \cite{Original} gave the first polynomial time algorithm for learning mixtures of two Mallows models. Their algorithm learns the centers $\pi_1$ and $\pi_2$ exactly and the mixing weights and scaling parameters up to an additive $\theta$ with running time and sample complexity $$\mbox{poly}(n, \frac{1}{w_{min}}, \frac{1}{\phi_1(1-\phi_1)}, \frac{1}{\phi_2(1-\phi_2)}, \frac{1}{\theta} )$$ Here $\phi_1$ and $\phi_2$ are the scaling parameters and $w_{min}$ is the smallest mixing weight. Their algorithm works based on recasting the parameter learning problem in the language of tensor decompositions, similarly to other algorithms for learning latent variable models \cite{Tensors}. However there is a serious complication in that most of the entries in the tensor are exponentially small. So even though we can compute unbiased estimates of the entries of a tensor whose low rank decomposition would reveal the parameters of the Mallows model, most of the entries cannot be meaningfully estimated from few samples. Instead, Awasthi et al. \cite{Original} show how the entries that can be accurately estimated can be used to learn the prefixes of the permutations, which can be bootstrapped to learn the rest of the parameters. In fact before their work, it was not even known whether a mixture of two Mallows models was {\em identifiable} \---- i.e. whether its parameters can be uniquely determined from an infinite number of samples. 

The natural open question was to give provable algorithms for learning mixtures of any constant number of Mallows models. For other learning problems like mixtures of product distributions \cite{FM, FOS} and mixtures of Gaussians \cite{KMV, MV, BS}, algorithms for learning mixtures of two components under minimal conditions were eventually extended to any constant number of components. Chierichetti et al. \cite{Chier} showed that when the number of components is exponential in $n$, identifiability fails. On the other hand, when all the scaling parameters are the same and known, Chierichetti et al. \cite{Chier} showed that it is possible to learn the parameters when given an arbitrary (and at least exponential in $n \choose 2$) number of samples. Their approach was based on the Hadamard matrix exponential. They also gave a clustering-based algorithm that runs in polynomial time and works whenever the centers are well-separated according to the Kendall-Tau distance, by utilizing recent concentration bounds for Mallows models that quantify how close a sample $\pi$ is likely to be to the center $\pi^*$ \cite{Lengths}. 

\subsection{Our Results and Techniques}

Our main result is a polynomial time algorithm for learning mixtures of Mallows models for any constant number of components. Let $d_{TV}$ be the total variation distance, let $w_{min}$ be the smallest mixing weight, and let $U$ denote the uniform distribution over the $n!$ possible permutations. We prove:

\begin{theorem}\label{generalalgorithm}
For any constant $k$, given samples from a mixture of $k$ Mallows models $$M = w_1 M(\phi_1, \pi_1) + \cdots + w_k M(\phi_k, \pi_k)$$ where $d_{TV}(M(\phi_i, \pi_i), M(\phi_j, \pi_j)) \geq \mu$ for all $i \neq j$, $d_{TV}(M(\phi_i, \pi_i),U) \geq \mu$ for all $i$ and $n \geq 10 k^2$, there is an algorithm whose running time and sample complexity are $$\mbox{poly}(n, \frac{1}{w_{min}}, \frac{1}{\mu}, \frac{1}{\theta}, \log \frac{1}{\delta} )$$ for learning each center $\pi_i$ exactly and the mixing weights and scaling parameters to within an additive $\theta$. Moreover the algorithm succeeds with probability at least $1-\delta$.
\end{theorem}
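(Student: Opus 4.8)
The overall strategy is to reduce the global problem to many ``local'' instances supported on constant-size subsets of $[n]$, to solve each local instance by a robust identifiability argument powered by Zagier's determinant, and then to stitch the local answers back together. For the first step, observe that for any $S \subseteq [n]$ the restriction of a Mallows model $M(\phi,\pi)$ to the relative order of the elements of $S$ is again a Mallows model $M(\phi,\pi|_S)$ with the \emph{same} scaling parameter; hence $M|_S$ is a mixture of at most $k$ Mallows models on $|S|$ elements. We work with all subsets $S$ of a fixed size $s = \poly(k)$ (which is at most $n$ by the hypothesis $n \ge 10k^2$); there are only $\binom{n}{s} = \poly(n)$ of them, and since $|S|$ is a constant we can estimate the \emph{entire} distribution of $M|_S$ on its $s!$ possible orderings to within any $\eps$ using $\poly(1/\eps,\log(1/\delta))$ samples. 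The first thing to verify is that the separation hypotheses are inherited up to a polynomial loss: if $\pi_i \neq \pi_j$ globally they disagree on some pair and hence on every $S$ containing that pair, and together with the separation from uniform this lets us argue that ``most'' subsets yield a local instance whose components are still pairwise $\mu'$-separated and $\mu'$-separated from uniform for some $\mu' = \poly(\mu,1/n,1/k)$. (A subset on which two centers happen to collide simply carries less information, which is harmless provided enough subsets are non-degenerate.)

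The heart of the argument is the local subroutine. Fix $S$ with $|S| = m := s$. We want \emph{robust polynomial identifiability}: two mixtures of at most $k$ Mallows models on $[m]$ whose distributions are $\eps$-close in total variation have, after a suitable matching of components, parameters agreeing to within $\poly(\eps)$, where the polynomial's degree depends on $k$ and $m$ and its constant on the separation $\mu'$. The engine is Zagier's identity \cite{RepTheory}, which evaluates $\det\bigl((\phi^{d_{KT}(\sigma,\tau)})_{\sigma,\tau \in S_m}\bigr)$ as an explicit product that is nonzero unless $\phi$ is a small root of unity; consequently the $m!$ distributions $\{M(\phi,\tau)\}_{\tau \in S_m}$ are linearly independent for every fixed admissible $\phi$, and the associated change-of-basis matrix is quantitatively well conditioned. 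Differencing $M(\phi,\tau)$ in the parameter $\phi$ (finite differences, whose behavior is again governed by the same determinantal structure) upgrades this to linear independence across components with distinct scaling parameters, and hence to a well-posed polynomial system for the parameters $(w_i,\phi_i,\tau_i)$.

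Algorithmically we do not solve this system all at once; we peel. Enumerate the (constantly many) candidates for which orderings occur as the local centers and which one, $\tau_1$, belongs to a component that is extremal in its scaling parameter (breaking ties lexicographically). Using the linear independence above, build from the estimated restricted distribution a bounded test function $f \co S_m \to \R$ — a signed combination of indicators of orderings — that is insensitive to the Mallows distributions with other centers and for which $\E_{M|_S}[f]$ is, up to polynomially small error, $w_1\,\E_{M(\phi_1,\tau_1)}[f]$ with $\E_{M(\phi_1,\tau_1)}[f]$ bounded away from $0$; a small family of such test functions then pins down $w_1$, $\phi_1$ and $\tau_1$. We subtract $w_1 M(\phi_1,\tau_1)$ from the empirical distribution and recurse on the residual, a mixture of $k-1$ Mallows models. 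Because $k$ is constant, the polynomial loss incurred at each of the at most $k$ peeling rounds compounds only to an overall polynomial factor.

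Running the local subroutine on every $s$-subset yields, for each subset, the induced centers $\pi_i|_S$ together with estimates of the $\phi_i$ and $w_i$. To reconstruct a global center we read off, for each pair $(a,b)$, their relative order within the appropriate component from any subset containing both; since the relevant pairwise bias is bounded away from $1/2$ by a polynomial amount, $\poly(n,1/w_{\min},1/\mu)$ samples suffice to get every one of the $\binom{n}{2}$ comparisons right simultaneously with probability $1-\delta$, so the centers are recovered \emph{exactly}. The bookkeeping is to match ``which local component is which'' across overlapping subsets: we do this primarily by the subset-independent scaling parameters $\phi_i$, and when two of them are close we fall back on the orderings restricted to the overlap. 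Finally, with all $k$ centers known exactly, re-estimating the $w_i$ and $\phi_i$ globally to additive $\theta$ is routine, since fixing the center reduces each component to a one-parameter family. I expect the main obstacle to be the analysis underlying the second and third paragraphs: quantifying the condition number of Zagier's matrix and of its $\phi$-differences sharply enough that the already-shrunken separation $\mu'$ still yields a usable signal, and controlling how error propagates through the $k$ successive peeling steps.
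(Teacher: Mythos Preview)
Your architecture is genuinely different from the paper's. You localize to relative orders on constant-size subsets $S$, solve each local instance completely, and stitch; the paper instead works globally with order-$c$ \emph{positional} moments $v_c(M)$ (probabilities that $c$ chosen elements land in $c$ chosen absolute positions), guesses small ordered block structures, and builds tensor test functions directly on the full ground set. Your reduction is sound at the setup level: $M(\phi,\pi)|_S = M(\phi,\pi|_S)$, and the separation hypotheses do survive restriction with only polynomial loss in $\mu,1/n$.

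The genuine gap is exactly where you flagged it. Zagier's identity controls only the single-$\phi$ picture: it says the columns $\{M(\phi,\tau)\}_{\tau}$ are quantitatively independent for each \emph{fixed} $\phi$. It says nothing about independence among $M(\phi_1,\tau_1),\dots,M(\phi_k,\tau_k)$ when the $\phi_i$ differ, and your ``finite differences in $\phi$, whose behavior is again governed by the same determinantal structure'' is not an argument --- the differenced vectors are not columns of any $A_m(\phi)$, so Zagier's formula does not apply to them. Consequently the test function $f$ you describe, built to be ``insensitive to the Mallows distributions with other centers,'' will in general \emph{not} annihilate components whose $\phi$ differs from $\phi_1$: killing the other centers at $\phi_1$ leaves residual contributions from every component with a different scaling parameter. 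The paper's mechanism for this is concrete and absent from your sketch: for each component $i$ with $\phi_i$ far from $\phi_1$, take a pair $(x_a,y_a)$ consecutive in $\pi_1$; the marginal of $M(\phi_i,\pi_i)$ on that pair's relative order is $(\tfrac{1}{1+\phi_i},\tfrac{\phi_i}{1+\phi_i})$ or its reverse, so the $2$-vector $(\tfrac{\phi_i}{1+\phi_i},\tfrac{-1}{1+\phi_i})$ annihilates component $i$ \emph{regardless of $\pi_i$}. Tensoring $k-j$ such vectors zeroes out all far-$\phi$ components, after which (and only then) the Zagier-based robust Kruskal-rank bound handles the surviving near-equal-$\phi$ components. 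This two-stage construction is the heart of the paper's Lemma~\ref{Identifiability}; your local subroutine would need it (or something equivalent) applied on $S$, and nothing in the ``finite differences'' line supplies it.

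A smaller issue: the stitching step is underdetermined when two components share $\phi$. ``Fall back on orderings restricted to the overlap'' fails because two generic $s$-subsets overlap in too few elements to disambiguate $\pi_i|_{S\cap S'}$ from $\pi_j|_{S\cap S'}$. You would need to fix one anchor subset $S_0$ that separates all $k$ components and work only with subsets containing $S_0$; this is repairable but should be made explicit.
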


A main challenge in learning mixtures of Mallows models is in establishing {\em polynomial identifiability} \---- i.e. that the parameters of the model can be approximately determined from a polynomial number of samples. When addressing this question, there is a natural special matrix $A$ to consider: Let $A$ be an $n! \times n!$ matrix whose rows and columns are indexed by permutations with $$A_{\pi, \sigma} = \phi^{d_{KT}(\pi, \sigma)}$$ Zagier \cite{RepTheory} used tools from representation theory to find a simple expression for the determinant of $A$. Interestingly his motivation for studying this problem came from interpolating between Bose and Fermi statistics in mathematical physics. We can translate his result into our context by observing that the columns of $A$, after normalizing so that they sum to one, correspond to Mallows models with the same fixed scaling parameter $\phi$. Thus Zagier's result implies that any two distinct mixtures $M$ and $M'$ of Mallows models, where all the components have the same scaling parameter, produce different distributions\footnote{This result was rediscovered by Chierichetti et al. \cite{Chier} using different tools, but without a quantitative lower bound on the smallest singular value.}. 

However the quantitative lower bounds that follow from Zagier's expression for the determinant are too weak for our purposes, and are not adapted to the number of components in the mixture. We exploit symmetry properties to show lower bounds on the length of any column of $A$ projected onto the orthogonal complement of any other $k-1$ columns, which allows us to show that not only does $A$ have full rank, any small number of its columns are robustly linearly independent \cite{Allman}. More precisely we prove:

\begin{theorem}
Let $\phi < 1 - \epsilon$. Let $c_1, c_2, \cdots, c_k$ be any $k$ distinct columns of $A$, normalized so that they each sum to one. Then $$\|z_1 c_1 + \cdots + z_k c_k\|_{1} \geq \frac{\max_i(|z_i|) \epsilon^{2k^2}}{2n^{4k} (k+1)^{2k^2 + 4k}} $$
where $z_i$ are arbitrary real coefficients. 
\end{theorem}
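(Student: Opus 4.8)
The plan is to reduce the statement about arbitrary columns of $A$ to a statement about the structure near a single column, exploiting the fact that $A$ is, up to row/column permutations, highly symmetric — the group $S_n$ acts on it by simultaneously relabeling rows and columns, and this action is transitive on columns. So without loss of generality we may assume one of the $k$ columns, say $c_1$, is indexed by the identity permutation $e$, and the quantity $\|z_1 c_1 + \cdots + z_k c_k\|_1$ is invariant under applying any $\sigma \in S_n$ to all the indexing permutations simultaneously. The key object is the projection of $c_1$ onto the orthogonal complement of $\spn\{c_2,\dots,c_k\}$: if this projection has $\ell_2$-norm at least some quantity $\lambda$, then for any coefficients we get $\|z_1 c_1 + \cdots + z_k c_k\|_2 \geq |z_1|\lambda$, and more generally by symmetry (permuting the roles of the $c_i$) we get the bound with $\max_i |z_i|$. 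Converting from $\ell_2$ to $\ell_1$ only helps us since $\|\cdot\|_1 \geq \|\cdot\|_2$, so it suffices to lower bound the $\ell_2$-distance from $c_1$ to the span of the other $k-1$ columns.

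The first main step is to identify, for each pair of distinct columns $c_i, c_j$ indexed by permutations $\pi_i, \pi_j$, a coordinate (a permutation $\rho$) where $c_i$ is ``large'' and $c_j$ is ``much smaller.'' A natural choice: the coordinate $\rho = \pi_i$ itself, where $(c_i)_{\pi_i} = \phi^0/Z = 1/Z$ is the maximum entry of $c_i$, while $(c_j)_{\pi_i} = \phi^{d_{KT}(\pi_i,\pi_j)}/Z$, which is smaller by a factor $\phi^{d_{KT}(\pi_i,\pi_j)} \leq \phi \le 1-\epsilon$ — but a single-step gap is not enough when there are many columns. Instead, I would look at a whole ``staircase'' of coordinates. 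Since the $\pi_i$ are distinct, for each $i$ there is an adjacent transposition $\tau$ with $d_{KT}(\pi_i\tau,\pi_i)=1$ but $d_{KT}(\pi_i\tau,\pi_j) = d_{KT}(\pi_i,\pi_j) \pm 1$; iterating, we can build a set of roughly $\poly(n)$ coordinates near $\pi_i$ where the entries of $c_i$ dominate the combined contribution of all the other columns. This is where the dependence on $n$, $k$ and $\epsilon$ in the bound will come from: we need the ``neighborhood'' of $\pi_i$ that is $\phi$-heavy for $c_i$ to be $\phi^{\Theta(1)}$-separated from the neighborhoods of the other centers, and we need enough such coordinates that the contributions of the other $k-1$ columns cannot cancel $c_i$'s projection.

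Concretely, the technical heart is a linear-algebra estimate: write $G$ for the $k \times k$ Gram matrix $G_{ij} = \langle c_i, c_j \rangle$; then the squared $\ell_2$-distance from $c_1$ to $\spn\{c_2,\dots,c_k\}$ equals $\det(G)/\det(G')$ where $G'$ is the lower-right $(k-1)\times(k-1)$ principal minor. The diagonal entries of $G$ are $\|c_i\|_2^2 = \frac{1}{Z^2}\sum_\pi \phi^{2 d_{KT}(\pi,\pi_i)}$, which is $\Theta(1)$-ish after normalization, and the off-diagonal entries $\langle c_i, c_j \rangle = \frac{1}{Z^2}\sum_\pi \phi^{d_{KT}(\pi,\pi_i)+d_{KT}(\pi,\pi_j)}$ should be boundable by something like $\phi^{d_{KT}(\pi_i,\pi_j)}$ times the diagonal scale (triangle inequality in the exponent). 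So $G$ is diagonally dominant with a gap governed by $\phi \leq 1-\epsilon$, and one bounds $\det(G)$ from below and $\det(G')$ from above by standard Gershgorin/Hadamard-type arguments, yielding $\epsilon^{\poly(k)}$ in the numerator and $n^{\poly(k)}$ in the denominator — matching the shape $\epsilon^{2k^2}/(2 n^{4k}(k+1)^{2k^2+4k})$ in the statement.

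The step I expect to be the main obstacle is getting the off-diagonal Gram entries small enough relative to the diagonal ones: the naive triangle-inequality bound $d_{KT}(\pi,\pi_i) + d_{KT}(\pi,\pi_j) \geq d_{KT}(\pi_i,\pi_j)$ only gives a factor $\phi^{d_{KT}(\pi_i,\pi_j)}$, and if two centers are at Kendall-Tau distance $1$ this factor is only $\phi$, which is not small when $\phi$ is bounded away from $1$ only by a tiny $\epsilon$ — yet the theorem must hold for all $\phi < 1-\epsilon$, including $\phi$ very close to $1-\epsilon$ with $\epsilon$ small. So the crude Gram-matrix argument will lose too much, and the resolution has to use the finer combinatorial structure: one must count, for the specific pair $\pi_i,\pi_j$, how the sum $\sum_\pi \phi^{d_{KT}(\pi,\pi_i)+d_{KT}(\pi,\pi_j)}$ compares to $\sum_\pi \phi^{2d_{KT}(\pi,\pi_i)}$ using the geometry of the ``between'' set of permutations, and this is precisely where Zagier's determinantal identity (or the symmetry/representation-theoretic viewpoint behind it) is invoked to extract a clean quantitative bound rather than an ad hoc one. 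I would therefore organize the proof so that the symmetry reduction and the staircase-of-coordinates construction do the combinatorial work, and Zagier's identity supplies the one nontrivial inequality that the elementary arguments cannot.
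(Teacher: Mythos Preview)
Your proposal has a genuine gap at exactly the point you flag as ``the main obstacle.'' You correctly recognize that the crude Gram-matrix/diagonal-dominance argument on the full $n!$-dimensional vectors $c_1,\dots,c_k$ fails when two centers are at Kendall--Tau distance $1$, and you propose to fix this using ``the finer combinatorial structure'' and Zagier's identity. But you never say \emph{how}, and this is precisely where the real content is. Applying Zagier's identity directly to the full matrix $A_n(\phi)$ (as in the preliminary Lemma~\ref{preK}) gives a bound of the form $(\epsilon^n/\sqrt{n!})^k$ on the $\ell_2$ projection---exponentially small in $n$, not the polynomial bound claimed. Your staircase-of-coordinates idea does not escape this, since the number of coordinates near $\pi_i$ at Kendall--Tau distance $O(k)$ is still only $n^{O(k)}$, and these carry only an exponentially small fraction of the $\ell_2$ mass of $c_i$.

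The key idea you are missing is a \emph{dimension reduction} that lets Zagier be invoked on matrices of size independent of $n$. The paper builds a block structure $\mathcal{B}=S_1,\dots,S_j$ with $\sum |S_b|\le 2k$ (taking, for each $i\ge 2$, a pair of elements adjacent in $\pi_1$ but inverted in $\pi_i$), restricts attention to permutations satisfying $\mathcal{B}$, and uses Corollary~\ref{blockStructure} to factor the restricted distribution of each $M_i$ as a rank-one tensor $\Pr_{M_i}[\mathcal{S}_\mathcal{B}]\cdot v(M(\phi,\pi_i|_{S_1}))\otimes\cdots\otimes v(M(\phi,\pi_i|_{S_j}))$. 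Now each factor lives in dimension $|S_b|!\le (2k)!$, and Zagier (via Lemma~\ref{preK}) is applied to the \emph{small} matrix $A_{|S_b|}(\phi)$ to produce, for each block $b$, a unit vector $v_b$ orthogonal to every $v(M(\phi,\pi_i|_{S_b}))$ with $\pi_i|_{S_b}\ne\pi_1|_{S_b}$ but with inner product at least $(\epsilon^{|S_b|}/\sqrt{|S_b|!})^k/|S_b|!$ against $\pi_1$'s factor. The test functional $v_1\otimes\cdots\otimes v_j$ then annihilates every $T_i$ for $i\ge 2$ (since each $\pi_i$ is killed in at least one block) and picks out $T_1$ with weight polynomial in $1/n$, $\epsilon$, and $1/k$; the $n^{-4k}$ in the final bound comes from Lemma~\ref{fix}, which lower-bounds $\Pr_{M_1}[\mathcal{S}_\mathcal{B}]$. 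This localization-then-tensorize step is the missing ingredient; without it, any argument that works directly with the full $n!$-dimensional vectors will incur the exponential loss you cannot afford.
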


Even though this result nominally applies to mixtures of Mallows models where all the scaling parameters are the same, we are able to use it as a black box to solve the more general learning problem. We reformulate our lower bound on how close any $k$ columns can be to being linearly dependent in the langugage of test functions, which we use to show that when the scaling parameters are different, we can isolate one component at a time and subtract it off from the rest of the mixture. Combining these tools, we obtain our main algorithm.  We note that the separation conditions we impose between pairs of components are information-theoretically necessary for our learning task.  

It is natural to ask whether the dependence on $k$ can be improved. First, we show lower bounds on the sample complexity. We construct two mixtures $M$ and $M'$ whose components are far apart \---- every pair of components has total variation distance at least $\mu$ \---- but $M$ and $M'$ have total variation distance about $\mu^{2k-1}$. As a corollary we have:

\begin{corollary}
Any algorithm for learning the components of a mixture of $k$ Mallows models within $\mu$ in total variation distance must take at least $(1/\mu)^{2k-1}$ samples. 
\end{corollary}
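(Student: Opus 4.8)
The plan is to obtain the corollary from the two-mixture construction by the standard indistinguishability argument. Let $M$ and $M'$ be the two mixtures of $k$ Mallows models furnished by that construction: every pair among the $2k$ components (the $k$ of $M$ together with the $k$ of $M'$) is at total variation distance at least $\mu$, while $d_{TV}(M,M') = O(\mu^{2k-1})$. Assuming this, I would suppose toward a contradiction that some algorithm $\mathcal{A}$ takes $m$ i.i.d.\ samples from an arbitrary mixture of $k$ Mallows models and, with probability at least $2/3$, outputs a list of $k$ Mallows models that, under some matching, are each within $\mu$ of a distinct true component.

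First I would argue that $\mathcal{A}$ run on $M^{\otimes m}$ and $\mathcal{A}$ run on $(M')^{\otimes m}$ place their ``success'' mass on disjoint events. By the triangle inequality, a list that is $\mu$-close after matching to the components of $M$ cannot contain a model that is $\mu$-close to any component of $M'$, since each component of $M'$ is more than $2\mu$ from every component of $M$; hence such a list is not a valid answer for $M'$. (The factor-of-two slack between the $\mu$-accuracy of learning and the $\mu$-separation of components is immaterial: rescale $\mu$ by a fixed constant, or arrange the construction to have separation $3\mu$; either way only the hidden constant in the final bound changes.) It follows that the law of $\mathcal{A}$'s output differs by at least $2/3 - 1/3 = 1/3$ in total variation distance between the two inputs.

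Next I would invoke the data processing inequality together with subadditivity of total variation over product measures: $\tfrac{1}{3} \le d_{TV}\bigl(\mathcal{A}(M^{\otimes m}),\, \mathcal{A}((M')^{\otimes m})\bigr) \le d_{TV}(M^{\otimes m},\, (M')^{\otimes m}) \le m \cdot d_{TV}(M,M') = O\bigl(m\,\mu^{2k-1}\bigr)$, so that $m = \Omega\bigl((1/\mu)^{2k-1}\bigr)$, as claimed. The same bound applies to randomized algorithms and to any constant success probability bounded away from $1/2$; only the constant changes.

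Given the construction, this reduction is routine, so the real difficulty --- which I would treat as already handled, since producing $M$ and $M'$ is the content of the statement immediately preceding the corollary --- lies in building two mixtures whose components are pairwise $\mu$-far yet whose mixtures are only $\mu^{2k-1}$-apart. I expect this to be the main obstacle, and the natural approach is moment matching: parameterize a one-parameter family of Mallows components, pick two mixing distributions on the parameter that agree on the first $2k-2$ moments, so that the signed measure $M - M'$ behaves like a $(2k-1)$-st finite difference of width $\sim\mu$ and therefore has total mass $O(\mu^{2k-1})$, while the $\sim\mu$ spacing keeps the components themselves $\mu$-separated. Checking that such a family of Mallows models genuinely exists and that the finite-difference estimate really does control the total variation distance is the crux; everything downstream is the standard reduction above.
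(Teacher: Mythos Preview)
Your proposal is correct and matches the paper's intent. The paper treats the corollary as an immediate consequence of the preceding lemma (Lemma~\ref{lowerBound1}) and gives no separate proof; your reduction via disjoint success events, the data processing inequality, and subadditivity of total variation over products is precisely the standard argument that makes this implication rigorous, and your sketch of the construction as a finite-difference/moment-matching scheme on a one-parameter family $M(\phi_i,\pi)$ with $\phi_i = i\lambda$ is exactly what the paper does (the coefficients $(-1)^i\binom{r-1}{i}$ are the finite-difference weights, and the key computation shows the entry at a permutation with $i$ inversions vanishes for $i\le r-2$). Your observation about the factor-of-two slack is also apt: in the paper's construction the cross-mixture components at adjacent indices are only about $\mu$ apart, not $2\mu$, so one does need to rescale $\mu$ by a constant to make the disjointness-of-success-events step go through cleanly; this only affects constants, as you note.
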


Second, we consider a restricted model where the learner can only make local queries of the form: Given elements $x_1, \cdots, x_c$ and locations $i_1, \cdots, i_c$ and a tolerance $\tau$, what is the probability that the mixture assigns $x_j$ to location $i_j$ for all $j$ from $1$ to $c$, up to an additive $\tau$? We show that our algorithms can be implemented in the local model. Moreover, in this model, we can prove lower bounds on the dependence on $n$ and $k$. We show:

\begin{theorem}[Informal]
Any algorithm for learning a mixture of $k$ Mallows models through local queries must make at least $n^{\log k}$ queries or make a query with $\tau \leq n^{-\frac{1}{2} \log k}$. 
\end{theorem}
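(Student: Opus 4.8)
The plan is to prove this lower bound by reducing learning to a distinguishing problem: since a correct learner must output every center $\pi_i$ exactly, it must in particular tell apart any two legal inputs whose centers differ, so it suffices to exhibit a large family of legal inputs that no small, high-tolerance collection of local queries can separate. Writing $d=\lfloor\log_2 k\rfloor$, I would build a family $\{M_s\}$ of mixtures of $k$ Mallows models, indexed by a ``secret'' $s$ ranging over a set $\mathcal S$ with $|\mathcal S|=n^{\Theta(\log k)}$, with two target properties: \emph{(i) legality} --- for every $s$ the components of $M_s$ are pairwise $\mu$-far and each is $\mu$-far from $U$ for some $\mu\ge 1/\poly(n)$, so $M_s$ is a valid input for Theorem~\ref{generalalgorithm}, and distinct secrets give distinct tuples of centers; and \emph{(ii) local indistinguishability} --- for every local query $Q=(x_1,\dots,x_c;\,i_1,\dots,i_c;\,\tau)$ with $\tau>n^{-\frac12\log k}$ there is a value $v_Q$, not depending on $s$, that is within $\tau$ of ${\Pr}_{M_s}[Q]$ for every $s$ except possibly the one whose encoded configuration is probed exactly by $(x_1,\dots,x_c;\,i_1,\dots,i_c)$. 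Granting (i) and (ii), the theorem follows from the standard adversary argument: fix the learner's internal randomness and have the adversary answer each query $Q$ with $v_Q$, so the entire interaction --- hence the sequence of $q$ queries --- is determined independently of $s$; for a uniformly random $s\in\mathcal S$ the transcript is then consistent with $M_s$ unless $s$ is among the at-most-$q$ secrets detected by some query, an event of probability $\le q/|\mathcal S|$. So no learner making fewer than a constant times $n^{\log k}$ queries, all with tolerance $>n^{-\frac12\log k}$, can succeed, since on the complementary event the transcript is consistent with many secrets of pairwise distinct centers and the learner's output is a fixed function of the transcript.

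For the construction I would turn the $k=2^{d}$ components into a depth-$d$ product gadget. Fix the common scaling parameter $\phi=1-n^{-1/2}$, so that a single Mallows model is still $\Omega(1)$-far from $U$ (its number of inversions concentrates around $\Theta(n^{3/2})\ll\binom n2$) while having ``spread'' $\Theta(\sqrt n)$ at each position. Partition $[n]$ into $d$ consecutive blocks, each of size $\ge n/(2d)$; the secret $s=(p_1,\dots,p_d)$ picks one position $p_\ell$ inside block $\ell$, so $|\mathcal S|\ge (n/2d)^d=n^{\Theta(\log k)}$. Starting from a fixed base ordering, the center of the component indexed by $b\in\{0,1\}^{d}$ is obtained by inserting, independently over $\ell$, one of two small local perturbations near location $p_\ell$ according to the bit $b_\ell$, and the mixing weights are uniform. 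Two secrets yield the same multiset of centers only if they agree block by block, which gives the second half of (i); the first half I would obtain by choosing the perturbations a small but nonzero Kendall-Tau distance apart and from the base, so that every pairwise $d_{TV}$ and every distance to $U$ stays at least some $\mu\ge 1/\poly(n)$.

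The crux is property (ii), which reduces to a per-level estimate on cylinder marginals. A width-$c$ local query returns a convex combination, over the $k$ components, of cylinder probabilities of single Mallows models; using the repeated-insertion description of the Mallows model I would show that the level-$\ell$ perturbation changes the answer to $Q$ by a factor that is $\Omega(1)$ when one of the queried positions lands on $p_\ell$, but only $O(n^{-1/2})$ otherwise --- because away from $p_\ell$ the two perturbations get averaged against the $\Theta(\sqrt n)$-wide insertion distribution and essentially cancel. Since the $d$ levels act on disjoint blocks these factors multiply, so a query that misses $p_\ell$ for even one $\ell$ contributes total discrepancy at most $\bigl(O(n^{-1/2})\bigr)^{d}=n^{-\frac12\log k}$, absorbed by the tolerance, while a query that hits all of $p_1,\dots,p_d$ must name the configuration $(p_1,\dots,p_d)$ outright; queries wide enough that $c=\omega(\log n)$ need no separate treatment, since then every relevant cylinder probability --- and hence ${\Pr}_{M_s}[Q]$ --- is already far below $n^{-\frac12\log k}$, so $v_Q=0$ works for all $s$. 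I expect the main obstacle to be exactly this cylinder-marginal analysis: proving the clean ``$\Omega(1)$ if a queried position hits $p_\ell$, $O(n^{-1/2})$ otherwise'' dichotomy at a single level, and that the per-level errors compose multiplicatively across the $d=\log k$ levels to yield the threshold $n^{-\frac12\log k}$ and the rarity $n^{-\log k}$ --- together with verifying all the pairwise and uniform separation conditions simultaneously under the single choice of $\phi$, and handling queries of intermediate width (which can probe a few configurations at once and cost at most a sub-polynomial factor in the exponent).
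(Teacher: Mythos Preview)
Your approach is genuinely different from the paper's, and the central step --- property (ii), that every high-tolerance query is consistent with all but one secret --- does not hold for the construction you describe.

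The paper does not build a large family $\{M_s\}$. It builds exactly \emph{two} mixtures $M,M'$ via a parity trick: partition $[n]$ into $\ell\approx\log_2(2k)$ blocks, pair up \emph{all} consecutive elements within each block, and for $b\in\{0,1\}^\ell$ let the center $\pi_b$ flip every pair inside the blocks with $b_i=1$. Then $M$ is the uniform mixture over even-parity $b$ and $M'$ over odd-parity $b$. Two observations finish the proof. First, any query on strictly fewer than $\ell$ elements leaves some block untouched; pairing each even component with the odd component obtained by flipping that one block gives \emph{exact} equality of the query answer for $M$ and $M'$ (the flipped block does not involve any queried element, hence leaves the marginal unchanged). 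Second, any query pinning down $\ell$ or more elements has true probability at most $(\sqrt{2k/n})^{\ell}$ under either mixture (each additional pinned position costs a factor $\le 1/(1+\phi+\cdots+\phi^{n-1})$), so answering $0$ is valid whenever $\tau>(\sqrt{2k/n})^{\ell}\approx n^{-\frac12\log k}$. Hence \emph{every} query with $\tau>n^{-\frac12\log k}$ can be answered identically for $M$ and $M'$; the learner is forced to ask at least one query below that tolerance, and that single query already has cost $n^{\log k}$. There is no counting over a large secret set.

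In your construction the ``one secret per query'' claim breaks down. Take a width-$1$ query $Q$ asking for a single element $x$ in block $1$ at some position. Whenever $x$ is one of the perturbed elements for a particular value $p^\star$ of $p_1$, the mixture marginal $\Pr_{M_s}[Q]$ differs from the base value by $\Theta((1-\phi)\cdot\Pr[Q])=\Theta(n^{-1})$ for \emph{every} secret $s$ with $p_1=p^\star$, regardless of $p_2,\dots,p_d$: averaging over the component bits $b_2,\dots,b_d$ cancels nothing here because none of those perturbations touch $x$. There are $(n/(2d))^{d-1}$ such secrets, not one, and $n^{-1}>n^{-\frac12\log k}$ as soon as $k>4$. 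So a single high-tolerance query can separate off a $(1/n)$-fraction of $\mathcal S$ at a time, and your adversary argument yields only an $\Omega(n/\log k)$ query lower bound, not $n^{\log k}$. The sentence ``a query that misses $p_\ell$ for even one $\ell$ contributes total discrepancy at most $(O(n^{-1/2}))^{d}$'' is where the slip occurs: even granting a product structure, hitting $d-1$ levels and missing one gives $\Omega(1)^{d-1}\cdot O(n^{-1/2})=O(n^{-1/2})$, not $O(n^{-d/2})$. The mechanism that makes the paper's bound work --- parity across blocks, so that touching fewer than $\ell$ blocks leaves the two mixtures \emph{exactly} indistinguishable --- is absent from your construction; there is nothing forcing detection at one level to be useless without detection at all others.
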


This is reminiscent of statistical query lower bounds for other unsupervised learning problems, most notably learning mixtures of Gaussians \cite{DKK}. However it is not clear how to prove lower bounds on the statistical query dimension \cite{SQL}, because of the complicated ways that the locations that each element is mapped to affect one another in a Mallows model, which makes it challenging to embed small hard instances into larger ones. 

Finally we turn to beyond-worst case analysis and ask whether there are natural conditions on the mixture that allow us to get algorithms whose dependence on $n$ is a fixed polynomial rather than one whose degree depends on $k$. Rather than requiring the centers to be far apart, we merely require their scaling parameters to be separated from one-another. We show:

\begin{theorem}\label{mainsmooth}
Given samples from a mixture of $k$ Mallows models $$M = w_1 M(\phi_1, \pi_1) + \cdots + w_k M(\phi_k, \pi_k)$$ where $|\phi_i - \phi_j| \geq \gamma$ for all $i \neq j$, $\phi_i \leq 1 - \gamma$ for all $i$ and $n \geq 10 k$, there is an algorithm whose running time and sample complexity are $$f(\gamma, \theta, w_{\min}, k) \mbox{poly}(n, \log \frac{1}{\delta})$$ for learning each center $\pi_i$ exactly and the mixing weights and scaling parameters to within an additive $\theta$, where $f(\gamma, \theta, w_{\min}, k) = \mbox{poly}(1/\gamma^{k^2}, 1/\theta^{k^2}, 1/w_{min}^{k^2})$. Moreover the algorithm succeeds with probability at least $1-\delta$.
\end{theorem}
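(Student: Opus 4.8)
The plan is to exploit the separation $|\phi_i-\phi_j|\ge\gamma$ twice: first to learn the $2k$ ``symmetric'' parameters — the weights $w_l$ and the scalings $\phi_l$ — and then to de-mix the centers one element at a time. The crucial structural point that keeps the dependence on $n$ a fixed polynomial is that every statistic used is the marginal of a \emph{single} sample on an $O(1)$-size subset of the $n$ elements; all such marginals can be estimated to accuracy $\eta$ from $\poly(n,1/\eta)\log(1/\delta)$ samples (with a union bound over the $\binom{n}{O(1)}$ subsets), with the degree of the polynomial in $n$ independent of $k$. The entire $k$-dependence is pushed into the conditioning of a few $k\times k$ (generalized Vandermonde / Cauchy) systems whose node set $\{\phi_1,\dots,\phi_k\}$ is $\gamma$-separated and lies in $[0,1-\gamma]$; standard separated-node bounds give inverse operator norm $\poly(1/\gamma^{k})$ per inversion, and compounding over the peeling rounds is what produces the $\poly(1/\gamma^{k^2},1/\theta^{k^2},1/w_{\min}^{k^2})$ factor.

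\textbf{Step 1 (weights and scalings).} For a single draw $\pi\sim M$ and any subset $S$, the restriction $\pi|_S$ is distributed as $\sum_l w_l\,M(\phi_l,\pi_l|_S)$, a mixture of $k$ Mallows models on the constant ground set $S$. Ranging over all $S$ of a fixed size $c=O(1)$ and reading off a suitable $O(k)$-sized family of linear functionals of these marginals (for instance probabilities of short prefixes of orderings, symmetrized over the choice of $S$) yields a polynomial system in the $w_l$, the $\phi_l$, and a handful of auxiliary alignment quantities such as $d_{KT}(\pi_i|_S,\pi_j|_S)$. I would show this system has a unique solution and, more to the point, is well conditioned — the relevant matrices being $\phi$-evaluation matrices on the $\gamma$-separated nodes $\phi_1,\dots,\phi_k$ — so that estimating the marginals to accuracy $\gamma^{\poly(k)}\theta$ and solving recovers $(w_l,\phi_l)$ to within $\theta$. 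Alternatively one can invoke the test-function reformulation of the robust linear-independence theorem for the columns of $A$ here: restricting to $S$ and fixing a common scaling is exactly its hypothesis, and the $\gamma$-separation is what upgrades it to a bound that is quantitatively graceful in $k$.

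\textbf{Step 2 (centers).} With the pairs $(w_l,\phi_l)$ in hand, for each element $a$ I would estimate the distribution $L_a$ of the position of $a$ in a random draw from $M$. Writing $p_l(a)$ for the unknown position of $a$ in the center $\pi_l$, we have $L_a=\sum_l w_l\,\mathcal{L}(\phi_l,p_l(a))$, where $\mathcal{L}(\phi,p)$ is the explicitly computable law, under a single Mallows$(\phi)$ model on $[n]$, of the position of the item sitting in position $p$ of the center. The first $2k$ moments of $L_a$ then give $2k$ equations whose left sides we estimate and whose right sides are $\sum_l w_l\,q_j(\phi_l,p_l(a))$ with $q_j$ an explicit function of $(\phi,p,n)$; because the $\phi_l$ are $\gamma$-separated this system is again well conditioned, so it can be solved for $p_1(a),\dots,p_k(a)$ and hence every center $\pi_l$ assembled. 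Finally, since $\phi_l\le 1-\gamma$ the recovered positions are well-separated integers, so once the estimation accuracy drops below $1/2$ — which costs $\poly(n,1/\gamma^{\poly(k)})\log(1/\delta)$ samples with a union bound over the $n$ elements — rounding recovers each $\pi_l$ \emph{exactly}, while $w_l$ and $\phi_l$ were already obtained to within $\theta$ in Step 1.

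\textbf{Main obstacle.} The hard part is robustness rather than bookkeeping: one must prove that the moment / test-function maps in Steps 1 and 2 are genuinely invertible under the stated hypotheses alone — not merely for generic parameters — since components may even share a center, so no clustering is possible and the scaling parameters are the only handle for disentangling them. Pinning down the inverse amounts to bounding the condition number of the $\gamma$-separated Vandermonde/Cauchy systems, giving $\poly(1/\gamma^{k})$ per step; propagating this through the at-most-$k$ de-mixing subproblems, together with the $1/w_{\min}$ loss incurred by the smallest component and the target accuracy $\theta$, is precisely what yields $\poly(1/\gamma^{k^2},1/\theta^{k^2},1/w_{\min}^{k^2})$ for $f$ in Theorem~\ref{mainsmooth}. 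A secondary point is to verify that $O(1)$-arity marginals already carry enough information to determine the full centers — via the global consistency of the recovered orientations/positions across overlapping subsets — which is what keeps the dependence on $n$ a fixed polynomial, in contrast to the $n^{\Theta(k)}$-type dependence of Theorem~\ref{generalalgorithm}.
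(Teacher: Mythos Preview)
Your plan has a genuine gap, and it sits exactly where you flag it as the ``main obstacle'': you never give a mechanism in Step~1 that actually learns $(w_l,\phi_l)$ from $O(1)$-arity marginals without already knowing something about the centers. The marginal of $M$ on a subset $S$ is $\sum_l w_l\,M(\phi_l,\pi_l|_S)$, so every statistic you propose depends not only on the $2k$ scalar parameters but also on the $k$ unknown restricted centers $\pi_l|_S$. ``Symmetrizing over the choice of $S$'' does not remove this dependence (averaging over all pairs gives $1/2$; averaging over all $c$-sets kills exactly the signal you need), and ``auxiliary alignment quantities such as $d_{KT}(\pi_i|_S,\pi_j|_S)$'' are themselves unknowns that vary with $S$. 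So the system you want to invert is \emph{not} a Vandermonde or Cauchy matrix in the $\gamma$-separated nodes $\phi_1,\dots,\phi_k$; the $\phi_l$ never appear as clean evaluation points because each equation is contaminated by the unknown combinatorial choice $\pi_l|_S$. Your alternative invocation of the robust Kruskal-rank lemma is also misapplied: that lemma concerns columns of $A_n(\phi)$ for a \emph{single} fixed $\phi$ and distinct base permutations, which is the opposite of the regime here (distinct $\phi_l$, possibly identical centers).

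The paper resolves this by inverting the order of your two steps: it learns enough of the centers \emph{first}, without knowing the $\phi_l$. Concretely, the prefix of length $10k$ of each $\pi_i$ appears as the first $10k$ elements of a sample with probability at least $\alpha\gamma^{10k}$, so frequent prefixes give a short candidate list (Claim~\ref{prefixsep}); grid search supplies candidate $\phi_l$'s. With a prefix of $\pi_k$ in hand, one builds a $2\times\cdots\times 2$ tensor on its first $2k-2$ elements and a test vector $Z=v_1\otimes\cdots\otimes v_{k-1}$ with each $v_a$ orthogonal to $v(M(\phi_a,\cdot))$; the $\gamma$-separation of the $\phi_l$ is what makes $|\langle Z,T_k\rangle|$ large (Lemma~\ref{prefixtofull}). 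This isolates component $k$ and recovers the ordering of every remaining pair $(x,y)$ in $\pi_k$. Weights then follow deterministically from the same tensors (Claim~\ref{nobrutemix}), and a closeness test (Lemma~\ref{testwithsep}) picks the correct guess from the list. The whole pipeline uses $O(k)$-arity statistics, so the $n$-dependence is a fixed polynomial while the conditioning losses aggregate to $\poly(1/\gamma^{k^2},1/\theta^{k^2},1/w_{\min}^{k^2})$. Your Step~2 is closer in spirit to Lemma~\ref{prefixtofull}, but note that the position law $\mathcal{L}(\phi,p)$ is not a polynomial in $\phi$ with simple $p$-dependence, so the moment map you describe is nonlinear in the unknowns $p_l(a)$ and its conditioning is not the separated-node Vandermonde bound you cite; the paper sidesteps this entirely by reducing to pairwise comparisons once an anchoring prefix is available.
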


Our algorithm leverages many of the lower bounds on the total variation distance between mixtures of Mallows models and test functions for separating one component from the others that we have established along the way. 

\subsection*{Further Related Work}

There are other natural models for distributions on permutations such as the Bradley-Terry model \cite{Bradley} and the Plackett-Luce model \cite{PlackettO, Luce}. Zhao et al. \cite{Plackett} showed that a mixture of $k$ Plackett-Luce models is generically identifiable provided that $k \leq \lfloor \frac{n-2}{2} \rfloor!$ and gave a generalized method of moments algorithm that they proved is {\em consistent} \---- meaning that as the number of samples goes to infinity, the algorithm recovers the true parameters. More generally, Teicher \cite{Tei61, Tei63} obtained sufficient conditions for the identifiability of finite mixtures but these conditions do not apply in our setting. 

\section{Preliminaries}
\subsection{Basic Notation}\label{sec:notation}
Let $[n] = \{1,2, \cdots, n \}$.  Given two permutations $\pi$  and $\pi'$ on $[n]$, let $d_{KT}(\pi, \pi')$ denote the Kendall-Tau distance, which counts the number of pairs $(i,j)$ for which the two rankings disagree. 

\begin{definition}
A Mallows model $M(\phi , \pi^*, n)$ defines a distribution on permutations of the set $[n]$ where the probability of generating a permutation $\pi$  is equal to $$\frac{\phi^{d_{KT}(\pi^*, \pi)}}{Z_{n}(\phi)}$$
and $Z_n(\phi) = \sum_{\pi} \phi^{d_{KT}(\pi^*,\pi)}$ be the normalizing constant, which is easy to see is independent of $\pi^*$. When the number of elements is clear from context, we will omit $n$ and write $M(\phi, \pi^*)$. 
\end{definition}

The following is a well-known (see e.g. \cite{Insertion}) iterative process for generating a ranking from $M(\phi, \pi^*)$: Consider the elements in rank decreasing order, according to $\pi$.  When we reach the $(i+1)^{st}$ ranked element, it is inserted into each of the $i+1$ possible positions with probabilities $$\frac{\phi^i}{1 + \phi + \cdots + \phi^i}, \cdots ,\frac{\phi}{1 + \phi + \cdots + \phi^i}, \frac{1}{1 + \phi + \cdots + \phi^i}$$ respectively, where the order of the probabilities go from the highest rank position it could be inserted to the lowest. When the last element is inserted, the result is a random permutation drawn from $M(\phi, \pi^*)$.

A mixture of $k$ Mallows models is defined in the usual way: We write $$M = w_1M(\phi_1, \pi_1^*) + \cdots + w_kM(\phi_k, \pi_k^*)$$ where the mixing weights $w_1, w_2, \cdots, w_k$ are nonnegative and sum to one. A permutation is generated by first choosing an index (each $i$ is chosen with probability $w_i$) and then drawing a sample from the corresponding Mallows model $M(\phi_i, \pi_i^*)$.

We will often work with the natural vectorizations of probability distributions: 

\begin{definition}
If $P$ is a distribution over permutations on $[n]$ we let $v(P)$ denote the length $n!$ vector whose entries are the probabilities of generating each possible permutation. We will abuse notation and write $v(M)$ for the vectorization of a Mallows model $M$. 
\end{definition}

Our algorithms and their analyses will frequently make use of the notion of restricting a permutation to a set of elements:
\begin{definition}
Given a permutation $\pi$ on $[n]$ and a subset $S \subseteq [n]$, let $\pi\vert_S$ be the permutation on the elements of $S$ induced by $\pi$.
\end{definition}

\subsection{Block and Orders}
Our algorithms will be built on various structures we impose on permutations. The way to think about these structures is that each one gives us a statistic that we can measure: What is the probability that a permutation sampled from an unknown Mallows model has the desired structure? These act like natural moments of the distribution, that we will manipulate and use in conjunction with tensor methods to design our algorithms.

\begin{definition}
A block structure $\mathcal{B} = S_1, S_2, \cdots , S_j$ is an ordered collection of disjoint subsets of $[n]$. We say that a permutation $\pi$ satisfies $\mathcal{B}$ as a block structure if for each $i$, the elements of $S_i$ occur consecutively (i.e. in positions $a_i,a_i+1, \dots, a_i + |S_i|-1$ for some $a_i$) in $\pi$ and moreover the blocks occur in the order $S_1,S_2, \cdots, S_j$. Finally we let $\mathcal{S}_\mathcal{B}$ denote the set of permutations satisfying $\mathcal{B}$ as a block structure. 
\end{definition}

\begin{definition}
An order structure $\mathcal{O} = S_1, S_2, \cdots, S_j$ is a collection of ordered subsets of $[n]$.  We say a permutation $\pi$ satisfies $\mathcal{O}$ as an order structure if for each $i$, the elements of $S_i$ occur in $\pi$ in the same relative order as they do in $S_i$. 
\end{definition}

\begin{definition}
An ordered block structure $\mathcal{A} = S_1, S_2, \cdots, S_j$ is an ordered collection of ordered disjoint subsets of $[n]$.  We say a permutation $\pi$ satisfies $\mathcal{A}$ as an ordered block structure if it satisfies $ S_1, S_2, \cdots, S_j $ both as a block structure and as an order structure \---- i.e. we forget the order within each $S_i$ when we treat it as a block structure and we forget the order among the $S_i$'s when we treat it as an order structure. 
\end{definition}

To help parse these definitions, we include the following example:

\begin{example}
Let $n=7$ and consider $\mathcal{A} = (1,2), (4,5,6) $.  The permutation $(1,2,3,7,6,5,4)$ satisfies $\mathcal{A}$ as a block structure.  The permutation $(1,3,4,2,5,6,7)$ satisfies $\mathcal{A}$ as an order structure and the permutation $(1,2,3,4,5,6,7)$ satisfies $\mathcal{A}$ as an ordered block structure.  
\end{example}

\section{Basic Facts}

Here we collect some basic facts about Mallows models, in particular a lower bound on the probability that they satisfy a given block structure if their base permutation does, relationships between the total variation distance and parameter distance, and determinantal identities for special matrices. 

\subsection{What Block Structures are Likely to be Satisfied?}

In this subsection, our main result is a lower bound on the probability that a permutation drawn from a Mallows model satisfies a block structure that the underlying base permutation does. Along the way, we will also establish some useful ways to think about conditioning and projecting Mallows models in terms of tensors. 

\begin{fact}\label{block}
The conditional distribution of a Mallows model $M(\phi,\pi^*)$ when restricted to rankings where the elements in the set $S$ (of size $j$) are ranked in positions $a,a+1, \cdots, a+j-1$ and the ranking of elements in $[n] \setminus S$ is fixed is precisely $M(\phi, \pi\vert_S^*)$.
\end{fact}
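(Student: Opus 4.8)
The plan is to compute the conditional probability directly from the definition of the Mallows model, using the additivity of Kendall--Tau distance across pair types. Write $E$ for the conditioning event: the elements of $S$ occupy positions $a, a+1, \dots, a+j-1$, and the restriction of the ranking to $[n] \setminus S$ equals some fixed ordering $\rho$. First I would observe that $E$ consists of exactly $j!$ permutations, and that $\pi \mapsto \pi\vert_S$ is a bijection from $E$ onto the set of all permutations of $S$; so it suffices to show that $\Pr_{M(\phi, \pi^*)}[\pi \mid E]$ is proportional to $\phi^{d_{KT}(\pi^*\vert_S,\, \pi\vert_S)}$, since normalizing the latter over permutations of $S$ is by definition $M(\phi, \pi^*\vert_S)$ (with normalizing constant $Z_j(\phi)$).

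Next I would split $d_{KT}(\pi^*, \pi)$ according to the three kinds of unordered pairs $\{x,y\} \subseteq [n]$: both in $S$, both in $[n] \setminus S$, and one in each. The both-in-$S$ contribution is exactly $d_{KT}(\pi^*\vert_S, \pi\vert_S)$, which is the only term that varies as $\pi$ ranges over $E$. The both-in-$([n]\setminus S)$ contribution equals $d_{KT}(\pi^*\vert_{[n]\setminus S}, \rho)$, a constant on $E$ because $\rho$ is fixed.

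The key step, and the one I expect to need the most care, is showing the cross-pair contribution is also constant on $E$. Here I would use that, given $E$, the absolute position of each $y \in [n]\setminus S$ in $\pi$ is determined: the first $a-1$ elements of $\rho$ land in positions $1, \dots, a-1$ and the remaining ones in positions $a+j, \dots, n$, since $S$ is forced into the contiguous block $[a, a+j-1]$. Consequently, for $x \in S$ and $y \notin S$, whether $y$ precedes or follows $x$ in $\pi$ depends only on which side of the block $y$ falls on — not on $x$, and not on the internal ordering of $S$ — while the order of $x$ and $y$ in $\pi^*$ does not depend on $\pi$ at all. Hence the number of cross-pair disagreements is the same for every $\pi \in E$.

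Combining these, $d_{KT}(\pi^*, \pi) = C + d_{KT}(\pi^*\vert_S, \pi\vert_S)$ for a constant $C$ depending only on $E$, so $\Pr_{M(\phi,\pi^*)}[\pi \mid E] = \phi^C \phi^{d_{KT}(\pi^*\vert_S, \pi\vert_S)} / \Pr_{M(\phi,\pi^*)}[E]$ is proportional to $\phi^{d_{KT}(\pi^*\vert_S, \pi\vert_S)}$; since it is supported exactly on the $j!$ permutations of $E$, which correspond bijectively to the permutations of $S$, the conditional distribution is precisely $M(\phi, \pi^*\vert_S)$. The only real subtlety is the bookkeeping of the cross pairs; the rest is routine.
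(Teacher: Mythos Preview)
Your proof is correct and is essentially the same as the paper's: both show that for $\pi$ in the conditioning event, $d_{KT}(\pi^*,\pi)$ differs from $d_{KT}(\pi^*\vert_S,\pi\vert_S)$ by a constant, so the conditional probabilities are proportional to those of $M(\phi,\pi^*\vert_S)$. The paper simply asserts the key identity $d_{KT}(\pi^*,\tau)-d_{KT}(\pi^*,\tau')=d_{KT}(\pi^*\vert_S,\tau\vert_S)-d_{KT}(\pi^*\vert_S,\tau'\vert_S)$ as ``easy to see,'' whereas you spell out the pair-type decomposition and the constancy of the cross-pair contribution that justify it.
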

\begin{proof}
It is easy to see that for any two permutations $\tau$ and $\tau'$ on $[n]$ where the elements of $S$ are ranked in positions $a,a+1, \cdots, a+j-1$ and agree on the rankings of elements in  $[n] \setminus S$ satisfy
$$d_{KT}(\pi^*,\tau) - d_{KT}(\pi^*,\tau') = d_{KT}(\pi\vert_S^*,\tau\vert_S) - d_{KT}(\pi\vert_S^*,\tau'\vert_S)$$
Thus the ratio of their probabilities is the same as the ratio of probabilities of $\tau\vert_S$ and $\tau'\vert_S$ in $M(\phi, \pi\vert_S^*)$, which completes the proof.
\end{proof}

Next we will describe a natural way to think about the conditional distribution on permutations that satisfy a given block structure as a tensor. Recall that the subsets of $[n]$ in a block structure are required to be disjoint. 

\begin{definition}\label{def:tensor}
Given a Mallows model $M(\phi, \pi^*)$ and a block structure $\mathcal{B} = S_1, S_2, \cdots , S_j$, we define a 
$ |S_1|! \times |S_2|! \times \cdots \times |S_j|!$
dimensional tensor $T_{M, \mathcal{B}}$ as follows: Each entry corresponds to orderings $\pi_1, \pi_2, \cdots, \pi_j$ of $S_1, S_2, \cdots, S_j$ respectively and in it, we put the probability that a ranking drawn from $M$ satisfies $\mathcal{B}$ and for each $i$, the elements in $S_i$ occur in the order specified by $\pi_i$. 
\end{definition}

It is easy to see that $T_{M, \mathcal{B}}$ has rank one. Technically this requires the obvious generalization of Fact~\ref{block} where we condition on the elements in each $S_i$ occurring in specified consecutive locations, and then note that these events are all disjoint. 

\begin{corollary}\label{blockStructure}
$T_{M, \mathcal{B}} = \Pr_M[\pi \in \mathcal{S}_\mathcal{B}] \cdot v(M(\phi,\pi\vert_{S_1})) \otimes \dots \otimes v(M(\phi,\pi\vert_{S_j})) $
\end{corollary}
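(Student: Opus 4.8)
The plan is to reduce the statement to a repeated application of Fact~\ref{block}, keeping careful track of the fact that the conditioning events for different blocks, once we specify the actual position intervals, are disjoint. First I would observe that the claimed tensor identity is a statement about a collection of numbers indexed by tuples $(\pi_1,\dots,\pi_j)$, so it suffices to verify the equality entrywise: the $(\pi_1,\dots,\pi_j)$ entry of the right-hand side is $\Pr_M[\pi\in\mathcal{S}_\mathcal{B}]\cdot\prod_{i=1}^j \Pr_{M(\phi,\pi^*|_{S_i})}[\text{ordering }\pi_i\text{ on }S_i]$, and I want to show this equals $\Pr_M[\pi\in\mathcal{S}_\mathcal{B}\text{ and each }S_i\text{ appears in order }\pi_i]$.

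The key step is to stratify the event $\{\pi\in\mathcal{S}_\mathcal{B}\}$ according to the actual position intervals occupied by the blocks and according to the relative order of the elements outside $\bigcup_i S_i$. Concretely, the event that $\pi$ satisfies $\mathcal{B}$ as a block structure determines, once we also fix how many elements fall between/around the blocks, a list of disjoint intervals $a_1,\dots,a_1+|S_1|-1$, then $a_2,\dots,a_2+|S_2|-1$, etc.; summing over all such valid interval placements $\vec a$ and all orderings of $[n]\setminus\bigcup_i S_i$ consistent with $\vec a$ partitions $\mathcal{S}_\mathcal{B}$ into disjoint pieces. On each such piece, the generalization of Fact~\ref{block} alluded to in the text — conditioning simultaneously on the elements of each $S_i$ occupying the prescribed consecutive locations $a_i,\dots,a_i+|S_i|-1$ and on the full ordering of the complement — says the conditional distribution of $(\pi|_{S_1},\dots,\pi|_{S_j})$ is exactly the product distribution $M(\phi,\pi^*|_{S_1})\otimes\cdots\otimes M(\phi,\pi^*|_{S_j})$, because the Kendall–Tau distance decomposes additively: disagreements are either within a single $S_i$, or between two fixed elements of the complement, or between an element of some $S_i$ and a complement element, and in the latter two cases the contribution is the same for all orderings within the blocks once the intervals are fixed. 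Hence on each piece the probability of the refined event (each $S_i$ in order $\pi_i$) factors as (probability of the piece) times $\prod_i \Pr_{M(\phi,\pi^*|_{S_i})}[\pi_i]$, and summing over the disjoint pieces pulls the common product factor out and collapses the remaining sum to $\Pr_M[\pi\in\mathcal{S}_\mathcal{B}]$.

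I expect the main obstacle to be the bookkeeping in the generalized Fact~\ref{block}: verifying cleanly that $d_{KT}(\pi^*,\tau)$, restricted to permutations $\tau$ with the blocks in fixed intervals and the complement in a fixed order, differs from $\sum_i d_{KT}(\pi^*|_{S_i},\tau|_{S_i})$ by a constant independent of the within-block orderings. This is the same additivity argument as in the proof of Fact~\ref{block}, applied one block at a time (each $S_i$ in turn plays the role of ``$S$'' with everything else held fixed), and since the blocks are disjoint the per-block corrections do not interfere — so it is routine, just notationally heavy. Once that additivity is in hand, the disjointness of the events across different interval placements and the rank-one factorization are immediate, and the corollary follows.
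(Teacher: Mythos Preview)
Your proposal is correct and follows essentially the same approach the paper sketches: the paper's entire justification is the sentence preceding the corollary, namely that one uses ``the obvious generalization of Fact~\ref{block} where we condition on the elements in each $S_i$ occurring in specified consecutive locations, and then note that these events are all disjoint.'' Your stratification by position intervals and complement ordering, together with the additive decomposition of $d_{KT}$, is precisely this generalization spelled out in detail.
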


Our next result gives a convenient lower bound for the probability that a sample satisfies a given block structure $B$ provided that the base permutation satisfies $B$:  

\begin{lemma}\label{fix}
For any Mallows model $M(\phi, \pi^*)$ and block structure $\mathcal{B} = S_1, S_2, \cdots , S_j$ where $\pi^*$ satisfies $\mathcal{B}$ and $\ell = |S_1|+ \cdots + |S_j|$, we have $$\Pr_M[\pi \in \mathcal{S}_\mathcal{B}] \geq \frac{1}{n^{2\ell}}$$
\end{lemma}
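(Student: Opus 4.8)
The plan is to analyze the repeated‑insertion description of $M(\phi,\pi^*)$ recalled in the preliminaries, in which the elements of $[n]$ are inserted one at a time in order of decreasing $\pi^*$-rank, say $e_1,\dots,e_n$ with $e_t$ the element of $\pi^*$-rank $t$, and $e_t$ is placed into position $r$ from the top of the current length‑$(t-1)$ arrangement with probability $\phi^{t-r}/(1+\phi+\cdots+\phi^{t-1})$. The structural fact that makes $\mathcal{B}$ easy to follow here is that, since $\pi^*$ satisfies $\mathcal{B}$, each $S_i$ occupies consecutive positions in $\pi^*$, so its elements form an uninterrupted run $e_{a_i},\dots,e_{a_i+|S_i|-1}$ of the insertion order. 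I would define a favorable event $E$: the first element of each $S_i$ is placed at the very bottom of the current arrangement, each later element of $S_i$ is placed immediately adjacent to the contiguous partial copy of $S_i$ built so far, and each free element is placed anywhere except strictly between two elements of an already‑completed block. One checks directly that $E\subseteq\{\pi\in\mathcal{S}_\mathcal{B}\}$: each block ends up an interval; the blocks appear in the order $S_1,\dots,S_j$ because every block begins at the bottom (hence below everything placed so far, and later insertions do not reorder existing elements); and free elements never break an interval.

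It then remains to bound $\Pr_M[\pi\in\mathcal{S}_\mathcal{B}]\ge\Pr[E]=\prod_{t=1}^n\Pr[\text{step }t\text{ favorable}\mid\text{steps }1,\dots,t-1]$ factor by factor. Because $\phi\le1$ we have $1+\phi+\cdots+\phi^{t-1}\le t\le n$, so placing an element at (or immediately next to the bottom of) a length‑$(t-1)$ arrangement has conditional probability at least $1/n$; applying this to the $\ell$ block‑element insertions contributes a factor of at least $n^{-\ell}$. For a free‑element insertion the ``bad'' slots are the at most $\ell$ slots lying strictly inside an already‑completed block, and the lemma follows once one shows that avoiding all bad slots, over all $n-\ell$ free insertions, costs at most a further factor of $n^{-\ell}$, for the claimed total of $n^{-2\ell}$; the exponent $2\ell$ rather than $\ell$ is precisely the room this leaves for a lossy treatment of the free steps.

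The main obstacle is this last estimate. When $\phi$ is bounded away from both $0$ and $1$ and a block currently sits at the bottom of the arrangement, a single free insertion can have probability as large as roughly $1/2$ of landing inside that block, so one cannot simply claim each free step is favorable with probability $1-o(1/\mathrm{poly}(n))$; instead one uses that this only happens for the free elements inserted just after a block is completed and before any free element has slipped below it, and that after a few such insertions the block has been pushed upward and its interior slots become exponentially improbable, so the free insertions tied to any one of the $j\le\ell$ blocks cost only a bounded amount of probability. An equivalent, purely algebraic reformulation is the identity $\Pr_M[\pi\in\mathcal{S}_\mathcal{B}]=\bigl(\prod_i[|S_i|]!\bigr)\bigl(\sum_c\phi^{D_c}\bigr)/[n]!$, obtained by grouping $\mathcal{S}_\mathcal{B}$ by the coarse arrangement $c$ of the $n-\ell$ free elements together with the $j$ blocks treated as atoms, where $[m]!:=\prod_{r=1}^m(1+\phi+\cdots+\phi^{r-1})$ and $D_c$ is the block‑size‑weighted Kendall-Tau distance from $c$ to the coarsening $\pi^{**}$ of $\pi^*$; here the factor $\prod_i[|S_i|]!/[n]!$ is easy to control, and the same difficulty resurfaces as the need to lower bound $\sum_c\phi^{D_c}$, which one does by keeping, for each rearrangement of the free elements, a placement of the block atoms that is as compatible with $\pi^*$ as possible and then estimating the resulting weighted $\phi$-multinomial against its value at $\phi=1$.
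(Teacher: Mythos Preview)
Your setup via the repeated-insertion model is sound, the event $E$ you define is indeed contained in $\mathcal{S}_\mathcal{B}$, and the bound $n^{-\ell}$ for the $\ell$ block-element insertions is correct. The proposal has a genuine gap precisely where you flag it: the free-element analysis. Your heuristic that ``after a few such insertions the block has been pushed upward'' is not what actually happens. A good free insertion that lands \emph{above} a just-completed block leaves the block at the bottom of the arrangement, and conditional on a good insertion this is the typical outcome when $\phi$ is close to $1$; the block is not pushed up at all, so its interior slots remain the hottest ones. What rescues the estimate in that regime is rather that the arrangement itself is growing, so the bad-slot mass $\approx (|S_i|-1)/t$ decays in $t$. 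Turning this into a rigorous bound for general $\phi$ and for several interleaved blocks (where the running bad-slot count can be as large as $\ell-j$) requires a careful product estimate you have not supplied, and your algebraic reformulation ends at the same unresolved lower bound on $\sum_c\phi^{D_c}$.

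The paper avoids this entirely by replacing the sub-event argument with a mapping argument. It defines $\mathcal{T}_\mathcal{B}$ to be the set of permutations in which every block element is inserted at the current bottom, with \emph{no constraint whatsoever} on free elements; then $\Pr[\mathcal{T}_\mathcal{B}]\ge n^{-\ell}$ is immediate. A permutation $\tau\in\mathcal{T}_\mathcal{B}$ need not lie in $\mathcal{S}_\mathcal{B}$, since free elements may have broken a block, but one builds a map $f:\mathcal{T}_\mathcal{B}\to\mathcal{S}_\mathcal{B}$ that, for each block, slides its elements leftward to become consecutive starting at the position of the block's first element, pushing the displaced free elements rightward past the block. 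The definition of $\mathcal{T}_\mathcal{B}$ guarantees this only removes inversions relative to $\pi^*$, so $\Pr[f(\tau)]\ge\Pr[\tau]$; and since $f$ preserves the relative order of all non-block elements, it is at most $n^{\ell}$-to-one. These two facts give $\Pr[\mathcal{S}_\mathcal{B}]\ge n^{-\ell}\Pr[\mathcal{T}_\mathcal{B}]\ge n^{-2\ell}$. Thus the second factor of $n^{-\ell}$ arises from a multiplicity bound on a probability-increasing map, not from controlling the free insertions step by step.
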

\begin{proof}
Without loss of generality let $\pi = (1,2, \dots, n)$ and $S_i = [a_i, a_i+b_i-1]$.  We will first consider the set of permutations $\mathcal{T}_{\mathcal{B}}$ with the following property: For each $1\leq i \leq j$, the elements $a_i,a_i+1, \cdots a_i+b_i-1$ occur in their natural order and each of them occurs after all of the elements $1,2, \cdots a_i-1$.  Using the iterative procedure for sampling from a Mallows model defined in Section~\ref{sec:notation} and the fact that $\frac{1}{1 + \phi + \cdots + \phi^i} \geq \frac{1}{n}$, we have that $$\Pr_M[\pi \in \mathcal{T}_\mathcal{B}] \geq \frac{1}{n^{\ell}}$$
since we just need that when each element in each $S_i$ is inserted, it is inserted in the lowest ranked position available. 

Next we build a correspondence between permutations in $\mathcal{T}_\mathcal{B}$ and permutations in $\mathcal{S}_\mathcal{B}$.  Consider $\tau \in \mathcal{T}_\mathcal{B}$. For each block $a_i,a_i+1, \cdots, a_i+b_i -1$, say that in $\tau$, they are placed in positions $$x_{i1} < x_{i2} < \cdots < x_{ib_i}$$ Note that these are not necessarily consecutive. We will map $\tau$ to a permutation in $S_B$ by making them consecutive while preserving their order by doing the following for each block: Place $a_i,a_i+1, \cdots, a_i+b_i-1$ in positions $x_{i1}, \cdots, x_{i1}+b_i-1$ and all other elements displaced from their position are placed afterwards while preserving their ordering. Crucially this process only reduces the number of inversions in $\tau$ due to the way that $\mathcal{T}_\mathcal{B}$ was defined.  In particular, the permutation $\tau'$ we get from this process has the property that it is at least as likely as $\tau$ to be generated from $M$. Also note the intervals $[x_{11}, x_{1b_1}], \cdots, [x_{j1},x_{jb_j}]$ must be disjoint and occur in that order. Now in this correspondence, the order of all elements outside $S_1 \cup \cdots \cup S_j$ is preserved.  Thus, there are at most $n^\ell$ different permutations in $\mathcal{T}_\mathcal{B}$ that can be mapped to the same element in $\mathcal{S}_\mathcal{B}$.  Putting this all together implies that 
$$\Pr_M[\pi \in \mathcal{S}_\mathcal{B}] \geq \frac{\Pr_M[\pi \in \mathcal{T}_\mathcal{B}]}{n^{\ell}} \geq \frac{1}{n^{2\ell}}$$
which completes the proof.
\end{proof}

\subsection{Total Variation Distance Bounds} \label{TVdistanceBounds}
In this subsection, we give some useful relationships between the total variation distance and the parameter distance between two Mallows models (in special cases) in terms of the distance between their base permutations and scaling parameters. We will defer the proofs to Appendix \ref{appendix}.  First we prove that if two Mallows models have different base permutations and their scaling parameters are bounded away from one, then the distributions that they generate cannot be too close.

\begin{claim} \label{simple}
Consider two Mallows models $M_1 = M(\phi_1, \pi_1)$ and $M_2 = M(\phi_2, \pi_2)$ where $\pi_1 \neq \pi_2$ and $\phi_1, \phi_2 \leq 1-\epsilon$. Then $d_{TV}(M_1, M_2) \geq \frac{\epsilon}{2}$.
\end{claim}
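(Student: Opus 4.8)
The strategy is to find a single pairwise comparison on which the two models demonstrably disagree and use it as a distinguishing event. Since $\pi_1 \ne \pi_2$, there is a pair of elements $a, b$ whose relative order differs between $\pi_1$ and $\pi_2$; say $a$ is ranked before $b$ in $\pi_1$ and $b$ is ranked before $a$ in $\pi_2$. Let $E$ be the event that a sampled permutation places $a$ before $b$. Since total variation distance is the maximum over events of the difference in probabilities, $d_{TV}(M_1,M_2) \ge |\Pr_{M_1}[E] - \Pr_{M_2}[E]|$, so it suffices to prove $\Pr_{M_1}[E] \ge \tfrac{1}{2-\epsilon}$ and $\Pr_{M_2}[E] \le \tfrac{1-\epsilon}{2-\epsilon}$.

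Both of these reduce to a single lemma about one Mallows model: \emph{if $M(\phi,\sigma)$ has center $\sigma$ ranking $x$ before $y$ and $\phi \le 1$, then $\Pr_{M(\phi,\sigma)}[y \text{ before } x] \le \tfrac{\phi}{1+\phi}$}. I would prove this by an involution. Relabel elements so that $\sigma$ is the identity, so that $d_{KT}(\sigma,\pi)$ equals the number of inversions of $\pi$, and (taking $x < y$ after relabelling) the pair $\{x,y\}$ is inverted in $\pi$ precisely when $y$ comes before $x$. Map each permutation $\pi$ with $y$ before $x$ to the permutation $\pi'$ obtained by exchanging the positions of $x$ and $y$; this is a bijection from $\{\pi : y \text{ before } x\}$ onto $\{\pi : x \text{ before } y\}$. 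A short case analysis on each element $z$ lying strictly between $x$ and $y$ in $\pi$ — according to whether $z < x$, $x < z < y$, or $z > y$ in value — shows that the exchange removes the inversion $\{x,y\}$ and changes the count of all other inversions by $-2m$, where $m \ge 0$ is the number of those middle elements with $x < z < y$; in particular $\mathrm{inv}(\pi') \le \mathrm{inv}(\pi) - 1$, so $\phi^{\mathrm{inv}(\pi)} \le \phi \cdot \phi^{\mathrm{inv}(\pi')}$, i.e. $\Pr[\pi] \le \phi \Pr[\pi']$. Summing over all $\pi$ with $y$ before $x$ and using the bijection gives $\Pr[y \text{ before } x] \le \phi \Pr[x \text{ before } y] = \phi(1 - \Pr[y \text{ before } x])$, which rearranges to the stated bound.

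Finally I would combine the pieces. Applying the lemma to $M_1$, whose center ranks $a$ before $b$, gives $\Pr_{M_1}[b \text{ before } a] \le \tfrac{\phi_1}{1+\phi_1}$, hence $\Pr_{M_1}[E] \ge \tfrac{1}{1+\phi_1} \ge \tfrac{1}{2-\epsilon}$ since $\phi_1 \le 1-\epsilon$; applying it to $M_2$, whose center ranks $b$ before $a$, gives $\Pr_{M_2}[E] \le 1 - \tfrac{1}{1+\phi_2} \le 1 - \tfrac{1}{2-\epsilon} = \tfrac{1-\epsilon}{2-\epsilon}$. Subtracting yields $d_{TV}(M_1,M_2) \ge \tfrac{1}{2-\epsilon} - \tfrac{1-\epsilon}{2-\epsilon} = \tfrac{\epsilon}{2-\epsilon} \ge \tfrac{\epsilon}{2}$. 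The only step requiring care is the bookkeeping in the involution — verifying that swapping $x$ and $y$ can only decrease the number of inversions (and by the right parity) — since one must track how the third elements positioned between $x$ and $y$ interact with the swap; everything else is immediate. A more pedestrian alternative would be to follow the pairwise marginal through the repeated-insertion process of Section~\ref{sec:notation}, but the randomness of $x$'s position at the moment $y$ is inserted makes that route messier.
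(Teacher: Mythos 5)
Your proof is correct and follows essentially the same route as the paper's: both pick a pair $(x,y)$ on whose relative order the two centers disagree and bound the pairwise marginals by $\tfrac{1}{1+\phi_1}$ and $\tfrac{\phi_2}{1+\phi_2}$, giving a gap of at least $\tfrac{\epsilon}{2-\epsilon} \ge \tfrac{\epsilon}{2}$. The only difference is that you supply (via the position-swap involution, whose inversion bookkeeping checks out) a proof of the bound $\Pr[y \text{ before } x] \le \tfrac{\phi}{1+\phi}$ that the paper simply asserts.
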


Second, we give a condition under which we can conclude that two Mallows models are close in total variation distance. An analogous result is proved in \cite{Original} (see Lemma 2.6) except that here we remove the dependence on $\phi_{min}$.

\begin{lemma}\label{TVbound}
Consider two Mallows models $M_1 = M(\phi_1, \pi)$ and $M_2 = M(\phi_2, \pi)$ with the same base permutation on $n \geq 2$ elements.  If $|\phi_1-\phi_2| \leq \frac{\mu^2}{10n^3}$ then $d_{TV}(M_1, M_2) \leq \mu$.
\end{lemma}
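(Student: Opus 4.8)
## Proof Plan for Lemma~\ref{TVbound}

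The plan is to bound $d_{TV}(M_1, M_2) = \frac12 \sum_\pi |\Pr_{M_1}[\pi] - \Pr_{M_2}[\pi]|$ by reducing everything to the single-variable function $\phi \mapsto \phi^{d_{KT}(\pi,\pi^*)}/Z_n(\phi)$ and controlling its derivative. Since both models share the base permutation $\pi$, we may assume $\pi = (1,2,\dots,n)$ and write $d = d_{KT}(\pi^*,\sigma)$ for the sample $\sigma$; the probability of $\sigma$ under $M(\phi,\pi)$ is $f_d(\phi) = \phi^d / Z_n(\phi)$. So $d_{TV}(M_1,M_2) = \frac12 \sum_\sigma |f_{d(\sigma)}(\phi_1) - f_{d(\sigma)}(\phi_2)|$, and by the mean value theorem this is at most $\frac12 |\phi_1 - \phi_2| \sum_\sigma \max_{\phi \in [\phi_1,\phi_2]} |f_{d(\sigma)}'(\phi)|$.

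The key computation is therefore a bound on $\sum_\sigma |f_{d(\sigma)}'(\phi)|$ uniform in $\phi \in [0,1)$. Differentiating, $f_d'(\phi) = d\,\phi^{d-1}/Z_n(\phi) - \phi^d Z_n'(\phi)/Z_n(\phi)^2$. Summing the absolute value over all $\sigma$ (equivalently, summing over $d$ weighted by the number of permutations at Kendall--Tau distance $d$), the first term contributes $\E_{M(\phi,\pi)}[d/\phi] = \E[d]/\phi$ — wait, better to avoid the $1/\phi$ blowup by noting $\phi^{d-1} \le$ something, or by writing $d \phi^{d-1} = \frac{d}{\phi}\phi^d$ and using that $\E_{M}[d]$ is $O(n^2)$ while... the cleaner route: observe that $\sum_\sigma |f_d'(\phi)| \le \sum_\sigma \frac{d \phi^{d-1}}{Z_n} + \sum_\sigma \frac{\phi^d |Z_n'|}{Z_n^2}$. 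The second sum equals $|Z_n'(\phi)|/Z_n(\phi)$; since $Z_n(\phi) = \prod_{i=1}^{n}(1+\phi+\cdots+\phi^{i-1})$, logarithmic differentiation gives $Z_n'/Z_n = \sum_{i=1}^n \frac{1 + 2\phi + \cdots + (i-1)\phi^{i-2}}{1+\phi+\cdots+\phi^{i-1}} \le \sum_{i=1}^n \binom{i}{2} \le \binom{n}{3} \le n^3$ (bounding each ratio crudely by the number of terms in the numerator times the largest coefficient, or more simply by $\binom{i}{2}$ since each ratio is at most $\max$ exponent which is $\le \binom{i}{2}$). The first sum is $\E_{M(\phi,\pi)}[d/\phi]$; here one should instead bound $d\phi^{d-1} \le (d+1)\phi^{d-1}$ and relate $\sum_\sigma (d+1)\phi^{d-1}/Z_n$ to a shifted normalizing constant, or — simplest — just handle $\phi \le 1/2$ and $\phi \ge 1/2$ separately: for $\phi \ge 1/2$ we get $\E[d/\phi] \le 2\E[d] \le 2\binom n2 \le n^2$; for $\phi < 1/2$, $\sum_\sigma d\phi^{d-1}/Z_n \le \sum_{d\ge 0} d\phi^{d-1} \cdot (\max_d \#\{d\text{-dist perms}\})/Z_n$, and since $Z_n \ge 1$ and $\#\{\sigma: d(\sigma)=d\} \le n^{O(1)}\cdot$(number is polynomially bounded only for small $d$)… this needs care. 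The honest bound: $\sum_\sigma d\phi^{d-1}/Z_n(\phi) = \frac{1}{\phi}\E_M[d] \le \frac{1}{\phi}\binom n2$, which diverges as $\phi \to 0$, so I must use $d\phi^{d-1}$ directly: $\sum_\sigma d\,\phi^{d-1}/Z_n(\phi)$. Write $Z_n(\phi) = \sum_\sigma \phi^{d(\sigma)} \ge 1$ (the identity term, $d=0$), so this sum is $\le \sum_\sigma d\,\phi^{d-1} \le \left.\frac{d}{d\phi}\right.\sum_\sigma \phi^{d(\sigma)} \cdot$ hmm that's $Z_n'(\phi) \le n^3 Z_n(\phi)$, giving $\le n^3 Z_n(\phi) \le$ not bounded either. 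OK: $\sum_\sigma d\phi^{d-1} = Z_n'(\phi)$ exactly, and $Z_n'(\phi)/Z_n(\phi) \le n^3$, but we're dividing by $Z_n(\phi)$ once, getting $Z_n'(\phi)/Z_n(\phi) \le n^3$. So the first sum is exactly $Z_n'(\phi)/Z_n(\phi) \le n^3$ as well. Thus $\sum_\sigma |f_d'(\phi)| \le 2 Z_n'(\phi)/Z_n(\phi) \le 2n^3$, uniformly.

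Combining, $d_{TV}(M_1,M_2) \le \frac12 |\phi_1-\phi_2|\cdot 2n^3 = n^3 |\phi_1 - \phi_2| \le n^3 \cdot \frac{\mu^2}{10n^3} = \frac{\mu^2}{10} \le \mu$ (for $\mu \le 10$, certainly true since $\mu$ is a TV-distance target, $\mu \le 1$). The main obstacle is getting the clean uniform bound on $\sum_\sigma |f_d'(\phi)|$ near $\phi = 0$; the resolution is the observation that $\sum_\sigma d\,\phi^{d-1} = Z_n'(\phi)$ exactly, so both terms of $|f_d'|$ sum to the same quantity $Z_n'(\phi)/Z_n(\phi)$, which is controlled by logarithmic differentiation of the product formula $Z_n(\phi) = \prod_{i=1}^n \frac{1-\phi^i}{1-\phi}$ and is bounded by $\binom n3 \le n^3$ for all $\phi \in [0,1)$. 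I would double-check the constant $\binom{n}{3}$ versus a possibly sharper $O(n^2)$ bound to make sure the stated $\mu^2/(10n^3)$ threshold is honestly sufficient, and verify the MVT step is legitimate (the functions $f_d$ are smooth on $[0,1)$ and $\phi_1,\phi_2 < 1$, so there is no boundary issue).
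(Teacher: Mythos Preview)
Your approach is correct and genuinely different from the paper's. The paper proceeds by a case split: if both $\phi_i < \frac{\mu}{2(n-1)}$ then each model places mass at least $1-\mu/2$ on the base permutation itself, so $d_{TV}\le\mu$; otherwise both $\phi_i\ge\frac{\mu}{2n}$, and the paper bounds the \emph{ratio} $\Pr_{M_1}[\pi']/\Pr_{M_2}[\pi']$ above by $(\phi_1/\phi_2)^d\le(1+\mu/(5n^2))^{n^2}\le 1+\mu/2$ and below by $\prod_i\frac{1+\cdots+\phi_2^{i-1}}{1+\cdots+\phi_1^{i-1}}\ge 1-\mu/2$. Your derivative argument is cleaner: it avoids the case split entirely, and the key identity $\sum_\sigma d(\sigma)\phi^{d(\sigma)-1}=Z_n'(\phi)$ makes both halves of $|f_d'|$ collapse to $Z_n'/Z_n$, which is uniformly controlled by logarithmic differentiation of the product formula. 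The paper's ratio method is more elementary (no calculus) but needs the small-$\phi$ case precisely because $(\phi_1/\phi_2)^d$ blows up near zero; your approach sidesteps this.

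One small point to tighten: as written, the MVT gives $\sum_\sigma \max_{\phi}|f_{d(\sigma)}'(\phi)|$, with the maximizing $\phi$ potentially different for each $\sigma$, whereas you bound $\max_\phi \sum_\sigma |f_{d(\sigma)}'(\phi)|$. These are not the same quantity. The fix is immediate: replace MVT by the integral form $|f_d(\phi_1)-f_d(\phi_2)|\le\int_{\phi_2}^{\phi_1}|f_d'(\phi)|\,d\phi$ and swap sum and integral (Tonelli, everything nonnegative), obtaining $\sum_\sigma|f_d(\phi_1)-f_d(\phi_2)|\le\int_{\phi_2}^{\phi_1}\sum_\sigma|f_d'(\phi)|\,d\phi\le|\phi_1-\phi_2|\cdot 2\binom{n+1}{3}$. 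The rest goes through as you wrote; indeed $\binom{n+1}{3}=\frac{(n+1)n(n-1)}{6}<\frac{n^3}{6}$ so the final bound is $d_{TV}\le \frac{n^3}{6}|\phi_1-\phi_2|\le\frac{\mu^2}{60}\le\mu$, with room to spare.
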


\subsection{Special Matrix Results} \label{matrix}
%Here we present a result from mathematical physics (see \cite{RepTheory}) involving the determinant of a highly symmetric matrix.  This identity proves crucial to understanding the robust (in a polynomial sense) identifiability of Mallows mixture models. At first glance, it is quite surprising that an algebraic identity involving a determinant gives reasonable robustness guarantees.  However, it turns out that we can exploit the symmetries inherent in the matrix to convert a statement about the determinant into a statement about robust Kruskal rank (see \cite{Tensor} for a similar notion).  This conclusion may be of independent interest and it is possible that similar techniques could give analogous results in other settings.  The exact methods will be shown in Section $4$.       
Here we present a determinantal identity from mathematical physics that will play a central role in our learning algorithms. 
Note $d_{KT}(\pi,\sigma) = I(\pi\sigma^{-1})$ where $I$ counts the number of inversions in a permutation. We make the following definition.
\begin{definition}
Let $A_n(\phi)$ be the $n! \times n!$ matrix whose rows and columns are indexed by permutations $\pi, \sigma$ on $[n]$ and whose entries $A_{\pi\sigma}$ are $\phi^{I(\pi\sigma^{-1})}$.
\end{definition}
 
\noindent Zagier \cite{RepTheory} gives us an explicit form for the determinant of this matrix, which we quote here:
\begin{theorem}\label{thm:Zagier}\cite{RepTheory}
$\det(A_n(\phi)) = \prod_{i=1}^{n-1}(1-\phi^{i^2+i})^{\frac{n!(n-i)}{i^2+i}}$
\end{theorem}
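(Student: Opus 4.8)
Since $d_{KT}(\pi,\sigma)=I(\pi\sigma^{-1})$ equals the Coxeter length $\ell(\pi\sigma^{-1})$ of $\pi\sigma^{-1}$ in $S_n$ (with generators the adjacent transpositions $s_1,\dots,s_{n-1}$), the matrix $A_n(\phi)$ is precisely the matrix, in the standard basis $\{\delta_w : w\in S_n\}$ of the group algebra $\C[S_n]$, of the operator $L_{T_n}$ of left multiplication by $T_n \defeq \sum_{w\in S_n}\phi^{\ell(w)}\,w$. So the theorem is equivalent to $\det L_{T_n} = \prod_{i=1}^{n-1}(1-\phi^{i^2+i})^{n!(n-i)/(i^2+i)}$, and the plan is to prove this by induction on $n$ via a coset factorization of $T_n$ along the chain $S_{n-1}<S_n$.

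First I would factor $T_n = T_{n-1}\cdot \Delta_n$ in $\C[S_n]$, where $T_{n-1}=\sum_{h\in S_{n-1}}\phi^{\ell(h)}h$ and $\Delta_n=\sum_{m=0}^{n-1}\phi^{m}\,s_{n-1}s_{n-2}\cdots s_{n-m}$ (the $m=0$ term being the identity): the elements $s_{n-1}s_{n-2}\cdots s_{n-m}$ are the minimal-length representatives of the right cosets $S_{n-1}\backslash S_n$, of lengths $0,1,\dots,n-1$, and every $w\in S_n$ is uniquely $w=hd$ with $h\in S_{n-1}$, $d$ a coset rep, and $\ell(w)=\ell(h)+\ell(d)$. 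Hence $L_{T_n}=L_{T_{n-1}}\circ L_{\Delta_n}$, so $\det L_{T_n}=\det L_{T_{n-1}}\cdot \det L_{\Delta_n}$. Because $T_{n-1}\in\C[S_{n-1}]$ and $\C[S_n]\cong\C[S_{n-1}]^{\oplus n}$ as a left $\C[S_{n-1}]$-module (one summand $\C[S_{n-1}]d$ per coset), $L_{T_{n-1}}$ is block-diagonal with $n$ equal blocks, giving $\det L_{T_{n-1}}=(\det A_{n-1})^{n}$. Thus $\det A_n=(\det A_{n-1})^{n}\cdot\det L_{\Delta_n}$; unrolling this recursion one checks that the claimed closed form is \emph{equivalent} to the single identity $\det L_{\Delta_n}=\prod_{i=1}^{n-1}(1-\phi^{i^2+i})^{n!/(i^2+i)}$ for all $n$ (the case $n=2$ being immediate, as $L_{\Delta_2}$ is the $2\times2$ matrix with $1$'s on the diagonal and $\phi$'s off it, of determinant $1-\phi^2$). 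So the problem reduces to evaluating the determinant of the ``coset-sum'' operator $L_{\Delta_n}$.

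To compute $\det L_{\Delta_n}$, I would pass to isotypic components: $\C[S_n]=\bigoplus_{\lambda\vdash n}V_\lambda^{\oplus f^\lambda}$ with $f^\lambda=\dim V_\lambda$, so $\det L_{\Delta_n}=\prod_{\lambda\vdash n}\det(\rho_\lambda(\Delta_n))^{f^\lambda}$, and then compute $\det\rho_\lambda(\Delta_n)$ for each irreducible $\rho_\lambda$. The tool is Young's seminormal/orthogonal form with the Gelfand--Tsetlin basis adapted to $S_1<S_2<\cdots<S_n$: each $s_i$ acts by an explicit sparse matrix, and since $\Delta_n$ is built from the descending staircase $s_{n-1}$, $s_{n-1}s_{n-2}$, $\dots$, the matrix $\rho_\lambda(\Delta_n)$ becomes triangular (or block-triangular) in the chain-adapted basis for a suitable ordering of the vectors; its diagonal entries are $\phi$-geometric sums $\tfrac{1-\phi^{a}}{1-\phi}$ whose exponents $a$ are controlled by the content/arm--leg data of the relevant cells of $\lambda$. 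Multiplying these over a basis of $V_\lambda$, raising to the power $f^\lambda$, and taking the product over all $\lambda\vdash n$, the expression collapses — using the hook length formula and the RSK enumeration of standard Young tableaux — to $\prod_{i=1}^{n-1}(1-\phi^{i^2+i})^{n!/(i^2+i)}$, closing the induction. (Zagier's original route instead reads $A_n(\phi)$ as the Gram matrix of $n$-particle states for $\phi$-deformed oscillators on Fock space; the coset recursion above is then the ``add one particle'' step, and the extra factor is evaluated inside that formalism, with positivity of $A_n(\phi)$ for $0<\phi<1$ as a bonus.)

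The main obstacle is exactly this evaluation of $\det\rho_\lambda(\Delta_n)$ and the ensuing combinatorial simplification. Since $\Delta_n$ (and $T_n$) is \emph{not} central in $\C[S_n]$, its determinant on $V_\lambda$ cannot be read off from a character value — one genuinely needs the fine structure of Young's representation and a careful choice of ordering to triangularize $\rho_\lambda(\Delta_n)$ and to identify the diagonal entries. Equally delicate is showing that $\prod_{\lambda\vdash n}\det(\rho_\lambda(\Delta_n))^{f^\lambda}$ telescopes into the stated form with exponents precisely $n!(n-i)/(i^2+i)$; this is a nontrivial symmetric-function identity and is where Zagier's representation-theoretic insight is essential. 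By contrast, the coset factorization, the block-diagonal reduction, and the induction are routine once this core computation is in hand.
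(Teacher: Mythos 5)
First, a point of reference: the paper does not prove this statement at all. Theorem \ref{thm:Zagier} is quoted verbatim from Zagier \cite{RepTheory} ("which we quote here"), so there is no in-paper argument to compare against; your proposal can only be judged on its own terms and against Zagier's original proof. The reductions you set up are correct: $A_n(\phi)$ is the matrix of left multiplication by $T_n=\sum_{w}\phi^{\ell(w)}w$ on $\C[S_n]$; the parabolic factorization $T_n=T_{n-1}\Delta_n$ with $\ell(hd)=\ell(h)+\ell(d)$ holds; $\det L_{T_{n-1}}=(\det A_{n-1})^{n}$ by freeness of $\C[S_n]$ over $\C[S_{n-1}]$; and the resulting recursion does show the theorem is equivalent to $\det L_{\Delta_n}=\prod_{i=1}^{n-1}(1-\phi^{i^2+i})^{n!/(i^2+i)}$, with the $n=2$ base case checking out. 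This skeleton does in fact match the opening of Zagier's argument.

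The genuine gap is that the evaluation of $\det L_{\Delta_n}$ --- which is where the entire content of the theorem resides --- is never carried out. Your mechanism for it consists of unverified assertions: that $\rho_\lambda(\Delta_n)$ "becomes triangular (or block-triangular)" in the Gelfand--Tsetlin basis "for a suitable ordering," that its diagonal entries are $\phi$-geometric sums with exponents governed by content or arm--leg data, and that the product over all $\lambda\vdash n$ "collapses" via the hook length formula to the stated answer. No such ordering is exhibited, the diagonal entries are never identified, and it is not at all clear that the triangularization exists as described: $\Delta_n$ is not a Jucys--Murphy element, and nothing standard makes it triangular in the chain-adapted basis. You effectively concede the point by writing that this step "is where Zagier's representation-theoretic insight is essential" --- a proof that defers its central computation to the insight of the cited author is not a proof. (Zagier himself finishes not by a representation-by-representation triangularization but by a further explicit factorization of $\Delta_n$ inside the group algebra into invertible factors whose determinants can be computed directly, which is also how he obtains his formula for $A_n(\phi)^{-1}$.) As it stands, your argument establishes only the routine recursion $\det A_n=(\det A_{n-1})^{n}\det L_{\Delta_n}$, not the closed form.
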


This expression for the determinant gives us weak lower bounds on the total variation distance between mixtures of Mallows models where all the scaling parameters are the same. We will bootstrap this identity to prove a stronger result about how far a column of $A$ is from the span of any set of $k-1$ other columns. 

%To give more insight into the importance of the matrix $A_n(\phi)$, we note that when attempting to learn the parameters of a Mallows mixture model, the ``hard" part is learning the components that have scaling parameters very close to each other.  In particular, in Section $4$ we present an algorithm for the case when the scaling parameters are separated that has runtime bounded by a fixed polynomial in $n$ independent of $k$, a significant improvement compared to algorithm for the fully general case.  The matrix $A_n(\phi)$ naturally arises when considering Mallows mixture models with all components having the same scaling parameter $\phi$  since each column of $A_n$ represents the probability vector of one of the $n!$ possible such Mallows models.  The above identity implies that $A_n$ has full rank and thus such mixtures are identifiable in an exact sense.  However, as mentioned before, it is especially important that we are able to extract bounds for (polynomial) robustness from this determinant identity.

\section{Identifiability}
%Now we will present our results for learning the parameters of Mallows mixture models under information-theoretically minimal assumptions.  To simplify the presentation, we will not optimize exponents or constants in our results.  Our results are theoretical in nature and the proofs presented here are show the key insights in our robust, polynomial-time algorithm.  

%In this section we present our main algorithm that learns a mixture of Mallows models (under information-theoretically minimal conditions) in polynomial time for any constant number of components.  Before presenting the main theorem, we will first need to build some supporting results. 

In this section we show that any two mixtures of $k$ Mallows models whose components are far from each other (and the uniform distribution) in total variation distance are far from each other as mixtures too, provided that $n > 10k^2$.

\subsection{Robust Kruskal Rank}
%In this section, we use properties of the matrix defined in Section~\ref{matrix} to give a bound on the robust Kruskal rank of small subsets of its columns.  This then settles the question of identifiability when the scaling parameters of the components in the mixture are all guaranteed to be close together.

Our first step is to show that any $k$ columns of $A_n(\phi)$ are not too close to being linearly dependent \---- i.e. the projection of any column onto the orthogonal complement of the span of any $k-1$ other columns cannot be too small. The {\em Kruskal rank} of a collection of vectors is the largest $\ell$ so that every $\ell$ vectors are linearly independent. The property we establish here is sometimes called a {\em robust Kruskal rank} \cite{Allman}.

\begin{lemma}\label{preK}
Suppose $\phi< 1-\epsilon$ and consider $k$ columns of $A_n(\phi)$.  The projection of one column onto the orthogonal complement of the other $k-1$ has euclidean length at least $(\frac{\epsilon^{n}}{\sqrt{n!}})^k$.
\end{lemma}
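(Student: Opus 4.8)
The plan is to leverage Zagier's determinantal formula (Theorem~\ref{thm:Zagier}) together with a clean inequality relating the minimum over subsets of the "projected column length" to the ratio of determinants of principal submatrices of $A_n(\phi)$ times itself, i.e.\ of the Gram matrix $G = A_n(\phi)^\top A_n(\phi)$. Concretely, if $c_{i_1}, \dots, c_{i_k}$ are $k$ columns of $A_n(\phi)$ and $v$ denotes the component of $c_{i_1}$ orthogonal to $\spn\{c_{i_2}, \dots, c_{i_k}\}$, then $\|v\|^2 = \det(G_{I}) / \det(G_{I \setminus \{i_1\}})$ where $G_I$ is the principal $k \times k$ submatrix of $G$ on rows/columns $I = \{i_1, \dots, i_k\}$ and $G_{I \setminus \{i_1\}}$ is the $(k-1)\times(k-1)$ submatrix. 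So it suffices to lower bound $\det(G_I)$ and upper bound $\det(G_{I \setminus \{i_1\}})$ uniformly over all such index sets.

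For the upper bound, $G_{I \setminus \{i_1\}}$ is a Gram matrix of $k-1 \le n$ columns of $A_n(\phi)$; each such column has squared euclidean norm $\sum_{\sigma} \phi^{2 I(\pi \sigma^{-1})} = Z_n(\phi^2) \le n!$ (since $\phi < 1$), so by Hadamard's inequality $\det(G_{I \setminus \{i_1\}}) \le (n!)^{k-1}$. The real work is the lower bound on $\det(G_I)$. Here I would first use the symmetry of $A_n(\phi)$ under left/right multiplication by permutations: the group $S_n \times S_n$ acts on pairs of indices, and for any set of $k$ columns there is a group element carrying it to a set containing the identity column, without changing the relevant Gram determinant (since $A_n(\phi)$ is a symmetric function of $\pi\sigma^{-1}$, conjugating/translating columns is an orthogonal change that preserves all principal minors of the appropriate submatrix — one has to be slightly careful, but the point is that WLOG we may assume one of the $k$ chosen columns is $c_e$). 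Then I would relate $\det(G_I)$ for the worst-case $k$-subset to a quantity extracted from the global determinant: since $\det(A_n(\phi))^2 = \det(G)$ is a product over \emph{all} $\binom{n!}{k}$-type contributions, and every principal $k \times k$ minor of a positive definite matrix is at least (full determinant)$/$(product of the remaining diagonal-dominated factors), one gets $\det(G_I) \ge \det(G) / (n!)^{n! - k}$ by a Fischer-type / interlacing argument (each "peeled off" coordinate contributes at most its diagonal entry $\le n!$ to the ratio of successive nested principal minors). Plugging in Zagier's formula, $\det(A_n(\phi)) = \prod_{i=1}^{n-1}(1-\phi^{i^2+i})^{n!(n-i)/(i^2+i)}$, and using $\phi < 1-\epsilon$ so that $1 - \phi^{i^2+i} \ge 1-\phi \ge \epsilon$, one obtains $\det(A_n(\phi)) \ge \epsilon^{E}$ for an explicit exponent $E = \sum_i n!(n-i)/(i^2+i) \le n! \cdot n$, hence $\det(G) = \det(A_n(\phi))^2 \ge \epsilon^{2n \cdot n!}$. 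Combining: $\|v\|^2 = \det(G_I)/\det(G_{I\setminus\{i_1\}}) \ge \epsilon^{2n\cdot n!} / \big((n!)^{n!-k} \cdot (n!)^{k-1}\big) = \epsilon^{2n \cdot n!}/(n!)^{n!-1}$, and after taking square roots and bookkeeping this should give something of the claimed shape $(\epsilon^n/\sqrt{n!})^k$ — though as stated the bound in the lemma is weaker/cleaner than what this crude argument yields, so I would actually aim to prove the stronger intermediate bound and then simply weaken it, or re-examine whether a tighter subset-selection argument (choosing the $k$ "heaviest" columns via a greedy/volume argument rather than worst case) is needed to hit exactly $(\epsilon^n/\sqrt{n!})^k$.

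The main obstacle I anticipate is the lower bound on the worst-case $k \times k$ principal minor $\det(G_I)$: the naive Fischer-inequality route loses a factor $(n!)^{n!-k}$ which is astronomically larger than the $(n!)^{k/2}$ in the target, so a purely "global determinant divided by diagonal" argument is far too lossy to reach $(\epsilon^n/\sqrt{n!})^k$. To get the stated bound one almost certainly must exploit the representation-theoretic structure behind Zagier's identity more carefully — e.g.\ that $A_n(\phi)$ block-diagonalizes over the irreducible representations of $S_n$ with eigenvalues that are explicit products $\prod_i (1-\phi^{i^2+i})^{\pm}$-type quantities, so that its smallest singular value is directly bounded below by $\epsilon^{O(n)}$, and then a generic column is at distance at least $\sigma_{\min}(A_n(\phi))$ from the span of any $n-1$ others. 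If that is the intended route, the proof is short: $\|v\| \ge \sigma_{\min}(A_n(\phi)) \ge$ (smallest eigenvalue factor from Zagier's diagonalization) $\ge (\epsilon^n/\sqrt{n!})$ even for $k = n$, and the $k$-th power in the statement is then a (substantial) over-estimate that the authors keep only because the next lemma, Theorem~2 in the introduction, needs the normalized-$\ell_1$ version and tracks a cruder constant. So the key step to nail down is: \emph{what is the exact lower bound on $\sigma_{\min}(A_n(\phi))$ coming from Zagier's eigenvalue decomposition}, and I would spend the bulk of the effort there, with the reduction from "projection onto orthogonal complement of $k-1$ columns" to "smallest singular value" being a one-line linear-algebra fact ($\|v\| \ge \sigma_{\min}$ since $v$ is the image under $A_n(\phi)$ of a unit vector supported on the coordinate set $I$, minus its projection — more precisely $v = A_n(\phi) x$ where $x$ is a unit combination of standard basis vectors, projected, still of norm $\le 1$, and $\|A_n(\phi) x\| \ge \sigma_{\min}\|x\|$ requires $x \ne 0$, which holds since the columns are distinct).
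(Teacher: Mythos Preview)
Your proposal has a genuine gap: neither of your two routes reaches the stated bound using only the determinant identity (Theorem~\ref{thm:Zagier}) that the paper actually cites.

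You correctly diagnose that the Fischer-inequality route loses a factor of $(n!)^{n!-k}$ and is hopeless. Your fallback, $\|v\|\ge\sigma_{\min}(A_n(\phi))$, is a valid linear-algebra reduction (indeed $v=A_n(\phi)x$ with $\|x\|\ge 1$), but you never establish the needed lower bound on $\sigma_{\min}$. From the determinant alone you only get $\sigma_{\min}\ge \det(A_n)/\sigma_{\max}^{\,n!-1}\ge \epsilon^{n\cdot n!}/(n!)^{n!-1}$, which is again far too weak; getting $\sigma_{\min}\ge \epsilon^{O(n)}$ would require the full eigenvalue decomposition of $A_n(\phi)$, which is \emph{not} part of Theorem~\ref{thm:Zagier} and which you leave as ``the bulk of the effort.'' So as written the proof is incomplete.

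The paper's argument sidesteps $\sigma_{\min}$ entirely and uses the symmetry in a much sharper way than your ``WLOG one column is the identity.'' The key observation is that $d_{KT}$ is right-invariant, so right-translating all column indices by a fixed $\pi\in S_n$ is an orthogonal change of basis; hence if $c_{\sigma_k}$ has projection $\le N$ onto $\spn\{c_{\sigma_1},\dots,c_{\sigma_{k-1}}\}$, then for \emph{every} $\pi$ the column $c_{\sigma_k\pi}$ has projection $\le N$ onto $\spn\{c_{\sigma_1\pi},\dots,c_{\sigma_{k-1}\pi}\}$. The paper exploits this to build an ordering of all $n!$ columns in which, at least once every $k$ steps, the new column has projection $\le N$ onto the span of all previous ones (greedily: pick any unselected $c$, translate the bad $k$-tuple so its last member lands on $c$, append the at most $k-1$ translated predecessors that are still unselected, then append $c$). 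This yields $\det(A_n(\phi))\le N^{n!/k}(\sqrt{n!})^{\,n!}$; combining with $\det(A_n(\phi))\ge \epsilon^{n\cdot n!}$ from Zagier and solving for $N$ gives exactly $N\ge(\epsilon^n/\sqrt{n!})^k$. The point you are missing is this ``tiling'' of the bad configuration across the whole matrix, which is what converts a single $k\times k$ obstruction into a bound that interacts with the global determinant without the catastrophic $(n!)^{n!}$ loss.
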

\begin{proof}
Assume for the sake of contradiction that there is a set of $k$ columns that violates the statement of the lemma. In particular suppose that the projection of $c_k$ onto the orthogonal complement of $c_1, c_2, \cdots, c_{k-1}$ has euclidean length $N$ for some $N < (\frac{\epsilon^{n}}{\sqrt{n!}})^k$. We will use this assumption to prove an upper bound on the determinant of $A_n(\phi)$ that violates Theorem~\ref{thm:Zagier}. Our approach is to find an ordering of the columns so that as we scan through, at least once every $k$ columns the euclidean length of its projection onto the orthogonal complement of the columns seen so far is at most $N$. Then using the naive upper bound of $\sqrt{n!}$ on the euclidean length of any column of $A_n(\phi)$ we have
$$ \det(A_n(\phi))  \leq N^{\frac{n!}{k}} (\sqrt{n!})^{n!}$$
However from Theorem~\ref{thm:Zagier} we have
$$\det(A_n(\phi)) = \prod_{i=1}^{n-1}(1-\phi^{i^2+i})^{\frac{n!(n-i)}{i^2+i}} \geq (1-\phi)^{n!n(\sum_{i=1}^{n-1}\frac{1}{i(i+1)})} \geq (1-\phi)^{n! n} \geq \epsilon^{n! n}$$
which yields the desired contradiction. 

Now we complete the argument by constructing the desired ordering of the columns as follows: We start with $c_1, c_2, \cdots, c_{k}$. And then we choose any column $c$ not yet selected. Let $\pi$ be the permutation that maps $c_k$ to $c$. Now $\pi$ maps $c_1, c_2, \cdots, c_{k-1}$ to $k-1$ columns, and suppose $j$ of them have not been selected yet. Call these $c'_1, c'_2, \cdots, c'_j$. We continue the ordering of the columns by appending $c'_1, c'_2, \cdots, c'_j, c$. It is easy to see that the euclidean length of the projection of $c$ onto the orthogonal complement of the columns seen so far is also at most $N$, which now finishes the proof. 
\end{proof}

The above lemma is not directly useful for two reasons: First, the lower bound is exponentially small in $n$. Second, it is tantamount to a lower bound on the $\ell_2$-norm of any sparse linear combination of the columns of $A_n(\phi)$. What we really want in the context of identifiability is a lower bound on the $\ell_1$-norm (of a matrix whose columns represent the components). 

\begin{definition}
Let $B_n(\phi)$ be obtained from $A_n(\phi)$ by normalizing its columns to sum to one. 
\end{definition}

\begin{lemma}\label{Kruskal}
Suppose $\phi< 1-\epsilon$ and consider any $k$ columns $c_1,c_2, \dots, c_k$ of $B_n(\phi)$.  Then
$$\|z_1c_1 + \cdots + z_kc_k\|_1 \geq \frac{1}{n^{4k}}\frac{\epsilon^{2k^2}}{(k+1)^{k^2+2k}}$$
provided that $\max(|z_1|,|z_2|, \dots, |z_k|) \geq 1$. 
\end{lemma}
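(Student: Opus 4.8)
The plan is to bootstrap from Lemma~\ref{preK}, which gives an $\ell_2$ lower bound on the projection of one unnormalized column onto the orthogonal complement of any $k-1$ others, and convert it into the desired $\ell_1$ lower bound on sparse linear combinations of the \emph{normalized} columns $c_1,\dots,c_k$ of $B_n(\phi)$. First I would record the relationship between a column of $A_n(\phi)$ and the corresponding column of $B_n(\phi)$: the normalizing constant is $Z_n(\phi) = \sum_\pi \phi^{I(\pi)}$, which satisfies $1 \le Z_n(\phi) \le n!$, so dividing by $Z_n(\phi)$ changes $\ell_2$ and $\ell_1$ norms by a bounded factor. Thus from Lemma~\ref{preK} I get that the projection of any normalized column $c_k$ onto the orthogonal complement of $c_1,\dots,c_{k-1}$ has $\ell_2$ length at least $\frac{1}{n!}(\frac{\epsilon^n}{\sqrt{n!}})^k$. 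Equivalently, writing $P$ for the orthogonal projection onto the complement of $\mathrm{span}(c_1,\dots,c_{k-1})$, we have $\|P c_k\|_2 \ge \rho$ with $\rho := \frac{1}{n!}(\epsilon^n/\sqrt{n!})^k$, and by symmetry (relabeling) the same holds for any one of the $k$ columns with respect to the other $k-1$.

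The next step is the linear-algebra reduction: given arbitrary real coefficients $z_1,\dots,z_k$ with $\max_i|z_i|\ge 1$, say the max is attained at index $k$ after relabeling, I want to lower bound $\|z_1 c_1 + \cdots + z_k c_k\|_2$. Applying the projection $P$ onto the orthogonal complement of $\mathrm{span}(c_1,\dots,c_{k-1})$ kills all terms except the last, leaving $\|z_k P c_k\|_2 = |z_k|\,\|P c_k\|_2 \ge \rho$. Since $\|v\|_2 \ge \|Pv\|_2$, this already gives $\|z_1 c_1 + \cdots + z_k c_k\|_2 \ge \rho$. To pass to $\ell_1$, I use the trivial bound $\|v\|_1 \ge \|v\|_2$ (entrywise, $\|v\|_1 \ge \|v\|_2$ always holds for real vectors), so $\|z_1 c_1 + \cdots + z_k c_k\|_1 \ge \rho$ as well. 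At this point the bound I have is $\rho = \frac{1}{n!}(\epsilon^n/\sqrt{n!})^k$, which is \emph{exponentially} small in $n$ — far weaker than the target $\frac{1}{n^{4k}}\frac{\epsilon^{2k^2}}{(k+1)^{k^2+2k}}$.

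The crux — and the main obstacle — is removing the exponential dependence on $n$. The key observation exploited by the paper is that the columns $c_1,\dots,c_k$ of $B_n(\phi)$ are exactly the Mallows distributions $M(\phi,\pi_1),\dots,M(\phi,\pi_k)$ for $k$ distinct base permutations $\pi_i$, and a linear combination $\sum z_i c_i$ with bounded support ``localizes'': all the mass of these distributions concentrates on permutations close to the centers, so one can restrict attention to a set of elements of size $O(k^2)$ that already distinguishes the $\pi_i$, project/marginalize the Mallows models down to that sub-alphabet (using Fact~\ref{block} and Corollary~\ref{blockStructure} to control how conditioning on a block structure behaves), and apply the exponential-in-$m$ bound from Lemma~\ref{preK} with $m = O(k^2)$ rather than $m=n$. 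Concretely, I would: (i) find a set $S \subseteq [n]$ of size at most $\binom{k}{2}\cdot 2 \le k^2$ such that the restrictions $\pi_i|_S$ are pairwise distinct; (ii) use Lemma~\ref{fix} / block-structure arguments to show that marginalizing the mixture onto orderings of $S$ retains an inverse-polynomial (in $n$) fraction of the relevant $\ell_1$ mass while turning the $c_i$ into (mixtures over) normalized columns of $A_{|S|}(\phi)$; (iii) apply Lemma~\ref{preK} in dimension $|S| \le k^2$, where $\epsilon^{|S|}/\sqrt{|S|!}$ becomes $\epsilon^{k^2}/\sqrt{(k^2)!} \ge \epsilon^{k^2}/(k+1)^{k^2}$-ish and raising to the $k$-th power and accounting for normalization yields the stated $\frac{\epsilon^{2k^2}}{(k+1)^{k^2+2k}}$ form; (iv) track the polynomial-in-$n$ losses from the marginalization step, which account for the $\frac{1}{n^{4k}}$ factor. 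The delicate points are making sure the marginalization does not collapse distinct components onto the same restricted distribution (handled by the choice of $S$) and bounding the mass lost when marginalizing a signed combination rather than a single distribution — this is where Lemma~\ref{fix}'s lower bound $\frac{1}{n^{2\ell}}$ with $\ell = |S| = O(k^2)$, giving $n^{-O(k^2)}$, must be reconciled with the claimed $n^{-4k}$; I expect the actual argument restricts to a cleverly chosen $S$ of size $O(k)$ (e.g.\ two witness elements per consecutive pair after ordering the centers) to get the sharper exponent, and the bulk of the work is verifying this combinatorial choice works uniformly over all coefficient vectors.
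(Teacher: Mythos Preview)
Your high-level plan is right: the exponential-in-$n$ bound from Lemma~\ref{preK} must be localized to a small instance of size $O(k)$, and Lemma~\ref{fix} supplies the $n^{-O(k)}$ cost of that localization. But the execution you sketch has a genuine gap at step~(ii)--(iii), and the missing idea is exactly what makes the argument work.

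The problem with ``marginalize onto orderings of a single set $S$'' is twofold. First, the plain marginal of a Mallows model on the relative order of a subset $S$ is \emph{not} a Mallows model on $S$ (so it is not a column of $B_{|S|}(\phi)$ and Lemma~\ref{preK} does not apply). You implicitly acknowledge this by invoking block-structure conditioning instead of plain marginalization; but conditioning on $S$ forming a single consecutive block requires $S$ to be consecutive in $\pi_1$ for Lemma~\ref{fix} to give the $n^{-2|S|}$ lower bound. Second, and fatally for the single-$S$ approach: a single consecutive window of size $O(k)$ in $\pi_1$ need not distinguish $\pi_1$ from every other $\pi_i$. The witness pairs (consecutive-in-$\pi_1$ elements inverted in $\pi_i$) for different $i$ can sit arbitrarily far apart in $\pi_1$.

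The paper's proof resolves this with a genuinely new ingredient: a block structure $\mathcal{B} = S_1,\dots,S_j$ with \emph{several} disjoint blocks, each consecutive in $\pi_1$, with total size $\le 2k$, constructed so that every $\pi_i$ ($i\neq 1$) disagrees with $\pi_1$ on at least one block $S_b$. Conditioning on $\mathcal{B}$ turns each $c_i$ (via Corollary~\ref{blockStructure}) into a scalar times a \emph{rank-one tensor} $\bigotimes_b v(M(\phi,\pi_i|_{S_b}))$. The test function is then itself a tensor product $v_1\otimes\cdots\otimes v_j$, where each $v_b$ is built by applying Lemma~\ref{preK} to the tiny matrix $A_{|S_b|}(\phi)$: orthogonal to every $v(M(\phi,\pi_i|_{S_b}))$ with $\pi_i|_{S_b}\neq\pi_1|_{S_b}$, and with inner product at least $\frac{1}{|S_b|!}(\epsilon^{|S_b|}/\sqrt{|S_b|!})^k$ against $\pi_1$'s factor. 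Because every $i\neq 1$ is killed on some factor, the tensor test annihilates all $T_i$ for $i\neq 1$; for $i=1$, the product of the per-block bounds together with Lemma~\ref{fix} (applied with $\ell\le 2k$, giving the $n^{-4k}$) yields the stated inequality. Your outline never introduces this tensor-product test across multiple blocks, and without it the localization step cannot simultaneously (a) stay size-$O(k)$, (b) keep the conditioned distributions Mallows, and (c) separate $\pi_1$ from all other $\pi_i$.
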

\begin{proof} Let  $\pi_1, \pi_2, \cdots,  \pi_k$ be the permutations corresponding to the columns $c_1,c_2, \dots, c_k$. Also without loss of generality suppose $z_1 \geq 1$ and that $\pi_1 = (1,2, \cdots, n)$. First we build a block structure that $\pi_1$ satisfies but no other $\pi_i$ does: For each $i$, pick two consecutive elements in $\pi_1$, say $x_i$ and $x_i+1$ that are inverted in $\pi_i$. Such a pair exists because $\pi_1 \neq \pi_i$. Now we can take the union of these pairs over all $i$ to form a block structure $ \mathcal{B} = \{ S_1,S_2, \cdots, S_j \}$ so that, for all $i$, $x_i$ and $x_i+1$ are in the same block and $\pi_1$ satisfies $\mathcal{B}$. Note that $j$ can be less than $k$, if for example two of the pairs contain the same element. In any case, we have $|S_1|+ \cdots + |S_j| \leq 2k$. 

Now for each $i$, set $M_i =M(\phi, \pi_i)$ and   $T_i = T_{M_i, \mathcal{B}}$. From Corollary~\ref{blockStructure} we have that
$$T_i = \Pr_{M_i}[\pi \in \mathcal{S}_\mathcal{B}] \cdot v(M(\phi,\pi_i\vert_{S_1})) \otimes \dots \otimes v(M(\phi,\pi_i\vert_{S_j})) $$
Next we show  that we can find unit vectors $v_1, \cdots, v_j$ so that 
\begin{enumerate}

\item[(1)] $\langle v_b, v(M(\phi, \pi_i \vert_{S_b})) \rangle = 0$ whenever $\pi_i \vert_{S_b} \neq \pi_1 \vert_{S_b} $ and

\item[(2)] $\langle v_b, v(M(\phi, \pi_1 \vert_{S_b})) \rangle \geq \frac{1}{|S_b|!}(\frac{\epsilon^{|S_b|}}{\sqrt{|S_b|!}})^k$ for all $b$
\end{enumerate}
This fact essentially follows from Lemma~\ref{preK}. First observe that the $v(M(\phi, \pi_i \vert_{S_b}))$ and the column of  $A_{|S_b|}(\phi)$ corresponding to $\pi_i \vert_{S_b}$ differ only by a normalization, since the former sums to one. Now for each $b$ we can take $v_b$ to be the unit vector in the direction of the projection of $v(M(\phi, \pi_1 \vert_{S_b}))$ onto the orthogonal complement of all the $v(M(\phi, \pi_i \vert_{S_b}))$'s for $i \neq 1$. Note that the additional $\frac{1}{|S_b|!}$ factor arises because Lemma~\ref{preK} deals with $A_{|S_b|}(\phi)$ and to normalize any column of it we need to divide by at most $|S_b|!$. 

With this construction, we have that $\langle v_1 \otimes v_2 \otimes \cdots \otimes v_j, T_i \rangle = 0$ for all $i \neq 1$ since each $\pi_i$ differs from $\pi_1$ when restricted to at least one of the blocks $S_1,S_2, \cdots, S_j$.  Moreover using property $(2)$ above and Lemma~\ref{fix} we have
$$\langle v_1 \otimes v_2 \otimes \cdots \otimes v_j, T_1 \rangle \geq \frac{1}{n^{4k}}\prod_b \frac{1}{|S_b|!}(\frac{\epsilon^{|S_b|}}{\sqrt{|S_b|!}})^k \geq \frac{1}{n^{4k}}\frac{1}{(2k)!}(\frac{\epsilon^{2k}}{\sqrt{(2k)!}})^k \geq \frac{1}{n^{4k}}\frac{\epsilon^{2k^2}}{(k+1)^{k^2+2k}}$$
where the last inequality follows from the bound $(2k)! \leq (k+1)^{2k}$. 
Finally note that $$\|z_1c_1 + \cdots + z_kc_k\|_1 \geq \sum_i \langle v_1 \otimes v_2 \otimes \cdots \otimes v_j, T_i \rangle$$
since the entries of $v_1 \otimes v_2 \otimes \cdots \otimes v_j$ are at most one in absolute value, and each $T_i$ can be formed from $c_i$ by zeroing out entries (corresponding to permutations that do not satisfy $\mathcal{S}_\mathcal{B}$) and summing subsets of the remaining ones together (that correspond to permutations with the same ordering for each $S_i$). 
\end{proof}

The above lemma readily implies that any two mixtures of $k$ Mallows models whose components all have the same scaling parameter and whose mixing weights are different are far from each other in total variation distance. In the sequel we will be interested in proving identifiability even when the scaling parameters are allowed to be different. As a step towards that goal, first we give a simple extension that allows the scaling parameters to be slightly different:

\begin{lemma}\label{RobustKruskal}
Consider any $k$ distinct permutations $\pi_1, \pi_2, \cdots, \pi_k$ and scaling parameters $\phi_1, \phi_2, \cdots, \phi_k$. Let $c_i = v(M(\phi_i, \pi_i))$ and suppose that for each $i$, $\phi_i \leq 1-\epsilon$ and for each $i \neq j$, 
$$|\phi_i -\phi_j| \leq \frac{1}{160n^{8k+3}}\frac{\epsilon^{4k^2}}{(k+1)^{2k^2+4k + 2}}$$
Then for any coefficients $z_i$ with  $\max(|z_1|,|z_2|, \dots, |z_k|) \geq 1$ we have $$\|z_1c_1 + \cdots + z_kc_k\|_1 \geq \frac{1}{2n^{4k}}\frac{\epsilon^{2k^2}}{(k+1)^{k^2+2k}}$$ 
\end{lemma}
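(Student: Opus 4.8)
The plan is to reduce Lemma~\ref{RobustKruskal} to Lemma~\ref{Kruskal} by a perturbation argument. Fix the permutations $\pi_1,\dots,\pi_k$ and, rather than work with the true columns $c_i = v(M(\phi_i,\pi_i))$, compare them to the vectors $c_i' = v(M(\phi_1,\pi_i))$ obtained by replacing every scaling parameter with a single common value, say $\phi_1$. These new vectors are exactly the (column-normalized) columns of $A_n(\phi_1)$ corresponding to $\pi_1,\dots,\pi_k$, so Lemma~\ref{Kruskal} applies verbatim and gives $\|z_1 c_1' + \cdots + z_k c_k'\|_1 \ge \frac{1}{n^{4k}}\frac{\epsilon^{2k^2}}{(k+1)^{k^2+2k}}$ whenever $\max_i |z_i| \ge 1$. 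Then by the triangle inequality,
$$\|z_1 c_1 + \cdots + z_k c_k\|_1 \ge \|z_1 c_1' + \cdots + z_k c_k'\|_1 - \sum_{i=1}^k |z_i|\, \|c_i - c_i'\|_1.$$
Since $c_i - c_i'$ differs only by changing the scaling parameter from $\phi_i$ to $\phi_1$ at the fixed center $\pi_i$, we have $\|c_i - c_i'\|_1 = 2\, d_{TV}(M(\phi_i,\pi_i), M(\phi_1,\pi_i))$, which is controlled by Lemma~\ref{TVbound}: provided $|\phi_i - \phi_1| \le \mu^2/(10 n^3)$ we get $\|c_i - c_i'\|_1 \le 2\mu$.

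Next I would choose the target $\mu$. We want the total error term $\sum_i |z_i|\,\|c_i - c_i'\|_1$ to be at most half the main lower bound. The issue is that the $|z_i|$ need not be bounded above — only $\max_i |z_i| \ge 1$ is assumed — so a naive bound does not close. To handle this I would normalize: let $Z = \max_i |z_i| \ge 1$ and write $z_i = Z z_i^\circ$ with $|z_i^\circ| \le 1$. Both sides of the claimed inequality scale linearly in $Z$ (the lower bound on $\|z_1 c_1' + \cdots\|_1$ from Lemma~\ref{Kruskal} applies with the coefficients $z_i^\circ$ scaled by $Z$ since $\max_i |z_i^\circ| = 1 \ge 1$, giving a factor $Z$; and the error term picks up the same factor $Z$). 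So it suffices to prove the bound when $\max_i|z_i| = 1$ exactly, in which case $\sum_i |z_i|\,\|c_i-c_i'\|_1 \le 2k\mu$. Thus I need $2k\mu \le \frac{1}{2n^{4k}}\frac{\epsilon^{2k^2}}{(k+1)^{k^2+2k}}$, i.e. $\mu \le \frac{1}{4kn^{4k}}\frac{\epsilon^{2k^2}}{(k+1)^{k^2+2k}}$, and correspondingly $|\phi_i-\phi_j| \le \mu^2/(10n^3)$. Plugging this choice of $\mu$ into $\mu^2/(10n^3)$ and absorbing $k$ into $(k+1)$ should reproduce (up to the slack the authors left themselves) the stated hypothesis $|\phi_i - \phi_j| \le \frac{1}{160 n^{8k+3}}\frac{\epsilon^{4k^2}}{(k+1)^{2k^2+4k+2}}$; the factor $160$ and the extra $+2$ in the exponent are exactly the room needed for the $4k$, the $10$, and the squaring.

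The last bookkeeping point is that Lemma~\ref{TVbound} requires $n \ge 2$, which is automatic here since $n \ge 10k^2$ is in force throughout this section, and that it wants $|\phi_1 - \phi_i| \le \mu^2/(10n^3)$ — which follows from the hypothesis of Lemma~\ref{RobustKruskal} once we verify $\frac{1}{160n^{8k+3}}\frac{\epsilon^{4k^2}}{(k+1)^{2k^2+4k+2}} \le \frac{1}{10n^3}\left(\frac{1}{4kn^{4k}}\frac{\epsilon^{2k^2}}{(k+1)^{k^2+2k}}\right)^2$, a routine comparison of exponents. I expect essentially no conceptual obstacle: the only mildly delicate step is the homogenization in $Z = \max_i|z_i|$, since one must be careful that Lemma~\ref{Kruskal}'s hypothesis ``$\max|z_i| \ge 1$'' is preserved under the rescaling, and that the perturbation bound is also homogeneous of degree one — both of which hold. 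Everything else is triangle inequality plus plugging constants into Lemma~\ref{TVbound}.
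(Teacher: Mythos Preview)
Your proposal is correct and matches the paper's proof essentially line for line: normalize so that $\max_i |z_i| = 1$, replace each $c_i$ by $c_i' = v(M(\phi_1,\pi_i))$, apply Lemma~\ref{Kruskal} to the $c_i'$, and absorb the perturbation via $\|c_i - c_i'\|_1 = 2\,d_{TV}(M(\phi_i,\pi_i),M(\phi_1,\pi_i))$ bounded through Lemma~\ref{TVbound}. The only cosmetic difference is that the paper folds the factor $k$ from the sum into the $(k+1)^{k^2+2k+1}$ in the denominator rather than writing $4k$ separately, but the arithmetic is the same.
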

\begin{proof}
We can assume $\max(|z_1|,|z_2|, \dots, |z_k|) = 1$ since otherwise we can just scale the $z_i$.  Set $\phi = \phi_1$, and for each $i$ set $c'_i = v(M(\phi, \pi_i))$. Using Lemma~\ref{TVbound} we have
$$\|c_i - c'_i\|_1 = 2 d_{TV}(M(\phi_i, \pi_i), M(\phi, \pi_i)) \leq  \frac{1}{2n^{4k }}\frac{\epsilon^{2k^2}}{(k+1)^{k^2+2k + 1}}$$
And finally invoking Lemma~\ref{Kruskal} we have that
$$\|z_1c_1 + \cdots + z_kc_k\|_1 \geq \|z_1c'_1 + \cdots + z_kc'_k\|_1 - \sum_i \|c_i - c'_i\|_1 \geq \frac{1}{2n^{4k}}\frac{\epsilon^{2k^2}}{(k+1)^{k^2+2k}}$$
which completes the proof. 
\end{proof}

\subsection{Polynomial Identifiability}\label{ident}

Now we are ready to prove our main identifiability result. First we define a notion of non-degeneracy, which is information-theoretically necessary when our goal is to identify all the components in the mixture.

\begin{definition}
We say a mixture of Mallows models
$M = w_1 M(\phi_1,\pi_1) + \cdots + w_k M(\phi_k,\pi_k)$
is $\mu$-non degenerate if the total variation distance between any pair of components is at least $\mu$ and the total variation distance between any component and the uniform distribution over all permutations is also at least $\mu$.
Furthermore we say that the mixture is $(\mu, \alpha)$-non degenerate if in addition each mixing weight is at least $\alpha$. 
\end{definition}

We will not need the following definition until later (when we state the guarantees of various intermediary algorithms), but let us also define a natural notion for two mixtures to be component-wise close:

\begin{definition}
We say that two mixtures of Mallows models
$M = w_1 M(\phi_1,\pi_1) + \cdots + w_k M(\phi_k,\pi_k)$ and $M' = w'_1 M(\phi'_1,\pi'_1) + \cdots + w'_k M(\phi'_k,\pi'_k)$ with the same number of components are component-wise $\theta$-close if there is a relabelling of components in one of the mixtures after which $|w_i - w'_i|, |\phi_i - \phi'_i|$ and $d_{TV}(M(\phi_i,\pi_i), M(\phi'_i,\pi'_i)) \leq \theta$ for all $i$. 
\end{definition}

Most of these conditions are standard, because in order to be able to identify $M$ from a polynomial number of samples we need to get at least one sample from each component and at least one sample from the difference between any two components. In our context, we additionally require no component to be too close to the uniform distribution because the distribution $M(1, \pi)$ is the same regardless of the choice of $\pi$. 

Below, we state and prove our main lemma in this section. The technical argument is quite involved, but many of the antecedents (in particular finding block structures that capture the disagreements between permutations, representing the distribution on a subset of permutations as a tensor and constructing test functions as the tensor product of simple vectors) were used already in the proof of Lemma~\ref{Kruskal}.

\begin{lemma}\label{Identifiability}
Consider any $k$ (not necessarily distinct) permutations $\pi_1, \pi_2, \cdots, \pi_k$ and scaling parameters $\phi_1, \phi_2, \cdots, \phi_k$. Set $M_i = M(\phi_i, \pi_i)$ and suppose that the collection of Mallows models is $\mu$-non degenerate. Then for any coefficients $z_i$ with  $\max(|z_1|,|z_2|, \dots, |z_k|) \geq 1$ we have $$\|z_1v(M_1) + \cdots + z_k v(M_k)\|_1 \geq \Big(\frac{\mu^2}{10n^4 k}\Big)^{20k^3}$$ 
\end{lemma}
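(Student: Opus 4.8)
### Proof Proposal

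The plan is to reduce the general case (arbitrary, possibly repeated permutations with distinct scaling parameters) to a situation where Lemma~\ref{RobustKruskal} applies, by grouping the components according to which scaling parameters are close together. The key difficulty is that we now have three sources of collision: two components may share the same permutation, may share (almost) the same scaling parameter, or both; and the list $\pi_1,\dots,\pi_k$ need not be distinct, so the block-structure construction from Lemma~\ref{Kruskal} has to be set up more carefully. Since the mixture is $\mu$-non degenerate, no two components can agree on \emph{both} the permutation and the scaling parameter (else their TV distance would be $0<\mu$), and moreover Lemma~\ref{TVbound} (contrapositive) plus the $\mu$-non degeneracy forces any two components with the same permutation to have $|\phi_i-\phi_j| \gtrsim \mu^2/n^3$; and Claim~\ref{simple} handles the case of distinct permutations with both $\phi$'s bounded away from $1$ (which holds since each component is $\mu$-far from $U$, so $\phi_i \le 1-\Omega(\mu/n^{?})$ — I'd first record this quantitative bound $1-\phi_i \ge \mathrm{poly}(\mu/n)$ as a preliminary step, using that $M(1,\pi)=U$ and a Lipschitz-in-$\phi$ estimate like Lemma~\ref{TVbound}).

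First I would establish the preliminary bound $\phi_i \le 1-\epsilon_0$ for all $i$ with $\epsilon_0 = \mathrm{poly}(\mu/n)$, from $d_{TV}(M_i,U)\ge\mu$ and Lemma~\ref{TVbound} applied with base permutation $\pi_i$ (comparing $M(\phi_i,\pi_i)$ to $M(1,\pi_i)=U$). Second, I would define a graph on $\{1,\dots,k\}$ joining $i\sim j$ when $|\phi_i-\phi_j|$ is below the threshold $\tau$ appearing in Lemma~\ref{RobustKruskal} (with $\epsilon=\epsilon_0$), and partition the components into the connected components $G_1,\dots,G_m$ of this graph. Within a single group $G_\ell$, all scaling parameters lie within $k\tau$ of each other — so after possibly shrinking $\tau$ by a factor of $k$ at the outset, Lemma~\ref{RobustKruskal} applies to the components in $G_\ell$, provided their permutations are distinct; $\mu$-non degeneracy guarantees distinctness within a group, because two components in the same group already have close $\phi$'s, so if they also had equal permutations they would be close in TV, contradicting $\mu$-non degeneracy (here I'd invoke Lemma~\ref{TVbound} again for the quantitative version). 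Across different groups $G_\ell, G_{\ell'}$, the scaling parameters differ by at least $\tau$.

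The heart of the argument is then combining the within-group lower bounds with the cross-group separation. The natural route: let $G_\ell$ be the group containing the index attaining $\max_i|z_i|$, so the coefficients restricted to $G_\ell$ have max absolute value $\ge 1$. By Lemma~\ref{RobustKruskal}, $\big\|\sum_{i\in G_\ell} z_i v(M_i)\big\|_1 \ge \beta$ where $\beta = \mathrm{poly}((\epsilon_0)^{k^2}, (k+1)^{-k^2}, n^{-k})$. It remains to argue that adding the contributions of the other groups cannot cancel this. For that I would use a test function / projection argument: since a Mallows model's distribution, as a function of $\phi$ at a fixed center, is (entrywise) a rational function of $\phi$ of controlled degree, the vectors $\{v(M(\phi,\pi)) : \phi \in \text{group }\ell\}$ and those from group $\ell'$ can be separated by a polynomial-in-$\phi$ test functional whose values are bounded; concretely, one restricts attention to a single pair of elements $x,x+1$ (as in Lemma~\ref{Kruskal}) where the relevant centers disagree, reducing to $2\times 2$ marginals that are explicit functions of $\phi$, and exploits that the $\phi_i$'s across groups are $\tau$-separated to lower-bound a Vandermonde-type determinant. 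Iterating this group-by-group "peeling" — at each stage projecting onto the orthogonal complement of the already-handled groups' spans — loses a factor polynomial in the existing bounds at each of the $\le k$ stages, which accounts for the final $(\mu^2/(10n^4k))^{20k^3}$-type exponent: roughly, $k$ iterations, each costing a power $\mathrm{poly}(k)$ of a quantity that is itself $\mathrm{poly}(\mu/n)^{k^2}$. The main obstacle I anticipate is precisely this cross-group separation step: making the $\phi$-separation between groups translate into a robust \emph{$\ell_1$} lower bound (not just $\ell_2$) on the combined sum, while keeping all the losses polynomially controlled — this is where the interplay between the rational-function structure in $\phi$, the tensor/block-structure machinery of Lemma~\ref{Kruskal}, and a quantitative Vandermonde estimate has to be orchestrated carefully, and where the bookkeeping of exponents gets heavy.
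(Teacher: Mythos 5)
Your overall strategy---bound each $\phi_i$ away from $1$ using the distance to the uniform distribution, group the components by proximity of their scaling parameters, apply Lemma~\ref{RobustKruskal} within the group containing the heavy coefficient (where non-degeneracy plus Lemma~\ref{TVbound} forces the base permutations in that group to be distinct), and annihilate the other groups with test functions built from $2\times 2$ marginals of consecutive pairs---is essentially the paper's argument in its main case. The cross-group step you flag as the principal obstacle is in fact the easier part and requires no Vandermonde estimate or iterated group-by-group peeling: for each component $M_i$ whose scaling parameter is far from $\phi_1$, one picks a pair $(x_a,y_a)$ consecutive in $\pi_1$ and takes the vector $v_{i,a}\in\R^2$ orthogonal to $v(M(\phi_i,\pi_i\vert_{\{x_a,y_a\}}))$; its inner product with the corresponding marginal of $M_1$ is at least $|\phi_1-\phi_i|/4$, and tensoring these vectors over all far components gives a single test function that kills them all simultaneously while preserving a quantifiable fraction of the near group. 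Two details you omit do matter: the pairs must avoid the block structures used to distinguish the near components, so that the restricted permutations $\pi_1\vert_X,\dots,\pi_j\vert_X$ remain distinct and Lemma~\ref{RobustKruskal} still applies after conditioning; and the pairs must be chosen within the first $O(k^2)$ positions of $\pi_1$.

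The genuine gap is the regime of very small scaling parameters. The test function is paired with the event that the chosen pairs occupy designated leading positions, and the probability of that event under the heavy component is only bounded below by roughly $\phi_1^{\mathrm{poly}(k)}/n^{O(k)}$, which is not polynomially bounded below when $\phi_1$ is near $0$ (and the pairs cannot simply be placed at the very front of $\pi_1$, since they must dodge the block structures). Your preliminary step gives only the upper bound $\phi_i\le 1-\epsilon_0$, not a lower bound, so your argument breaks when the maximizing coefficient belongs to a component with tiny $\phi$. The paper splits into two cases: it runs the argument above only when some $i$ has $\phi_i\ge \epsilon/(2nk)$ and $|z_i|>1/k$, and in the complementary case it argues directly that every tiny-$\phi$ component concentrates its mass on permutations agreeing with its own base permutation on a chosen consecutive pair (while every remaining component has $|z_i|\le 1/k$), which yields an immediate $\ell_1$ lower bound of $1/(2k)$. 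Your proof needs some version of this second case to be complete.
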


Again, it suffices to consider $\max(|z_1|,|z_2|, \dots, |z_k|) = 1$.  Set $\epsilon = \frac{\mu^2}{10n^3}$.  We will break up the proof of this lemma into two cases. First consider the case where $\phi_1 \geq \frac{\epsilon}{2nk}$ and $|z_1|> \frac{1}{k}$. We will use the following intricate construction: Consider the set of $\phi_i$ such that $$|\phi_i - \phi_1| > \frac{1}{160n^{8k+5}}\frac{\epsilon^{4k^2}}{(k+1)^{2k^2+4k}}$$ and suppose without loss of generality these are $\phi_{j+1}, \cdots, \phi_k$. Now by Lemma~\ref{TVbound} and the assumption of $\mu$-non degeneracy we conclude that the permutations $\pi_1, \pi_2, \cdots, \pi_j$ are all distinct and that $\phi_1, \dots , \phi_j$ are all at most $1- \epsilon$. For each of these $j$ permutations, we will build an ordered block structure $\mathcal{A}_i$ where the total size of the sets in it is at most $2j$, with the property that $\pi_i$ satisfies $\mathcal{A}_i$ but $\pi_1, \cdots ,\pi_{i-1}, \pi_{i+1}, \cdots ,\pi_j$ do not. We will do this as follows: For each choice $\pi_\ell$ of a permutation in the list $\pi_1, \cdots ,\pi_{i-1}, \pi_{i+1}, \cdots ,\pi_j$ we add to $\mathcal{A}_i$ two consecutive elements of $\pi_i$ whose order is reversed in $\pi_\ell$. This completes the construction. 

Now we will pick an additional $2(k-j)$ elements $x_1,y_1,x_2,y_2, \cdots, x_{k-j},y_{k-j}$ in the following manner. First none of these elements should occur in any of the ordered block structures $\mathcal{A}_i$. Second we want that each pair $(x_i,y_i)$ is consecutive in $\pi_1$. Third, we want the pairs to be as early as possible in $\pi_1$. What is the largest rank that we will need to use to select these additional elements? There are at most $2j^2$ elements among the $j$ ordered block structures and at worst the gap between these elements in $\pi_1$ is one so that we need to use at most rank $2j^2 + 2j^2 + 2(k-j) \leq 4k^2$. 

\begin{proof}[Proof of First Case]
Now consider the set $\mathcal{S}$ of permutations where the first $2(k-j)$ elements are $x_1,y_1, \cdots, x_{k-j},y_{k-j}$ in that order except up to possibly reversing the order of some pairs $x_i$ and $y_i$. There are $2^{k-j}(n-2(k-j))!$ such permutations and we will let $v_i$ denote the vector $v(M_i)$ restricted to the indices corresponding to $\mathcal{S}$. Set $Y = \{ x_1,y_1, \cdots, x_{k-j},y_{k-j} \}$ and $X = [n] \backslash Y$. As in Definition~\ref{def:tensor} we can form a $2 \times 2 \times \cdots \times 2 \times (n-2(k-j))!$ dimensional rank one tensor of order $k-j+1$ whose entries represent the probability of any permutation in $\mathcal{S}$, which using Corollary~\ref{blockStructure} can be written as
$\Pr_{M_i}[\pi \in \mathcal{S}] T_i \otimes v(M_i(\pi_i \vert_{X}))$ where
$$T_i = v(M_i(\pi_i \vert_{\{x_1,y_1 \}})) \otimes v(M_i(\pi_i \vert_{\{x_2,y_2 \}})) \otimes \cdots \otimes v(M_i(\pi_i \vert_{\{x_{k-j},y_{k-j} \}}))$$
Furthermore $\Pr_{M_1}[\pi \in \mathcal{S}] \geq \frac{\phi_1^{8k^3}}{n^{2k}}$ since the probability that the elements $x_1,y_1, \cdots, x_{k-j},y_{k-j}$ are the first $2(k-j)$ elements in that order is at least $\frac{\phi_1^{4k^2(2(k-j))}}{n^{2(k-j)}} \geq \frac{\phi_1^{8k^3}}{n^{2k}}$.

Now for any $1 \leq i \leq k$ and any $1 \leq a \leq k-j$ define the vector $v_{i, a}$ as follows: If $x_a$ occurs before $y_a$ in $\pi_i$ then set $v_{i, a}= (\frac{\phi_i}{1+\phi_i}, \frac{-1}{1+\phi_i})$.  Otherwise set $v_{i, a}= (\frac{1}{1+\phi_i}, \frac{-\phi_i}{1+\phi_i})$.  Note that by construction we have that $v(M(\pi_i \vert_{\{x_a, y_a\}}))$ and $v_{i, a}$ are orthogonal. Now define the tensor
$$T = v_{j+1, 1} \otimes v_{j+2, 2} \otimes \cdots \otimes v_{k, k-j}$$ which has the key property that $\langle T, T_i \rangle = 0$ for $j+1 \leq i \leq k$. Next we lower bound $\langle T, T_1 \rangle$. First we note for any $1 \leq \ell \leq k-j$ we have
$|\langle v(M(\pi_1 \vert_{\{x_1,y_1 \}})), v_{j+\ell, \ell}  \rangle | \geq \frac{|\phi_1 - \phi_{j+\ell}|}{4} \geq \frac{1}{640n^{8k+5}}\frac{\epsilon^{4k^2}}{(k+1)^{2k^2+4k}}$.
Thus we conclude
$|\langle T, T_1 \rangle| \geq \Big( \frac{1}{640n^{8k+5}}\frac{\epsilon^{4k^2}}{(k+1)^{2k^2+4k}} \Big)^k$.
Now putting it all together we have 
$$\|z_1v(M_1) + \cdots + z_k v(M_k)\|_1 \geq \| \sum_{i =1}^{j} z_i \Pr_{M_i}[\pi \in \mathcal{S}] \langle T, T_i \rangle v(M_i(\pi_i \vert_{X}))\|_1$$
We claim that the permutations $\pi_1 \vert_{X}, \cdots, \pi_j \vert_{X}$ must all be distinct because we chose the elements in $Y$ to avoid each ordered block structure $\mathcal{A}_i$. Moreover the scaling parameters are all $\frac{1}{160n^{8k+5}}\frac{\epsilon^{4k^2}}{(k+1)^{2k^2+4k}}$ close to each other. We can apply Lemma \ref{RobustKruskal} to lower bound the right hand side to complete the proof. 
\end{proof}

We remark that  everything in the first case works as is if $\phi_i \geq \frac{\epsilon}{2nk}$ and $|z_i|> \frac{1}{k}$ for any $i$. Thus in the remaining case we can assume that for every $i$ either $\phi_i < \frac{\epsilon}{2nk}$ or $|z_i|\leq  \frac{1}{k}$. 

\begin{proof}[Proof of Second Case]
Assume without loss of generality that $z_1 =1$. Then as shown in the proof of Lemma~\ref{TVbound} we have $\Pr_{M_1}[\pi_1] \geq 1-\frac{\epsilon}{2k}$. For each $j$ with $\phi_j< \frac{\epsilon}{2nk}$, from Lemma~\ref{TVbound} and our choice of $\mu$ we have that $\pi_i \neq \pi_j$ and thus we can find a pair of elements, say $(x_j, y_j)$, that are consecutive in $\pi_i$ and $x_j$ is ranked higher, but in $\pi_j$ they occur in the opposite order. Let $\mathcal{S}$ be the set of permutations where for each such pair $x_j$ is ranked higher than $y_j$. Thus we have that for any $j$, either $\Pr_{M_j}[\pi \in \mathcal{S}] \leq \frac{\epsilon}{2nk}$ or $|z_j| < \frac{1}{k}$. Putting this all together we have
$$\|z_1v(M_1) + \cdots + z_k v(M_k)\|_1 \geq 1-\frac{\epsilon}{2k} - (k-1)\max\Big (\frac{1}{k},\frac{\epsilon}{2nk} \Big) \geq \frac{1}{2k}$$
which completes the proof. 
\end{proof}

Now we are ready to prove our main identifiability result:
\begin{theorem} \label{fullIdentifiability}
Consider two mixtures of Mallows models
$$M = w_1 M(\phi_1, \pi_1) + \cdots + w_i M(\phi_i, \pi_i) \mbox{ and } M' = w'_1 M(\phi'_1, \pi'_1) + \cdots + w'_{i'} M(\phi'_{i'}, \pi_{i'})$$
where $i, i' \leq k$. Suppose that both mixtures are $(\mu, \alpha)$-non degenerate and set $\epsilon = \frac{\mu^2}{10n^3}$.  If for some parameter $\theta$ we have
$$\|\sum_i w_i v(M(\phi_i, \pi_i)) -\sum_{i'} w'_{i'} v(M(\phi_{i'}, \pi_{i'}))\|_1  \leq \big( \frac{\epsilon \theta \alpha}{nk} \big)^{(10k)^{6k}}$$
Then $i = i'$ and there is a matching between the components in the two mixtures so that across the matching the components are $\theta$-close in total variation distance and the mixing weights are also $\theta$-close. 
\end{theorem}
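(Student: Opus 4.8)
The plan is to study the signed vector $v(M)-v(M')$, which is a signed linear combination of at most $2k$ Mallows components $\sum_{a} w_a v(M(\phi_a,\pi_a)) - \sum_b w'_b v(M(\phi'_b,\pi'_b))$. The obvious approach — apply Lemma~\ref{Identifiability} to conclude that such a small-$\ell_1$ combination must have all coefficients tiny — fails outright, because the union of the two component sets is in general \emph{not} non-degenerate: a component of $M$ may be arbitrarily close to, or coincide with, a component of $M'$. I would get around this with an iterative peel-and-match procedure that removes one matched pair of components at a time while keeping track of a slowly-growing $\ell_1$-error budget; the accumulation of this error over the rounds is exactly what forces the $(10k)^{6k}$-scale exponent in the hypothesis.

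In detail, I maintain a signed combination $\sum_a \zeta_a v(N_a)$ of $m\le 2k$ \emph{distinct} Mallows components with $\ell_1$-norm at most $\eta$, initialized from the $i+i'$ components with $\eta=\eta_0:=(\epsilon\theta\alpha/(nk))^{(10k)^{6k}}$. Each component is at total variation distance $\ge\mu$ from the uniform distribution (inherited from non-degeneracy), and as long as some $N_a$ still carries one of the original weights $w_a$ or $-w'_b$, the largest $|\zeta_a|$ is at least $\alpha$. Rescaling so that this maximum equals $1$ multiplies the norm by at most $1/\alpha$, so Lemma~\ref{Identifiability} (applied with $m$ in place of $k$, using the current collection's own separation as the parameter) forces the minimum pairwise total variation distance among the $N_a$ to be at most $\tau(\eta):=\mathrm{poly}(n,k)\cdot(\eta/\alpha)^{\Theta(1/k^3)}$, which is tiny whenever $\eta$ is. The pair realizing this minimum cannot lie inside the same original mixture (they would be within $\tau(\eta)<\mu$, contradicting that mixture's non-degeneracy), so it is a pair $w_a v(M_a),\,-w'_b v(M'_b)$; I replace these two terms by the single term $(w_a-w'_b)v(M_a)$ and record the tentative match $M_a\leftrightarrow M'_b$. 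Since $w_a v(M_a)-w'_b v(M'_b)-(w_a-w'_b)v(M_a)=w'_b(v(M_a)-v(M'_b))$, this changes the vector by $\ell_1$-norm at most $2\,d_{TV}(M_a,M'_b)\le 2\tau(\eta)$. Thus one round replaces $(m,\eta)$ by $(m-1,\ \eta')$ with $\eta'\le\eta+2\tau(\eta)$, and the procedure runs for at most $2k$ rounds. Calibrating so that the $2k$-fold composition of $\eta\mapsto\eta+2\tau(\eta)$ — each step essentially a $\Theta(k^3)$-th root up to $\mathrm{poly}(n,k,1/\alpha)$ factors — keeps the final error well below $\theta$ and below $(\mu^2/(10n^4k))^{20k^3}$ is the routine arithmetic that fixes the exponent at $(10k)^{6k}$.

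It remains to read off the conclusion. Suppose $i>i'$ (the symmetric case is identical). After $i'$ rounds every component of $M'$ has been consumed, so the surviving collection consists of at least two distinct $M$-components, pairwise at total variation distance $\ge\mu$, at distance $\ge\mu$ from uniform, and carrying coefficients of absolute value $\ge\alpha$; Lemma~\ref{Identifiability} then lower-bounds the residual $\ell_1$-norm by $\alpha(\mu^2/(10n^4k))^{20k^3}$, contradicting that it is at most the tiny accumulated error. Hence $i=i'$ and the recorded matches form a bijection between the components of $M$ and those of $M'$. Each matched pair $M_a\leftrightarrow M'_{b_a}$ was peeled in some round, which certifies $d_{TV}(M_a,M'_{b_a})\le\tau(\cdot)\le\theta$. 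Finally, once every pair has been peeled the residual vector equals $\sum_a (w_a-w'_{b_a})v(M_a)$ up to $\ell_1$-error at most the accumulated error; applying Lemma~\ref{Identifiability} to the $\le k$ distinct, non-degenerate components $M_a$ forces $\max_a|w_a-w'_{b_a}|\le\theta$, which gives the claimed $\theta$-closeness of the mixing weights.

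The main obstacle is precisely the non-degeneracy failure of the union of the two component sets, which blocks a one-shot use of Lemma~\ref{Identifiability}; the peel-and-match recursion repairs it, but the error budget worsens by (roughly) a $\Theta(k^3)$-th root at each of the $\le 2k$ rounds, so the genuinely delicate part is calibrating the thresholds $\tau(\cdot)$ and carrying the coefficient lower bound $\alpha$ through every round so that the induction closes with exactly the stated tower-like exponent.
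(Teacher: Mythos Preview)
Your proposal is correct and takes essentially the same approach as the paper's own proof: both iteratively apply Lemma~\ref{Identifiability} in contrapositive form to locate a close pair that must be cross-mixture (by non-degeneracy of each individual mixture), merge that pair into a single term while tracking the accumulated $\ell_1$-error, argue that a merged component cannot be re-matched, conclude $i=i'$, and finish with one more application of Lemma~\ref{Identifiability} to the residual $\sum_a (w_a-w'_{b_a})v(M_a)$ to bound the weight differences. The paper's write-up is terser about the error bookkeeping, but the structure and the source of the $(10k)^{6k}$ exponent are the same.
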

\begin{proof}
We apply Lemma \ref{Identifiability} to conclude that there is a pair of components that is $(\frac{\epsilon \theta \alpha}{nk} )^{(10k)^{6k-4}}$-close
in total variation distance.  From the non-degeneracy assumption, these  two components cannot be from the same mixture. Without loss of generality suppose they are $M_1 = M(\phi_1, \pi_1)$ and $M'_1 = M(\phi'_1, \pi'_1)$. Now if we replace the terms $w_1v(M_1) - w_1'v(M_1')$ in the sum by $(w_1-w_1')v(M_1)$, we increase the $\ell_1$ norm by at most  $2\big( \frac{\epsilon \theta \alpha}{nk} \big)^{(10k)^{6k-4}}$.  We can then repeat the above argument and combine two more terms of the mixture. Neither of these components can involve $M_1$ since $M_1$ cannot be closer than $\frac{\mu}{2}$ in total variation distance to any of the other remaining components. Thus we will combine two new components, which we can assume without loss of generality are $M_2,M_2'$. Also from the non-degeneracy assumption, any component that has not been combined with another must have a mixing weight of at least $\alpha$. Thus $i = i'$ and when all components have been combined we have
$$\|(w_1-w_1')v(M_1) + \cdots + (w_k-w_k')v(M_k)\|_1 \leq \big( \frac{\epsilon \theta \alpha}{nk} \big)^{(10k)^4}$$
 Since we assumed the $M_i$ are all $\mu$-far from each other in total-variation distance, applying Lemma \ref{Identifiability} in the contrapositive ensure that $\max(|w_i - w_i'|) \leq \theta$ which completes the proof. 
\end{proof}

\section{The General Algorithm}

Here we leverage the tools and ideas that we developed in the previous section to give a polynomial time algorithm for learning mixtures of Mallows models that works for any constant number of components. We have already seen the key ingredient \---- test functions that isolate a single component from the rest of the mixture. In the context of polynomial identifiability, we knew the parameters of the mixture which we used to construct small ordered block structures that allow us to focus on parts of the distribution that have a convenient analytic form, but still capture the differences between the base permutations. Here we will use variants of the same type of arguments, but where we guess the relevant portions of the ordered block structures from which we follow the same recipe to construct test functions. If our guess is correct, we will succeed in learning the base permutation of some component. In our setting there will be a constant number of guesses, so we will be able to construct a list of candidate mixtures at least one of which is close to the true mixture. We can then appeal to our identifiability results to test find a mixture that is close on a component-wise basis. 

Because we will need to handle components with small scaling parameters separately, it will be more convenient to work with vectorizations of the low order moments of a distribution than the distribution itself. 

\begin{definition}
For a Mallows model $M$ on $n$ elements, let $v_c(M)$ denote the vectorization of the order $c$ moments of $M$. In particular $v_c(M)$ has $\binom{n}{c}n(n-1) \cdots (n-c+1)$ entries and we interpret each coordinate as a choice of a subset $S \subset [n]$ of size $c$ and a placement of its elements. The value of the entry in $v_c(M)$ is the probability under $M$ that the elements in $S$ are placed in the corresponding locations. 
\end{definition}

Note that the sum of the entries in $v_c(M)$ is larger than one, because the values of its entries are the probabilities of events that are (usually) not disjoint. We remark that all of the proofs of polynomial identifiability, where we prove lower bounds on the $\ell_1$-norm of linear combinations of vectorizations of Mallows models, carry over to the case when we use $v_c(M)$ instead provided that $c \geq 10k^2$. This follows by observing that all of the events we used can be defined in terms of the placement of at most $c$ elements and the probabilities of these events can be computed by adding up an appropriate set of the entries of $v_c(M)$ (corresponding to disjoint events themselves) instead of $v(M)$. Thus we have the following corollary:

\begin{corollary}
\label{localIdentifiability}
Consider two mixtures of Mallows models
$$M = w_1 M(\phi_1, \pi_1) + \cdots + w_i M(\phi_i, \pi_i) \mbox{ and } M' = w'_1 M(\phi'_1, \pi'_1) + \cdots + w'_{i'} M(\phi'_{i'}, \pi_{i'})$$
where $i, i' \leq k$. Suppose that both mixtures are $(\mu, \alpha)$-non degenerate and set $\epsilon = \frac{\mu^2}{10n^3}$.  If for some parameter $\theta$ we have
$$\|\sum_i w_i v_c(M(\phi_i, \pi_i)) -\sum_{i'} w'_{i'} v_c(M(\phi_{i'}, \pi_{i'}))\|_1  \leq \big( \frac{\epsilon \theta \alpha}{nk} \big)^{(10k)^{6k}}$$
Then $i = i'$ and there is a matching between the components in the two mixtures so that across the matching the components are $\theta$-close in total variation distance and the mixing weights are also $\theta$-close. \end{corollary}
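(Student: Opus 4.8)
The plan is to show that the whole chain of $\ell_1$ lower bounds leading to Lemma~\ref{Identifiability} (and hence to Theorem~\ref{fullIdentifiability}) is essentially unaffected by replacing the full vectorization $v(\cdot)$ with the order-$c$ moment vectorization $v_c(\cdot)$ once $c \geq 10k^2$, and then to re-run the proof of Theorem~\ref{fullIdentifiability} almost verbatim. First I would establish the $v_c$-analogue of Lemma~\ref{Identifiability}: if $M_1,\dots,M_k$ is $\mu$-non degenerate and $\max_i|z_i|\geq 1$, then $\|z_1 v_c(M_1)+\cdots+z_k v_c(M_k)\|_1$ is bounded below by a positive quantity of the same shape as in Lemma~\ref{Identifiability}, namely $\big(\mu^2/(10 n^{g(k)} k)\big)^{g(k)}$ for some $g(k) = \poly(k)$ independent of $n$ and $\mu$. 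The key observation is that every time the proof of Lemma~\ref{Identifiability} --- or those of its antecedents, Lemmas~\ref{preK}, \ref{Kruskal} and \ref{RobustKruskal} --- lower-bounds a quantity of the form $\|\sum_i z_i v(M_i)\|_1$, it does so by exhibiting a \emph{fixed} test functional $f$ with $\|f\|_\infty\leq 1$ such that $\langle f,\sum_i z_i v(M_i)\rangle$ equals the quantity being bounded below; these $f$ are tensor products of the length-two vectors $v_{i,a}$ of the form $(\phi_i/(1+\phi_i),-1/(1+\phi_i))$ and of the unit vectors produced by Lemma~\ref{preK}, pre-composed with the restrict-and-coarsen maps supplied by Corollary~\ref{blockStructure}. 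Crucially each such $f$ reads off only the positions of at most $10k^2$ elements, since all the block structures, ordered block structures and prefixes used in those proofs have total size bounded by this.

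Second, I would record the elementary fact that for $c\geq 10k^2$ any event $E$ determined by the positions of at most $c$ elements satisfies $\Pr_M[E]=\sum_{\xi\in\mathcal{D}}v_c(M)_\xi$ for a disjoint subfamily $\mathcal{D}$ of the coordinates of $v_c(M)$ (marginalise over the remaining positions), and that the conditional distributions $M_i(\pi_i\vert_S)$ of Corollary~\ref{blockStructure} are, by Fact~\ref{block}, images of $v_c(M_i)$ under $\{0,1\}$-matrices with disjoint-support rows whenever $|S|\leq c$. Composing these identities rewrites each functional $f$ above as a functional $g$ of $v_c$ with $\langle g,v_c(M)\rangle=\langle f,v(M)\rangle$ for all $M$; since $g$ is obtained from $f$ only by replacing coordinates with disjoint sums of coordinates, $\|g\|_\infty\leq\|f\|_\infty\leq 1$, so $\|\sum_i z_i v_c(M_i)\|_1 \geq |\langle g,\sum_i z_i v_c(M_i)\rangle| = |\langle f,\sum_i z_i v(M_i)\rangle|$ and the original computation lower-bounds the last expression. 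Two minor adjustments are needed: wherever the $v$-proofs swap a scaling parameter (as in Lemma~\ref{RobustKruskal} and in the first case of Lemma~\ref{Identifiability}) the incurred error $2\,d_{TV}$ becomes $\|v_c(M)-v_c(M')\|_1\leq 2n^c\,d_{TV}(M,M')$, so one shrinks the relevant $|\phi_i-\phi_j|$ window by an extra $\poly(n^c)$ factor; and the probabilities such as $\Pr_{M_i}[\pi\in\mathcal{S}]$ that are factored out still satisfy the same lower bounds, these being statements about the probabilities and not about how one computes them. With the $v_c$-analogue of Lemma~\ref{Identifiability} in hand, the proof of Theorem~\ref{fullIdentifiability} transfers: a cross-pair of components forced together by the failure of non degeneracy of the combined $\leq 2k$-component collection must be close in total variation distance (non degeneracy being phrased in terms of $d_{TV}$), and the peeling step that replaces $w_1 v_c(M_1)-w_1'v_c(M_1')$ by $(w_1-w_1')v_c(M_1)$ changes the $\ell_1$ norm by at most $2n^c$ times the $d_{TV}$-closeness of that pair. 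The exponent $(10k)^{6k}$ in the hypothesis dwarfs every exponent appearing above (all of which are $\poly(k)$ once $c=10k^2$), so it comfortably absorbs all the $\poly(n^c)=n^{O(k^2)}$ losses.

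I expect the main obstacle to be the bookkeeping in the first case of the proof of Lemma~\ref{Identifiability}, where the argument descends to $v(M_i(\pi_i\vert_X))$ on the large set $X=[n]\setminus\{x_1,y_1,\dots,x_{k-j},y_{k-j}\}$ and re-invokes Lemma~\ref{RobustKruskal} (hence Lemma~\ref{Kruskal}), which internally builds yet another block structure and test function inside $X$. One must check that after composing all of these nested test functions with all of the restrict-and-coarsen maps, the resulting single functional of $v_c(M_i)$ still reads off the positions of at most $10k^2$ elements and still has $\ell_\infty$-norm at most one. This holds because the recursion has constant depth, introduces at most $2k$ fresh elements per level (and at most $4k^2$ for the prefix set), and uses only restrict-and-coarsen maps with $\{0,1\}$ entries and disjoint-support rows; composing such maps neither enlarges the set of relevant positions beyond $10k^2$ nor inflates the $\ell_\infty$-norm. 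Everything else is a mechanical substitution of $v_c$ for $v$.
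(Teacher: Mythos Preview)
Your proposal is correct and follows essentially the same approach as the paper. The paper's own argument for this corollary is the brief remark preceding it: all events used in the identifiability proofs are determined by the placement of at most $10k^2$ elements, so their probabilities can be recovered by summing disjoint entries of $v_c(M)$, and hence every $\ell_1$ lower bound carries over. Your sketch expands this into the explicit test-functional language and is more careful about two points the paper glosses over---the $n^c$ blowup when replacing $w_1 v_c(M_1)-w_1' v_c(M_1')$ by $(w_1-w_1')v_c(M_1)$ in the peeling step, and the check that the nested block-structure construction in the first case of Lemma~\ref{Identifiability} still touches at most $10k^2$ elements---but the underlying idea is identical.
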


Next we give an outline of our algorithm: In Claim~\ref{removal} we give an algorithm for finding and removing components with small scaling parameter. The intuition is that such components often generate their own base permutation, so if we take a small number of samples and find all the permutations that occur somewhat frequently, we will have a superset of the base permutations of components with small scaling parameter. We then remove their contribution to the order $c$ moments to generate a list of candidate vectors, at least one of which is close to the true order $c$ moments of the submixture of components without small scaling parameter. In Corollary~\ref{learn2} we give an algorithm that mimics the proof of Lemma~\ref{Kruskal} and Lemma~\ref{Identifiability} but wherever the construction of a test function to isolate a component uses knowledge of the mixture, we guess. The algorithm outputs a list of candidate parameters with the property that accurate estimates of each component in the true mixture appear on the list. We then consider all $k$ tuples of components to form a list of candidate mixtures. Finally in Corollary~\ref{corr:test} we give an algorithm for testing whether a candidate mixture is close to the true mixture. The intuition is we can generate our own samples from a candidate mixture to compute its lower order moments and check whether these are close to the lower order moments of the true mixture. Corollary~\ref{localIdentifiability} tells us that if this check passes then the candidate mixture is indeed component-wise close to the true mixture.

\subsection{Finding Components with Small Scaling Parameters}

When the scaling parameter of a Mallows model is small enough, it generates its own base permutation the majority of the time. Using this intuition, we show that we can take few samples and guess the base permutations of all the components with small scaling parameter and then essentially remove them from the mixture. First we show how to generate a list of few candidates and make precise how small we need the scaling parameter to be:

\begin{fact}\label{fact:list}
Consider a mixture of Mallows models
$M = w_1 M(\phi_1, \pi_1) + \cdots + w_i M(\phi_i, \pi_i) $
where all mixing weights are at least $\alpha$. There is an algorithm that takes $m = \frac{10k\log \frac{1}{\delta}}{\alpha^2}$ samples from $M$ and runs in polynomial 
%in $n$, $k$, $\frac{1}{\alpha}$ and $\log \frac{1}{\delta}$ 
time and outputs a list $L$ of permutations of size at most $\frac{4}{\alpha}$ so that with probability at least $1-\frac{\delta}{3}$ all $\pi_i$ for which $\phi_i <\frac{1}{2n}$ are included in the list.
\end{fact}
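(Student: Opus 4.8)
The plan is to exploit the fact that a Mallows model with a sufficiently small scaling parameter reproduces its own base permutation more than half the time; consequently the base permutation of each such component is generated on a constant fraction of that component's samples, and keeping every permutation that is observed often enough produces the required list.

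First I would prove the deterministic ingredient: if $\phi < \frac{1}{2n}$ then $\Pr_{M(\phi,\pi^*)}[\pi^*] > \frac12$. Running the insertion process of Section~\ref{sec:notation} and multiplying the $n$ normalizing factors gives $Z_n(\phi) = \prod_{j=1}^{n}(1+\phi+\cdots+\phi^{j-1})$, and since $\phi < \frac{1}{2n}$ each factor is at most $\frac{1}{1-\phi} \le \frac{2n}{2n-1}$. Hence $Z_n(\phi) \le \left(\frac{2n}{2n-1}\right)^{n} \le e^{n/(2n-1)} < 2$ for $n \ge 2$, and since the probability of generating $\pi^*$ equals $1/Z_n(\phi)$, this exceeds $\frac12$.

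The algorithm then draws $m = \frac{10k\log(1/\delta)}{\alpha^2}$ samples, records the multiplicity of each observed permutation (polynomial time), and lets $L$ consist of every permutation seen at least $\frac{\alpha m}{4}$ times. The size bound $|L| \le \frac{4}{\alpha}$ is immediate, since at most $m / \frac{\alpha m}{4}$ distinct permutations can clear the threshold. For correctness, fix a component $i$ with $\phi_i < \frac{1}{2n}$: by the previous step each sample from $M$ equals $\pi_i$ with probability at least $w_i \Pr_{M(\phi_i,\pi_i)}[\pi_i] > \frac{w_i}{2} \ge \frac{\alpha}{2}$ (samples contributed by other components only help), so the count $X_i$ of samples equal to $\pi_i$ stochastically dominates $\mathrm{Binomial}\!\left(m, \frac{\alpha}{2}\right)$, which has mean at least $\frac{\alpha m}{2}$. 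A Hoeffding bound gives $\Pr\!\left[X_i < \frac{\alpha m}{4}\right] \le e^{-\alpha^2 m/8} = \delta^{5k/4}$, and a union bound over the (at most $k$) such components makes the total failure probability at most $k\delta^{5k/4} \le \frac{\delta}{3}$, using $k \ge 1$ and that we may assume $\delta$ is below a fixed constant (shrinking $\delta$ only strengthens the conclusion). On the complementary event every $\pi_i$ with $\phi_i < \frac{1}{2n}$ is observed at least $\frac{\alpha m}{4}$ times and therefore belongs to $L$.

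I expect the only step carrying real content to be the first one --- pinning down the threshold $\frac{1}{2n}$ and bounding the partition function $Z_n(\phi)$ away from $2$; everything after that is a routine counting argument together with a single concentration inequality, and the only bookkeeping to watch is matching the Hoeffding exponent and the union-bound loss against the stated sample complexity.
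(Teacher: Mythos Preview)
Your proposal is correct and follows essentially the same approach as the paper: bound $\Pr_{M(\phi_i,\pi_i)}[\pi_i] \geq \tfrac12$ when $\phi_i < \tfrac{1}{2n}$, keep as a list all permutations seen at least an $\tfrac{\alpha}{4}$-fraction of the time, and invoke concentration plus a union bound over the $k$ components. The only cosmetic difference is that the paper bounds the base-permutation probability via $(1-\phi_i)^n \ge \tfrac12$ directly and leaves the concentration step implicit, whereas you bound $Z_n(\phi)$ explicitly and spell out the Hoeffding calculation; the arguments are otherwise identical.
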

\begin{proof}
Suppose $\phi_i < \frac{1}{2n}$. Then the probability that a draw from $M(\phi_i, \pi_i)$ is $\pi_i$ is at least $$w_i(1-\phi_i)^n \geq \frac{w_i}{2} \geq \frac{\alpha}{2}$$ Now we take $m$ samples from $M$ and add to the list all permutations that appear at least $\frac{\alpha}{4}$ fraction of the time.  There are at most $\frac{4}{\alpha}$ such permutations and with $1- \frac{\delta}{3}$ probability, every $\pi_i$ with $\phi_i \leq \frac{1}{2n}$ is added to the list. 
\end{proof}

Now we show how to essentially remove these components from the mixture. More precisely, we guess which permutations on the list are the base permutations of components with small scaling parameter, then grid search over their mixing weights and scaling parameters. We use these estimates to remove the contribution of the components with small scaling parameter from the order $c$ statistics in $v_c(M)$. 

\begin{claim}\label{removal}
Consider a mixture of Mallows models
$M = w_1 M(\phi_1, \pi_1) + \cdots + w_i M(\phi_i, \pi_i) $
where all mixing weights are at least $\alpha$. Suppose that $\phi_1, \dots , \phi_j<\frac{1}{2n}$ and $\phi_{j+1}, \dots , \phi_k \geq \frac{1}{2n}$. Let $c=10k^2$. There is an algorithm that takes $$m = \Big( \frac{nk\log \frac{1}{\delta}}{\epsilon \theta \alpha} \Big)^{(10k)^{8k}}$$
samples and runs in polynomial time and outputs a list of vectors $v'$ of polynomial size so that with probability at least $1- \frac{\delta}{3}$ at least one $v'$ satisfies
$$\|v' - \sum_{a = j+1}^k w_a v_c(M_a) \|_1 \leq \frac{1}{k}\bigg( \frac{\epsilon\alpha}{2nk} \bigg)^{60k^4}$$

\end{claim}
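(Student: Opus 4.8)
The plan is to isolate the small-scaling components in two stages: first guess their centers and remaining parameters, then subtract their contribution to the order-$c$ moments from an empirical estimate of $v_c(M)$. Run the algorithm of Fact~\ref{fact:list} with its failure probability set to $\delta/6$ (adjusting the constant in its sample bound); since a $c$-element placement is a function of the whole permutation and each $\phi_a<1/(2n)$ for $a\le j$, this returns in polynomial time a list $L$ of at most $4/\alpha$ permutations that, with probability $1-\delta/6$, contains every center $\pi_a$ with $\phi_a<1/(2n)$. We do not know which permutations of $L$ are genuine small-component centers, how many components share a center, or what their parameters are, so we enumerate over every sub-multiset $\Sigma=(\sigma_1,\dots,\sigma_t)$ of $L$ with $t\le k$ — there are $\binom{|L|+k}{k}=\poly(1/\alpha)$ of them for constant $k$, and multiplicities are allowed only to handle two small components sharing a center — and over every point of an $\eta$-grid on $[\alpha,1]^t\times[0,1/(2n))^t$ for a granularity $\eta$ fixed below. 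For each guess $(\Sigma,\tilde w,\tilde\phi)$ we output
$$v'=\widehat v_c(M)-\sum_{a=1}^{t}\tilde w_a\,v_c\bigl(M(\tilde\phi_a,\sigma_a)\bigr),$$
where $\widehat v_c(M)$ is the empirical order-$c$ moment vector built from $m$ samples and $v_c(M(\tilde\phi_a,\sigma_a))$ is computed, or estimated to sufficient precision by drawing auxiliary samples from this fully known model, in polynomial time. The output list has size $\poly(1/\alpha)\cdot\eta^{-2k}$, which is polynomial once $1/\eta$ is polynomial.

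Now fix the guess for which $\Sigma$ equals, as a multiset, the set of small-component centers and each $(\tilde w_a,\tilde\phi_a)$ lies within $\eta$ of the corresponding true $(w_a,\phi_a)$. By linearity of moments $v_c(M)-\sum_{a\le j}w_a v_c(M_a)=\sum_{a=j+1}^{k}w_a v_c(M_a)$, so $v'-\sum_{a=j+1}^{k}w_a v_c(M_a)$ splits into the sampling error $\widehat v_c(M)-v_c(M)$, the sum over $a\le j$ of $w_a v_c(M_a)-\tilde w_a v_c(M(\tilde\phi_a,\sigma_a))$, and the (negligible, auxiliary-sampled) estimation error of the $v_c(M(\tilde\phi_a,\sigma_a))$. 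For the first, $v_c(M)$ has $N\le n^{2c}$ entries in $[0,1]$, so by Hoeffding and a union bound there is an absolute constant $C$ so that $m\ge C\,N^2\log(N/\delta)/\xi^2$ samples make $\|\widehat v_c(M)-v_c(M)\|_1\le\xi$ with probability $1-\delta/6$. For the second, using $\|v_c(P)\|_1=\binom{n}{c}\le n^c$ (for a fixed $c$-set the placement probabilities sum to one) and $\|v_c(P)-v_c(Q)\|_1\le 2\binom{n}{c}\,d_{TV}(P,Q)$, each summand is at most $\eta n^c+2n^c\,d_{TV}(M(\phi_a,\sigma_a),M(\tilde\phi_a,\sigma_a))$, and by Lemma~\ref{TVbound} the last factor is at most $\mu'$ whenever $|\phi_a-\tilde\phi_a|\le\mu'^2/(10n^3)$. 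Taking $\xi$ to be a third of the target $\frac1k\bigl(\frac{\epsilon\alpha}{2nk}\bigr)^{60k^4}$ and $\eta$ a correspondingly small polynomial in $\xi/(kn^c)$ and $1/n$ — and recalling $c=10k^2$ and $\epsilon=\mu^2/(10n^3)$, so $1/\xi,1/\eta=\poly(n,1/\mu,1/\alpha)$ of constant degree in $k$ — makes $\|v'-\sum_{a>j}w_a v_c(M_a)\|_1$ at most the target. The resulting sample requirement is dominated by $(nk/(\epsilon\alpha))^{O(k^4)}\log(1/\delta)$, comfortably below the stated $m=\bigl(\frac{nk\log(1/\delta)}{\epsilon\theta\alpha}\bigr)^{(10k)^{8k}}$, and $1/\eta$ polynomial makes the output list polynomial. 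The total failure probability is $\delta/6+\delta/6=\delta/3$.

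The main obstacle is the bookkeeping in the second paragraph: converting grid granularity in $(\phi,w)$-space into an $\ell_1$ guarantee in the much larger $v_c$-space — via the crude bound $\|v_c(P)\|_1=\binom{n}{c}$ together with Lemma~\ref{TVbound} for the scaling-parameter perturbation — while simultaneously keeping $1/\eta$, and hence the candidate list, polynomial. The remaining steps (the list of centers from Fact~\ref{fact:list}, the empirical concentration of $\widehat v_c(M)$, and the linearity of moments) are routine.
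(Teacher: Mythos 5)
Your proposal is correct and follows essentially the same route as the paper: generate the candidate centers via Fact~\ref{fact:list}, empirically estimate $v_c(M)$, grid-search the small components' weights and scaling parameters, subtract their guessed contribution, and control the error by the triangle inequality together with Lemma~\ref{TVbound} and the crude $\binom{n}{c}$ bound on $\|v_c(\cdot)\|_1$. Your write-up is in fact somewhat more careful than the paper's (explicit multiset enumeration to allow repeated centers, and an unambiguous definition of the output $v'$ as the empirical moments minus the guessed contribution), but these are refinements of the same argument, not a different one.
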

\begin{proof}
First run the algorithm in Fact~\ref{fact:list} to generate a candidate list of base permutations. Next from samples from $m$, we can form an empirical estimate $v$ for $v_c(M)$ which satisfies
$$\|v - v_c(M)\|_1 \leq \frac{1}{k^2}\bigg(\frac{\epsilon\alpha}{2nk} \bigg)^{60k^4}$$  Finally set 
$\gamma = \big( \frac{\epsilon \theta \alpha}{nk \log \frac{1}{\delta}} \big)^{(10k)^{10k}}$
and search over the $\gamma$-grid of all possible mixing weights and scaling parameters for each permutation on the list. For each possibility $w'_1 M(\phi'_1, \pi'_1), \cdots, w'_j M(\phi'_j, \pi'_j)$ we can estimate their contribution 
$$v' = \sum_{a = 1}^j w'_a v_c(M(\phi'_a, \pi'_a))$$
The key point is if $|w_a - w_a'| \leq \gamma, |\phi_a - \phi_a'| \leq \gamma, \pi_a' = \pi_a$ for $1 \leq a \leq j$ then using Lemma~\ref{TVbound} we have
$$\|w'_a v_c(M(\phi'_a, \pi'_a)) - w_a v_c(M(\phi_a, \pi_a))\|_1 \leq \binom{n}{c}\Big ( (10n^3|\phi_a - \phi_a'|)^{1/2}+|w_a - w_a'| \Big) \leq  n^c(10n^3\gamma)^{1/2} $$
and consequently
$$\|v' - \sum_{a = j+1}^k w_a v_c(M_a) \|_1 \leq  \|v' - \sum_{a=1}^j w_a v_c(M(\phi_a, \pi_a))\|_1  + \|v - v_c(M)\|_1 \leq \frac{1}{k}\bigg( \frac{\epsilon\alpha}{2nk} \bigg)^{60k^4}$$
which completes the proof. 
\end{proof}

\subsection{Finding a Single Component}

In this subsection, we focus on the problem of recovering a single component from a mixture of Mallows models. Our algorithms will closely parallel the polynomial identifiability results in Lemma~\ref{Kruskal} and Lemma~\ref{Identifiability}. More precisely, we will modify the test functions we used in those results to turn them into algorithms for isolating a single component, first focusing on the case when all the scaling parameters are the same and then the general setting where they can be different.

\begin{lemma}\label{SpecialLearn}
Consider a collection of $k$ Mallows models $M(\phi, \pi_1), \cdots, M(\phi, \pi_k)$ with distinct base permutations and where $\phi \leq 1 - \epsilon$. Suppose $\phi$ is known and we are given a vector $v$ with
$$\|v - \sum_i z_i v_c(M(\phi, \pi_i)) \|_1 \leq \Big(\frac{\epsilon^{k}}{nk}\Big)^{4k}$$
and a constant $c \geq 10k^2$. Finally suppose $z_1 \geq 1$. There is a polynomial time algorithm to output a list of at most $n^{4k}((2k)!)^k$ permutations that contains $\pi_1$.  
\end{lemma}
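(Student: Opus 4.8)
The plan is to mimic the proof of Lemma~\ref{Kruskal}, but replace the steps where we \emph{know} the base permutations $\pi_2, \dots, \pi_k$ with a brute-force guess over a small set of possibilities. Recall that in Lemma~\ref{Kruskal} we built a block structure $\mathcal{B} = S_1, \dots, S_j$ with $|S_1| + \cdots + |S_j| \le 2k$ by, for each $i \ne 1$, picking two consecutive elements $x_i, x_i+1$ of $\pi_1$ that are inverted in $\pi_i$; we then formed the test function $v_1 \otimes \cdots \otimes v_j$ where $v_b$ is (a normalization of) the projection of $v(M(\phi, \pi_1\vert_{S_b}))$ onto the orthogonal complement of $\{ v(M(\phi, \pi_i\vert_{S_b})) : i \ne 1\}$. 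The crucial observation is that all the data needed to build this test function is: (i) the identity of the at most $2k$ elements appearing in $\mathcal{B}$ and how they are partitioned into blocks, and (ii) for each block $S_b$, the relative order $\pi_1\vert_{S_b}$ and the collection of ``forbidden'' relative orders $\{\pi_i\vert_{S_b}\}_{i\ne 1}$. All of this is combinatorial data on at most $2k$ elements, so there are only $\mathrm{poly}(n)$-to-the-$O(k)$ many possibilities, which we can enumerate.

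Concretely, I would first enumerate all block structures $\mathcal{B}$ on subsets of $[n]$ whose blocks have total size at most $2k$: there are at most $n^{2k}$ ways to pick the underlying elements and at most $((2k)!)$-ish ways to partition and order them, so $n^{2k}\cdot(2k)!$ choices, matching the claimed list size (an extra $((2k)!)^k$ factor absorbs the guesses of the forbidden relative orders within blocks). For each guessed $\mathcal{B}$ we also guess, for each block $S_b$, a target relative order $\rho_b$ (the guess for $\pi_1\vert_{S_b}$) and for each $i \ne 1$ a relative order on $S_b$ to be killed; equivalently it suffices to guess, for each block, the orthogonal direction $v_b$ directly, since $v_b$ only depends on $O(k)$ permutations of $\le 2k$ elements — there are at most $((2k)!)^k$ such $v_b$'s per block. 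Then we form $T = v_1 \otimes \cdots \otimes v_j$, extract from the input vector $v$ the restriction to coordinates of the order-$c$ moment tensor corresponding to ``$\pi$ satisfies $\mathcal{B}$ and induces order $\pi_b$ on $S_b$'' (this is where $c \ge 10k^2 \ge 2k$ is used, exactly as in the passage before Corollary~\ref{localIdentifiability}), and compute $\langle T, (\text{this restriction})\rangle$. For the correct guess, this pairing equals $z_1 \Pr_{M_1}[\pi \in \mathcal{S}_\mathcal{B}]\langle T, v(M(\phi,\pi_1\vert_{S_1}))\otimes\cdots\rangle$ up to the error term, which by Lemma~\ref{fix} and property~(2) of Lemma~\ref{Kruskal}'s test vectors is at least $\frac{1}{n^{4k}}\cdot\frac{\epsilon^{2k^2}}{(k+1)^{k^2+2k}}$ minus the error $(\epsilon^k/(nk))^{4k}$, hence bounded below by a fixed $\mathrm{poly}$-size quantity $\beta := \frac{1}{2n^{4k}}\frac{\epsilon^{2k^2}}{(k+1)^{k^2+2k}}$, whereas for \emph{every} guess the pairing with $T_i$, $i\ne 1$, is $0$ when that guess is simultaneously orthogonal to all other components' restricted tensors. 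The algorithm then outputs, for each guess whose pairing exceeds (say) $\beta/2$, a candidate for $\pi_1$: given the correct $\mathcal{B}$, the single test function already pins down $\pi_1\vert_{S_b}$ for each $b$, but not all of $\pi_1$; so to recover $\pi_1$ exactly we actually run this as a \emph{subroutine} that, for each of the $n^{4k}((2k)!)^k$ guesses, produces one candidate permutation (by a further round of guessing the global order, or — cleaner — by noting the lemma as stated only asks for a list containing $\pi_1$, so it suffices to output, per guess, the lexicographically-first permutation consistent with the recovered relative orders, and argue $\pi_1$ is among the full list over all guesses once we additionally guess $\pi_1$'s order on the $\le 2k$ relevant coordinates plus enough structure).

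The main obstacle is the bookkeeping around \emph{what exactly the list contains}: a single correct guess of the block structure does not by itself identify $\pi_1$ — it only identifies $\pi_1$ restricted to the $\le 2k$ elements in $\mathcal{B}$. Resolving this cleanly requires either (a) observing that we will call this lemma iteratively/with different guessed block structures so that across all guesses every relevant relative order of $\pi_1$ is recovered and can be stitched together, or (b) strengthening the guess to include enough of $\pi_1$'s structure that the list size bound $n^{4k}((2k)!)^k$ still holds — here the $n^{4k}$ comes from $n^{2k}$ (choice of block elements) times another $n^{2k}$-type factor, and $((2k)!)^k$ from the within-block orderings and the per-block orthogonal directions. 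I expect the paper threads this by having the single-component subroutine output a list of partial constraints and deferring the final assembly of $\pi_1$ to the driver algorithm; the one genuinely delicate quantitative point to check is that the separation $\beta$ between ``correct guess'' ($\ge \beta$) and ``bad guess that happens not to be orthogonal'' is maintained — i.e. that any $T$ which fails to annihilate some $T_i$ contributes a term of magnitude incomparably larger than the input error $(\epsilon^k/(nk))^{4k}$, so that thresholding at $\beta/2$ is safe — which follows because the input-error exponent $4k$ is chosen large relative to the $2k^2$ and $k^2+2k$ exponents governing $\beta$ only for... actually one must double-check this and may need to shrink the allowed input error; I would verify that inequality first before writing anything else.
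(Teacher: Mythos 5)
Your setup --- guessing the block structure $\mathcal{B}$ on at most $2k$ elements together with the forbidden relative orders, and forming the test function $v_1\otimes\cdots\otimes v_j$ from Lemma~\ref{Kruskal} --- matches the paper. But the obstacle you flag at the end is not a bookkeeping issue; it is the actual content of the lemma, and the paper resolves it with a device that is absent from your proposal. Namely, after fixing $\mathcal{B}$, the paper picks an \emph{auxiliary pair} $x,y$ outside $\mathcal{B}$ and splits the block-structure event into two: ``$\pi$ satisfies $\mathcal{B}$ and $\pi(x)<\pi(y)$'' versus ``$\pi$ satisfies $\mathcal{B}$ and $\pi(y)<\pi(x)$,'' giving tensors $T_{i,1}$ and $T_{i,2}$ with scalars $p_{i,1},p_{i,2}$. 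Pairing both with the \emph{same} test function annihilates all components $i\neq 1$, and for the first component the observation that swapping $x$ and $y$ changes the inversion count by at least one gives $p_{1,2}\le \phi\, p_{1,1}$, hence $p_{1,1}-p_{1,2}\ge \frac{\epsilon}{2n^{4k}}$ by Lemma~\ref{fix}. Comparing $\langle v_1\otimes\cdots\otimes v_j, T_1\rangle$ with $\langle v_1\otimes\cdots\otimes v_j, T_2\rangle$ (both computed from the input $v$) therefore reveals whether $x$ precedes $y$ in $\pi_1$, up to the stated $\ell_1$ error. Running this comparison over all pairs $x,y\notin\mathcal{B}$ recovers $\pi_1$ restricted to $[n]\setminus\mathcal{B}$ in full, and the positions of the remaining $\le 2k$ elements are brute-forced; that is where the list size $n^{4k}((2k)!)^k$ comes from. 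This is why $c\ge 10k^2$ suffices: each comparison only looks at the joint placement of at most $2k+2$ elements.

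Two further points. First, your suggested fixes do not substitute for this mechanism: stitching partial orders across different guessed block structures would require certifying \emph{which} guesses are correct, and your thresholding idea cannot do that --- an incorrect guess can produce a large pairing without being the truth. The paper never certifies a guess; it simply outputs one candidate permutation per guess and relies on the correct guess contributing $\pi_1$ to the list. Second, on the quantitative worry you defer: the relevant comparison is between the gap $(p_{1,1}-p_{1,2})\prod_b \frac{1}{|S_b|!}\bigl(\frac{\epsilon^{|S_b|}}{\sqrt{|S_b|!}}\bigr)^k \ge \frac{\epsilon}{2n^{4k}}\cdot\frac{\epsilon^{2k^2}}{(k+1)^{k^2+2k}}$ and the input error $(\epsilon^k/(nk))^{4k}=\epsilon^{4k^2}/(nk)^{4k}$; the extra factor $\epsilon^{2k^2-1}$ in the error makes the comparison go through, so no shrinking of the allowed input error is needed.
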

\begin{proof} 
In the proof of Lemma~\ref{Kruskal} we showed that if the base permutations are known, there is a procedure for constructing an ordered block structure $\mathcal{B} = S_1, \cdots, S_j$ containing at most $2k$ elements so that $\pi_1$ satisfies $\mathcal{B}$ and any other $\pi_j$ does not satisfy it even as a order structure. (In the proof of Lemma~\ref{Kruskal} we treated $\mathcal{B}$ as a block structure but here we will use it slightly differently.) Here we guess such a $\mathcal{B}$ along with the relative orderings of each of $S_1, \cdots, S_j$ in each of $\pi_2, \cdots, \pi_k$. We will give an algorithm which, if each of these guesses are correct, outputs $\pi_1$. 

First we choose two more elements $x$ and $y$ not in $\mathcal{B}$. Now define the $|S_1|! \times \cdots \times |S_j|!$ tensor $T_{i, 1}$. Each entry corresponds to an ordering of $S_1, \cdots, S_j$ and in it we put the probability that a ranking drawn from $M(\phi, \pi_1)$ has these relative orderings, satisfies $\mathcal{B}$ and also $x$ is ranked higher than $y$. Similarly we define $T_{i, 2}$ except that $y$ is ranked higher than $x$. Note that the entries of $T_{i, 1}$ and $T_{i, 2}$ can be efficiently constructed from $v_c(M(\phi, \pi_1))$ because they only depend on the joint distribution of a set of elements of size at most $c$.
Let $p_{i,1} = \Pr_{M_i}[\pi \in \mathcal{S}_{\mathcal{B}} \mbox{ and } \pi(x) < \pi(y)]$. Then from Corollary~\ref{blockStructure} we have that
$$T_{i, 1} = p_{i, 1} v(M(\phi, \pi_i \vert_{S_1})) \otimes \cdots \otimes v(M(\phi, \pi_i \vert_{S_j}))
$$
and similarly for $T_{i, 2}$ where $p_{i, 2}$ is defined analogously, but where $y$ is higher ranked than $x$. From Lemma~\ref{fix}, we have that $p_{1, 1}+p_{1, 2} \geq \frac{1}{n^{4k}}$.

Because we have assumed we guessed the relative ordering of each $S_a$ in each $\pi_i$ correctly, we can repeat the construction in Lemma~\ref{Kruskal} to construct unit vectors $v_1, \cdots, v_j$ that satisfy
\begin{enumerate}

\item[(1)] $\langle v_a, v(M(\phi, \pi_i \vert_{S_a})) \rangle = 0$ whenever $\pi_i \vert_{S_a} \neq \pi_1 \vert_{S_a} $ and

\item[(2)] $\langle v_a, v(M(\phi, \pi_1 \vert_{S_a})) \rangle \geq \frac{1}{|S_a|!}(\frac{\epsilon^{|S_a|}}{\sqrt{|S_a|!}})^k$ for all $a$
\end{enumerate}
Now the same way $T_{i, 1}$ and $T_{i, 2}$ can be efficiently constructed from $v_c(M(\phi, \pi_1))$, we can construct $T_1$ and $T_2$ from the estimate $v$. It follows that
$$\|v - \sum_i z_i v_c(M(\phi, \pi_i)) \|_1 \geq \|T_1 - \sum_i z_i T_{i, 1}\|_1  \geq | \langle v_1 \otimes \cdots \otimes v_j, T_1 \rangle - z_1 \langle v_1 \otimes \cdots \otimes v_j, T_{1, 1} \rangle | $$
and similarly for $| \langle v_1 \otimes \cdots \otimes v_j, T_2 \rangle - z_1 \langle v_1 \otimes \cdots \otimes v_j, T_{1, 2} \rangle | $.

Now if $x$ is higher ranked than $y$ in $\pi_1$, we have $p_{1, 2} \leq \phi p_{1, 1}$ because interchanging the positions of $x$ and $y$ reduces the number of inversions relative to $\pi_1$ by at least one. Thus $p_{1,1} - p_{1, 2} > \frac{\epsilon}{2n^{4k}}$ from which we conclude
$$z_1 \langle v_1 \otimes \cdots \otimes v_j, T_{1, 1} \rangle - z_1 \langle v_1 \otimes \cdots \otimes v_j, T_{1, 2} \rangle  \geq (p_{1,1}-p_{1,2})\prod_{a=1}^{j} \frac{1}{|S_a|!}\Big(\frac{\epsilon^{|S_a|}}{\sqrt{|S_a|!}}\Big)^k > \Big(\frac{\epsilon^{k}}{nk}\Big)^{4k}$$
And if $y$ is ranked higher then the inequality above holds with $T_{1, 1}$ and $T_{1,2}$ as well as $p_{1, 1}$ and $p_{1,2}$ interchanged. Thus we can deduce that $x$ is ranked higher than $y$ in $\pi_1$ if $$\langle v_1 \otimes \cdots \otimes v_j, T_1 \rangle > \langle v_1 \otimes \cdots \otimes v_j, T_2 \rangle$$ and if not then $y$ is ranked higher.  We can then repeat this for all pairs of elements $x$ and $y$ to recover the relative ordering of all elements outside of $\mathcal{B}$ and then we can guess all possible positions of the at most $2k$ elements in $\mathcal{B}$, which completes the proof. 
\end{proof}

Next we give an algorithm for isolating a single component when the scaling parameters are allowed to be different. In addition to the usual assumptions, we will assume that each scaling parameter is at least $\frac{1}{2n}$, since the algorithm in Claim~\ref{removal} allows us to remove such components from our estimates of the order $c$ moments. 

\begin{lemma}\label{learn1}
Consider a mixture of $k$ Mallows models
$M = w_1 M(\phi_1, \pi_1) + \cdots + w_k M(\phi_k, \pi_k) $ where for each $i$, $\alpha \leq w_i$ and the total variation distance between any two components is at least $\mu$. Furthermore suppose that for all $i$, $\frac{1}{2n} < \phi_i < 1- \epsilon$ where $\epsilon = \frac{\mu^2}{10n^3}$. Let $c=10k^2$ and $\theta$ be the target accuracy. Suppose we are given a vector $v$ with 
$$\|v- \sum_i w_i v_c(M(\phi_i, \pi_i))\|_1 \leq \Big( \frac{\epsilon\alpha}{2nk} \Big)^{60k^4}$$  
There is a polynomial time algorithm to output a list of candidate parameters $(w, \phi, \pi)$ so that for some $\ell$ there is at one entry in the list that is $\theta$-close \---- i.e. it satisfies $|w-w_\ell | \leq \theta, |\phi-\phi_\ell| \leq \theta$ and $ \pi = \pi_\ell$. 
\end{lemma}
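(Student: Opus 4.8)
The plan is to reduce Lemma~\ref{learn1} to Lemma~\ref{SpecialLearn} by the same device used to pass from Lemma~\ref{Kruskal} to Lemma~\ref{Identifiability}: handle the components whose scaling parameters are close to $\phi_1$ by treating them as if they had a common scaling parameter, and handle the components whose scaling parameters are far from $\phi_1$ by using the same $\{x_a,y_a\}$-pair test functions as in the first case of Lemma~\ref{Identifiability}, except that now we must \emph{guess} the combinatorial data (the block structure, the relative orders, and which pairs to use) rather than read it off from the known parameters. Since $k$ is constant there are only polynomially many guesses, so we can afford to run the procedure for each guess and collect all outputs into one list; the correctness claim is only that the list contains an accurate estimate of one true component.

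In more detail, I would proceed as follows. First, since $\max_i w_i \geq \alpha$ and we may relabel, assume $w_1 \geq \alpha$; our goal is to produce $\pi_1$ exactly and $(w_1,\phi_1)$ to additive $\theta$ on the list. Guess $\phi_1$ up to the fine grid $\gamma$ of Claim~\ref{removal} (this is where we introduce the $\theta$-slack in $\phi$). Now partition the other components into those with $|\phi_i-\phi_1|$ small and those with $|\phi_i-\phi_1|$ large, relative to the threshold $\tfrac{1}{160 n^{8k+5}}\tfrac{\epsilon^{4k^2}}{(k+1)^{2k^2+4k}}$ from Lemma~\ref{RobustKruskal}; we don't know this partition, so we guess it too. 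For the ``far'' components, exactly as in the first case of Lemma~\ref{Identifiability} we guess, for each far component $i$, a pair $(x_i,y_i)$ of elements consecutive in $\pi_1$ that is inverted in $\pi_i$ — except we don't know $\pi_1$, so we guess these $O(k)$ pairs of elements outright, together with the relative order $x_i$ vs.\ $y_i$ in each of the far $\pi_i$'s; this lets us build the rank-one tensors $T_i$ and the test tensor $T = v_{j+1,1}\otimes\cdots\otimes v_{k,k-j}$ whose contraction kills the far components. Because $\phi_i > \tfrac{1}{2n}$ and $w_1 \geq \alpha$, the event defining $\mathcal S$ (those $2(k-j)$ elements appearing first in the guessed order) has probability bounded below by an $(\epsilon\alpha/nk)^{O(k^3)}$-type quantity under $M_1$, matching the error tolerance on $v$. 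Contracting $v$ against $T$ then gives, up to the allowed $\ell_1$ error, a vector of the form $\sum_{i\le j}(\text{nonzero scalar})\, w_i\, v(M(\phi_i,\pi_i\vert_X))$; since the ``near'' components all have scaling parameters within the Lemma~\ref{RobustKruskal} threshold of $\phi_1$ and the coefficient on the $\pi_1$-term is bounded below, this is exactly the hypothesis of Lemma~\ref{SpecialLearn} (applied with $n$ replaced by $|X|\ge n-2k$, scaling parameter the guessed $\phi_1$, and the $\mu$-nondegeneracy giving distinct base permutations on $X$ by the argument already used in Lemma~\ref{Identifiability}). Feeding the contracted vector into the algorithm of Lemma~\ref{SpecialLearn} yields a short list containing $\pi_1\vert_X$, and then we guess the $O(k)$ remaining positions to recover all of $\pi_1$; finally we grid-search $w_1$ to additive $\theta$. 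Over all guesses this produces a list of polynomial size, one of whose entries is $\theta$-close to $(w_1,\phi_1,\pi_1)$.

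The main obstacle, and the part I expect to require the most care, is the bookkeeping of error tolerances: we need the $\ell_1$-error bound on $v$ (namely $(\epsilon\alpha/2nk)^{60k^4}$) to survive the whole chain of operations — contraction against $T$ (which is a weighted sum of coordinates and does not increase $\ell_1$), restriction to the $\mathcal S$-block tensor, division by $\langle T, T_1\rangle$ and by $\Pr_{M_1}[\pi\in\mathcal S]$ (both bounded below by explicit powers of $\epsilon,\alpha,1/n,1/k$ using Lemma~\ref{fix} and $\phi_i>\tfrac{1}{2n}$) — and still come out below the threshold $(\epsilon^k/nk)^{4k}$ that Lemma~\ref{SpecialLearn} demands, possibly after absorbing the $w_i$ weights and the $\binom{n}{c}$-type factors relating $v_c$ to $v$. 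A secondary subtlety is verifying that the distinctness of the $\pi_i\vert_X$ really does follow from $\mu$-nondegeneracy after removing the small-$\phi$ components and the $O(k^2)$ elements used for block structures and pairs — this is the reason for the hypothesis $n \ge 10k$ and is handled exactly as in the ``far/near'' split of Lemma~\ref{Identifiability}, but it needs to be stated. Everything else — the guessing being polynomially bounded for constant $k$, the orthogonality of $v_{i,a}$ against $v(M(\phi_i,\pi_i\vert_{\{x_a,y_a\}}))$, and the lower bound $|\langle v(M(\phi_1,\pi_1\vert_{\{x_a,y_a\}})), v_{i,a}\rangle|\ge |\phi_1-\phi_i|/4$ — is routine and copied verbatim from the proofs of Lemmas~\ref{Kruskal} and \ref{Identifiability} and Corollary~\ref{blockStructure}.
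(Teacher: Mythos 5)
Your overall architecture matches the paper's: guess the combinatorial data (the far/near split of the scaling parameters, the inverted pairs $(x_a,y_a)$, the relative orders), build the annihilating tensor $T$ from the vectors $v'_{i,a}$, contract $v$ against it to kill the far components, and feed the result into Lemma~\ref{SpecialLearn} after pretending the near components share one scaling parameter. But there is one step that fails as written: you insist on isolating component $1$ and normalizing the contracted vector by its coefficient $z_1 = \Pr_{M_1}[\pi\in\mathcal S]\, w_1 \langle T, T_1\rangle$. The dominant error you must control after normalization is not the statistical error in $v$ (which survives division by $z_1 \gtrsim (\epsilon\alpha/2nk)^{50k^4}$ just fine) but the error from replacing each near $\phi_i$ by the common value: that error is $\sum_{i\le j}|z_i|\cdot\|v_{c'}(M(\phi_i,\pi_i\vert_X)) - v_{c'}(M(\phi_*,\pi_i\vert_X))\|_1 \approx \max_i|z_i|\cdot\bigl(\tfrac{(\epsilon\alpha)^k}{nk}\bigr)^{8k^2}$, so after dividing by $z_1$ it carries a factor $\max_i|z_i|/z_1$. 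Since $\Pr_{M_i}[\pi\in\mathcal S]$ for another component can be close to $1$ while $\Pr_{M_1}[\pi\in\mathcal S]$ is only $\phi_1^{8k^3}/n^{2k}$, this ratio can be as large as $(2nk/\epsilon\alpha)^{50k^4}$, which swamps the $(\epsilon^k/nk)^{4k}$ tolerance of Lemma~\ref{SpecialLearn}. Tightening or loosening the near/far threshold does not escape this, because $z_1$ itself scales like the $k$-th power of that threshold while the replacement error scales like its square root.

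The paper's fix is to additionally guess the index $\ell$ at which $|z_\ell|$ is maximized and normalize by $z_\ell$, so the problematic ratio becomes $j$ and the algorithm recovers $\pi_\ell\vert_X$ rather than $\pi_1\vert_X$ — which is exactly why the lemma only promises a $\theta$-close entry for \emph{some} $\ell$, not for a component of your choosing. You should restate your goal accordingly. A secondary omission: to build the orthogonal test vectors $v'_{i,a}$ for the far components you need their scaling parameters too, so you must grid-search all of $\phi_1',\dots,\phi_k'$ (and this is also where the guessed $w'_\ell,\phi'_\ell$ reported in the output come from), not just $\phi_1$. The remaining bookkeeping you flag (the lower bound on $\Pr_{M_1}[\pi\in\mathcal S]$ via the rank-$4k^2$ pairs, distinctness of the $\pi_i\vert_X$, and the polynomial guess count) is handled as you describe.
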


\begin{proof} 
Throughout this proof set $\beta = \min(\theta, ( \frac{\epsilon\alpha}{2nk} )^{150k^4})$. First we search over the $\beta$-grid of all possible mixing weights and scaling parameters. We will give an algorithm that, if all our guesses are correct in the sense that all these parameters are within $\beta$ of their true values, outputs a list of candidate parameters that contains an entry that is $\theta$-close to at least one component. Suppose that our guesses are $\phi_1', \cdots , \phi_k'$ and $w_1', \cdots w_k'$. 

Now consider all $\phi_i'$ such that $|\phi_i'-\phi_1'| > (\frac{(\epsilon\alpha)^{k}}{nk})^{40k^2}$ and suppose without loss of generality that these are $\phi_{j+1}', \cdots, \phi_k'$.
In the proof of Lemma~\ref{Identifiability} we showed that based on the guesses $\phi_1', \cdots , \phi_k'$ and if the base permutations are known, for each $j+1 \leq i \leq k$ it is possible to find two elements $(x_{i-j}, y_{i-j})$ that are consecutive in $\pi_1$ and occur in the opposite order in $\pi_i$. Moreover all of these elements have rank at most $4k^2$. Finally if we set $X = [n] \backslash \{ x_1,y_1, \cdots , x_{k-j}, y_{k-j} \}$, then $\pi_1 \vert_X, \cdots , \pi_j \vert_X$ are all distinct (we accomplished this by building a collection of ordered block structures, but here we do not explicitly need them). Here we guess all of these pairs of elements for each $j+1 \leq i \leq k$. 

We will follow the proof of Lemma~\ref{Identifiability} to construct test functions that can isolate a single component. Let $M'_i = M(\phi'_i, \pi_i)$. If we were given $v_c(M'_i)$ we could construct a tensor of the form
$$S_i = \Pr_{M'_i}[x \in \mathcal{S}] T_i \otimes v_{c-2(k-j)}(M(\phi'_i, \pi_i \vert_X))$$
where we set $T_i  = v(M(\phi'_i, \pi_i \vert_{\{x_1,y_1 \}})) \otimes \cdots \otimes v(M(\phi'_i, \pi_i \vert_{\{x_{k-j},y_{k-j} \}}))$ and $\mathcal{S}$ is the set of permutations where the first $2(k-j)$ elements are $x_1,y_1, \cdots, x_{k-j},y_{k-j}$ in that order except up to possibly reversing the order of some pairs $x_i$ and $y_i$. Now we can repeat the construction in Lemma~\ref{Identifiability} to find vectors $v'_{i,a}$ (using $\phi_1', \cdots , \phi_k'$ instead of $\phi_1, \cdots , \phi_k$ e.g. so that now $v(M(\phi_i', \pi_i \vert_{\{x_a, y_a\}}))$ and $v'_{i, a}$ are orthogonal). Now define the tensor
$$T = v'_{j+1, 1} \otimes v'_{j+2, 2} \otimes \cdots \otimes v'_{k, k-j}$$

With this construction in hand, we can compute
$$\langle T , \sum_i w'_i S_i \rangle = \sum_{i =1}^j z_i v_{c-2(k-j)}(M(\phi'_i, \pi_i \vert_X)) \mbox{ where } z_i = \Pr_{M'_i}[x \in \mathcal{S}] w'_i \langle T, T_i \rangle $$
The idea now is to use the algorithm in Lemma~\ref{SpecialLearn} since each of the remaining components have scaling parameters that are close. To accomplish this, we need to lower bound $z_1$ and upper bound the error in estimating the sum above using the vector $v$ we are given and the error introduced by making all the scaling parameters the same. First we claim that
$$|z_1| \geq \Big( \frac{\epsilon\alpha}{2nk} \Big)^{50k^4}$$
which follows from the assumption we made that $x_1,y_1, \cdots, x_{k-j},y_{k-j}$ have rank at most $4k^2$ and the resulting bound $\Pr_{M'_1}[x \in \mathcal{S}] \geq \frac{\phi_1'^{8k^3}}{n^{2k}}$ along with the bound 
$$|\langle v'_{i, a}, v(M(\phi'_1, \pi_1 \vert_{\{x_a, y_a\}} \rangle | \geq \frac{|\phi_i'-\phi_1'|}{4} \geq \frac{1}{4} \Big( \frac{(\epsilon\alpha)^{k}}{nk} \Big) ^{40k^2}$$
Second, we guess the index $\ell$ where $z_\ell$ is the largest. We will bound the error in replacing each $\phi'_i$ with $\phi'_\ell$. Using the fact that the sum of the entries in $v_{c-2(k-j)}(M(\phi, \pi)) \leq n^c$ and any Mallows model $M(\phi, \pi)$ along with Lemma~\ref{TVbound} we have that
$$\|v_{c-2(k-j)}(M(\phi'_\ell, \pi_i \vert_X )) - v_{c - 2(k-j)}(M(\phi'_i, \pi_i \vert_X))\|_1 \leq
 2n^c\Big( 10n^3 |\phi'_\ell - \phi'_i| \Big)^{1/2} \leq \Big( \frac{(\epsilon\alpha)^k}{nk} \Big) ^{8k^2}$$
Finally from the given $v$ we can construct a tensor $T_v$ in the same manner that we constructed $T_i$ from $v_c(M'_i)$ above, which gives:
$$ \|T_v - \sum_{i =1}^j z_i v_{c-2(k-j)}(M(\phi'_\ell, \pi_i \vert_X))\|_1 \leq \|T_v - \sum_{i =1}^j z_i v_{c-2(k-j)}(M(\phi'_i, \pi_i \vert_X))\|_1 + jz_\ell \Big( \frac{(\epsilon\alpha)^k}{nk} \Big)^{8k^2}$$

To upper bound the right hand side in the above expression, we can use the assumption about $v$ in the statement of the lemma.  We can then invoke Lemma ~\ref{TVbound} and the assumption that $|\phi_i - \phi_i'| \leq \beta$ to replace $\phi_i$ with $\phi_i'$.  With this upper bound, we apply the algorithm in Lemma~\ref{SpecialLearn} to learn the permutation $\pi_\ell \vert_X$. We then output the candidate parameters $w'_\ell, \phi'_\ell$ and $\pi'_\ell$ (where we guess the remaining positions of the remaining $2(k-j)$ elements in $[n] \setminus X$ to fill in the rest of $\pi'_\ell$). It is easy to see that when we replace the guesses we made with brute force search, the algorithm still runs in polynomial time. 
\end{proof}

\subsection{Finding the Rest of the Components} 
 It is now straightforward to repeatedly use the algorithm in Lemma~\ref{learn1} to learn and peel off components one by one, which is captured by the following corollary:

\begin{corollary}\label{learn2}
Under the same conditions as Lemma~\ref{learn1}, suppose we are given a vector $v$ with
$$\|v- \sum_i w_i v_c(M(\phi_i, \pi_i))\|_1 \leq \Big( \frac{\epsilon\alpha}{2nk} \Big)^{60k^4}$$ 
There is a polynomial time algorithm to output a list of candidate parameters $(w, \phi, \pi)$ so that for each $\ell$ there is at one entry in the list that is $\theta$-close \---- i.e. it satisfies $|w-w_\ell | \leq \theta, |\phi-\phi_\ell| \leq \theta$ and $ \pi = \pi_\ell$.
\end{corollary}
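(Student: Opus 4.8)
The plan is to use the single-component recovery guarantee of Lemma~\ref{learn1} as a subroutine and peel off components one at a time, branching over the returned candidates at each step so that we never need to know which component Lemma~\ref{learn1} actually isolated. Concretely, I would run a recursion of depth $k$: at a node we hold a residual vector $u$ that is $\ell_1$-close to $\sum_{i\in R} w_i v_c(M(\phi_i,\pi_i))$ for some set $R$ of not-yet-removed components; we run the algorithm of Lemma~\ref{learn1} on $u$ with a very small target accuracy $\beta$, obtain a polynomial-size candidate list, and for \emph{every} candidate $(w',\phi',\pi')$ on that list we both append it to the global output and recurse on $u - w'\, v_c(M(\phi',\pi'))$, which is exactly computable since $v_c$ of a single Mallows model is. Along the branch that always selects the candidate that Lemma~\ref{learn1} promises is $\beta$-close to \emph{some} surviving component, we strip off $k$ distinct components, one per level, so every $\ell\in[k]$ eventually contributes a $\beta$-close (hence, once $\beta\le\theta$, $\theta$-close) estimate to the global list. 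The recursion tree has depth $k$ and branching $\poly(n)$, so for constant $k$ this is a polynomial-time algorithm producing a polynomial-size list.

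The key quantity to pin down first is the error budget. Lemma~\ref{learn1} requires its input to be within $\big(\frac{\epsilon\alpha}{2nk}\big)^{60k^4}$ of the relevant sum, and I start inside that bound by hypothesis. When I subtract an approximate component with $|\phi'-\phi_\ell|,|w'-w_\ell|\le\beta$ and $\pi'=\pi_\ell$, Lemma~\ref{TVbound} together with the order-$c$ blowup estimate already used in the proof of Claim~\ref{removal} gives $\|w'\, v_c(M(\phi',\pi')) - w_\ell\, v_c(M(\phi_\ell,\pi_\ell))\|_1 \le n^c\big((10 n^3\beta)^{1/2}+\beta\big)$, so choosing $\beta = \big(\frac{\epsilon\alpha}{2nk}\big)^{Ck^4}$ for a large enough absolute constant $C$ (recall $c=10k^2$) makes each peeling step add a negligible amount; after at most $k$ steps the residual stays comfortably inside the tolerance window. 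Since the tolerance demanded by Lemma~\ref{learn1} only weakens as the number of components drops, the invariant persists down the recursion, and I would also verify that the $\beta$-grid searches and the product of the per-level list sizes remain polynomial, which they do because $k$ is constant.

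One technical wrinkle is that the residual sub-mixture indexed by $R$ has weights that no longer sum to one, whereas Lemma~\ref{learn1} is phrased for a genuine mixture. I would handle this by observing that the only properties its proof uses --- each weight at least $\alpha$, pairwise total-variation distance at least $\mu$, and $\tfrac1{2n}<\phi_i<1-\epsilon$ --- are inherited by every sub-collection of components, so the lemma applies verbatim to the sub-mixture; alternatively one can additionally grid-search the total remaining mass and rescale, which changes nothing essential. Non-degeneracy (needed to know the surviving components are distinct, so that the component found at one level differs from those already removed) is inherited as well. The inductive invariant then reads: at depth $k-r$, along the good branch, the residual is within the appropriate tolerance of the honest sum over the $r$ surviving components, and the global list already contains a $\theta$-close estimate of each removed component.

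I expect the main obstacle to be exactly this error accounting across the $k$ rounds in the presence of the $n^c = n^{10k^2}$ multiplicative loss from working with order-$c$ moments: one must choose $\beta$ with an exponent polynomially larger in $k$ than $60k^4$, check that the $k$-fold accumulation together with the initial slack still fits under $\big(\frac{\epsilon\alpha}{2nk}\big)^{60k^4}$, and confirm that doing so keeps the $\beta$-grid (and hence the running time and output size) polynomial. Everything else --- the branching argument, the exact computability of $v_c(M(\phi',\pi'))$, and transferring the per-round guarantee of Lemma~\ref{learn1} to the residual sub-mixture --- is routine given the tools already developed.
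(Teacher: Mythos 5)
Your proposal is correct and matches the paper's proof essentially step for step: iteratively invoke Lemma~\ref{learn1} with a sufficiently small target accuracy, branch over every candidate on the returned list, subtract $w'\,v_c(M(\phi',\pi'))$ from the residual, and budget the accumulated $\ell_1$ error over the $k$ rounds using Lemma~\ref{TVbound} together with the $n^c$ blowup for order-$c$ moments. Your extra remark about the residual sub-mixture's weights not summing to one is a point the paper glosses over but handles implicitly in the same way you do.
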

\begin{proof}
We invoke the algorithm in Lemma~\ref{learn1} and set the target accuracy so that the entry $(w, \phi, \pi)$ that is close to some component on the list that it returns satisfies
$$\|w v_c(M(\phi, \pi)) - w_\ell v_c(M(\phi_\ell, \pi_\ell))\|_1 \leq \frac{1}{k}\Big( \frac{\epsilon\alpha}{2nk} \Big)^{60k^4}$$
More precisely, the algorithm in Lemma~\ref{learn1} guarantees that $\pi = \pi_\ell$ and the mixing weight and scaling parameter are $\theta'$-close. We make $\theta'$ small enough that using the bound $$\|v_c(M(\phi, \pi)) - v_c(M(\phi_\ell, \pi_\ell))\|_1 \leq 2n^c d_{TV}(M(\phi, \pi), M(\phi_\ell, \pi_\ell))$$ along with Lemma~\ref{TVbound} yields the desired inequality. 
Now for each candidate parameters $(w, \phi, \pi)$ on the list, compute 
$$u = v - w v_c(M(\phi, \pi))$$
and repeat in this manner $k-1$ more times. It is easy to see that the $\ell_1$-norm of the difference between $u$ and the components in the mixture that have not been found (on some list) yet will not exceed $( \frac{\epsilon\alpha}{2nk})^{60k^4}$.
\end{proof}

Of course, once we have a list where an accurate estimate of every component in the mixture appears, we can try all possible $k$ tuples of parameters on the list in order to generate a list of candidate mixtures, at least one of which is component-wise close to the true mixture. All that remains is to hypothesis test all of these possibilities. This is slightly more complicated in our setting, because we want to select a candidate that is not just close in total variation distance as a mixture, but even in a component-wise sense. 

\subsection{Testing Component-wise Closeness}

Here we show how to test whether a pair of mixtures of Mallows models are component-wise close. This essentially follows by invoking Corollary~\ref{localIdentifiability} and standard arguments. 

\begin{corollary}\label{corr:test}
Suppose we are given sample access to a mixture of $k$ Mallows models $M = w_1 M(\phi_1, \pi_1) + \cdots + w_k M(\phi_k, \pi_k) $ and an estimate $M' = w'_1 M(\phi'_1, \pi'_1) + \cdots + w_k M(\phi'_k, \pi'_k) $ on $n \geq 10k^2$ elements where both mixtures are $(\mu, \alpha)$-non degenerate.  Se t $\epsilon = \frac{\mu^2}{10n^3}$. There is an algorithm which given 
$$m = \Big( \frac{nk \log \frac{1}{\delta'}}{\epsilon \theta \alpha} \Big)^{(10k)^{8k}}$$
samples from $M$ runs in polynomial time and succeeds in accepting when the mixtures are component-wise $\gamma$-close for
$$\gamma = \Big( \frac{\epsilon \theta \alpha}{nk \log \frac{1}{\delta'}} \Big)^{(10k)^{10k}}$$ and rejects when they are component-wise $\theta$-far and succeeds with probability at least $1 - \delta'$. 
\end{corollary}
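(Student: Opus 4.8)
The plan is to reduce to Corollary~\ref{localIdentifiability} via a straightforward empirical-moment comparison. Set $c = 10k^2$. Since $M'$ is given explicitly (with scaling parameters drawn from a grid of rationals), for this fixed $c$ we can compute the vectorization $v_c(M')$ of its order-$c$ moments exactly, or to negligible precision, in polynomial time: $v_c(M')$ has only $\binom{n}{c}n(n-1)\cdots(n-c+1) = \poly(n)$ coordinates, and each is a finite sum of closed-form Mallows probabilities (which involve only the computable normalizers $Z_\ell(\phi)$). From the $m$ samples of $M$ we form the empirical estimate $\hat v$ of $v_c(M)$, where each coordinate is the empirical frequency of a fixed placement of $c$ elements, hence an average of i.i.d.\ indicators. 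The test is: accept if and only if $\|\hat v - v_c(M')\|_1 \le \eta$, where $\eta = \frac{1}{2}\big(\frac{\epsilon\theta\alpha}{nk}\big)^{(10k)^{6k}}$.

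First I would establish the concentration guarantee. By a Chernoff bound on each coordinate together with a union bound over the $n^{O(k^2)}$ coordinates, $m$ samples suffice to force $\|\hat v - v_c(M)\|_1 \le \frac{1}{4}\big(\frac{\epsilon\theta\alpha}{nk}\big)^{(10k)^{6k}}$ with probability at least $1 - \delta'$. Tracking the dependence on the dimension $n^{O(k^2)}$, the target error, and $\log(1/\delta')$ shows this requires only $m = \big(\frac{nk\log(1/\delta')}{\epsilon\theta\alpha}\big)^{(10k)^{8k}}$ samples, since the exponent $(10k)^{8k}$ dominates $(10k)^{6k}$ with ample room. I would condition on this event for the remainder of the argument.

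For soundness of acceptance: if the test accepts, then by the triangle inequality $\|v_c(M) - v_c(M')\|_1 \le \eta + \|\hat v - v_c(M)\|_1 < \big(\frac{\epsilon\theta\alpha}{nk}\big)^{(10k)^{6k}}$, so Corollary~\ref{localIdentifiability} (applicable because both mixtures are $(\mu,\alpha)$-non degenerate on $n \ge 10k^2$ elements) yields that $M$ and $M'$ are component-wise $\theta$-close; contrapositively, when they are component-wise $\theta$-far the test rejects. For completeness: if $M$ and $M'$ are component-wise $\gamma$-close, then, matching up components and using $\|v_c(M(\phi,\pi)) - v_c(M(\phi',\pi'))\|_1 \le 2n^c\, d_{TV}(M(\phi,\pi),M(\phi',\pi')) \le 2n^c\gamma$ along with $\sum_i |w_i - w_i'|\cdot \|v_c(M(\phi_i',\pi_i'))\|_1 \le k n^c \gamma$, we get $\|v_c(M) - v_c(M')\|_1 \le 2kn^c\gamma$. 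Because $\gamma = \big(\frac{\epsilon\theta\alpha}{nk\log(1/\delta')}\big)^{(10k)^{10k}}$ carries the far larger exponent $(10k)^{10k}$, this bound is much smaller than $\eta$, so $\|\hat v - v_c(M')\|_1 \le \|\hat v - v_c(M)\|_1 + 2kn^c\gamma < \eta$ and the test accepts.

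The main obstacle I anticipate is purely in the bookkeeping rather than in any idea: verifying that the three scales line up, namely $2kn^c\gamma \ll \eta \ll \big(\frac{\epsilon\theta\alpha}{nk}\big)^{(10k)^{6k}}$, and simultaneously that the stated $m$ is large enough to drive $\|\hat v - v_c(M)\|_1$ below the required threshold with confidence $1-\delta'$ over a vector of dimension $n^{\Theta(k^2)}$. Both hold because the exponents are deliberately spread out, $(10k)^{10k} \gg (10k)^{8k} \gg (10k)^{6k}$, so the only real work is to write the resulting chain of inequalities and the Chernoff/union-bound computation carefully.
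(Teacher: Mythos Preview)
Your approach is essentially the paper's: set $c=10k^2$, compare the order-$c$ moment vectors of $M$ and $M'$, and invoke Corollary~\ref{localIdentifiability} to translate closeness in $v_c$ into component-wise closeness. The threshold bookkeeping and the Chernoff/union-bound accounting are the same in spirit.

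There is one genuine difference worth flagging. The paper estimates $v_c(M')$ by \emph{sampling} $m$ draws from $M'$ (using the known efficient repeated-insertion sampler) rather than computing it exactly. You instead assert that $v_c(M')$ can be computed directly in polynomial time because each coordinate ``is a finite sum of closed-form Mallows probabilities.'' That step is not justified as written: a single coordinate of $v_c(M')$ is the probability that $c$ specified elements land in $c$ specified positions, which is a sum over the $(n-c)!$ completions of the remaining elements, not a short closed-form in the normalizers $Z_\ell(\phi)$. It may well be computable in $\poly(n)$ time via a dynamic program over the insertion process, but you have not argued this, and it is not immediate. The paper's sampling route sidesteps the issue entirely at no cost to the bounds. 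If you want to keep the exact-computation variant, you would need to supply that DP; otherwise, simply replace ``compute $v_c(M')$ exactly'' with ``estimate $v_c(M')$ from $m$ samples drawn from $M'$,'' and your argument goes through unchanged and matches the paper.
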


\begin{proof} As usual, set $c=10k^2$. From the choice of $\gamma$ we have that if the mixtures are component-wise $\gamma$-close then
$$\| \sum_i w_i v_c(M(\phi_i, \pi_i)) - \sum_i w'_i v_c(M(\phi'_i, \pi'_i))\|_1 \leq \Big( \frac{\epsilon \theta \alpha}{nk} \Big)^{(10k)^{8k}}$$
And from Corollary~\ref{localIdentifiability} we know a weak converse that if the above bound (with, say, an extra factor of four) holds then the mixtures must be component-wise $\theta$-close. Now we can estimate the above quantity using $m$ samples from $M$ and by generating $m$ samples from $M'$. The latter can be done efficiently through any of the known sampling schemes for Mallows models, e.g. \cite{Insertion}. We accept if the above bound holds (with an extra factor of two) and reject otherwise, and by standard concentration argument it is easy to see that the failure probability is at most $\delta'$. 
\end{proof}

We are now ready to prove Theorem \ref{generalalgorithm}:

\begin{proof} 
First we run the algorithm from Claim~\ref{removal}. Then we run the algorithm in Corollary~\ref{learn2} to output a list of candidate parameters, and we consider all $k$ tuples of components to form a list of candidate mixtures. On this list, we eliminate candidate mixtures in which $(1)$ a pair of components has the same base permutation and the scaling parameters are within $\frac{\epsilon}{10}$ and $(2)$ any component has a scaling parameter more than $1-\frac{\epsilon}{2}$ or mixing weight less than $\frac{\alpha}{2}$. For $$\gamma = \Big( \frac{\epsilon \theta \alpha}{nk \log \frac{1}{\delta'}} \Big)^{(10k)^{10k}}$$
the candidate mixture that is component-wise $\gamma$-close to the true mixture will not be eliminated. We claim all the remaining candidates have components that are at least $\frac{\epsilon}{40}$-far in total-variation distance. To see this, for any two components with different base permutations, we use Claim \ref{simple}.  Otherwise, pick two elements that are consecutive in the (shared) base permutation.  Say the scaling parameters are $\phi_i,\phi_j$.  The probability that the two elements occur in order in a sample from the first Mallows model is $\frac{1}{1+\phi_i}$ while for the second Mallows model the probability is $\frac{1}{1+\phi_j}$.  The difference of these quantities is at least $\frac{|\phi_i - \phi_j|}{4}$ giving us the lower bound on the total variation distance.  

Now for each remaining candidate we apply the algorithm in Corollary~\ref{corr:test} and setting $\mu = \frac{\epsilon}{40}$ and $\delta'$ sufficiently small so that the probability of failing on any candidate in our polynomially sized list is at most $\frac{\delta}{3}$. We know that if the algorithm succeeds at deciding which mixtures are component-wise close and far, that we will accept on at least one of the candidate mixtures (the one that is $\gamma$-close to the true mixture) and that anything we accept will be component-wise $\theta$-close.  Finally, the overall failure probability is at most $\delta$ and the number of samples used is at most $m = ( \frac{nk\log \frac{1}{\delta}}{\epsilon \theta \alpha} )^{(10k)^{10k}}$
which completes the proof. 
\end{proof}

\section{Lower Bounds}

In this section, we show various lower bounds for learning mixtures of $k$ Mallows models. First, we give information-theoretic lower bounds on the sample complexity that show that any algorithm for learning the components (and not merely learning a mixture that is close as a distribution) must take at least an exponential in $k$ number of samples. Second, we define a local model where an algorithm is only allowed to make statistical queries on up to $c$ elements. This framework captures all of our algorithms along with many other natural strategies. We show that there are mixtures of $k$ Mallows models that are far as distributions, but require $n^{\log k}$ queries to distinguish. An interesting open question is to prove a statistical query lower bound, which necessitates finding a much larger collection of mixtures that are hard to distinguish from each other.

\subsection{Sample Complexity Lower Bounds}
In this subsection, we construct two mixtures of $k$ Mallows models that are $(\mu, \alpha)$-non degenerate that are close as mixtures but not on a component-by-component basis. Recall that non-degeneracy avoids more trivial reasons why the components are not learnable from few samples (such as not getting a sample from each component, or not getting a sample from the difference between two components). The following lower bound shows that any algorithm for learning the components a $(\mu, \alpha)$-non degenerate mixture of $k$ Mallows models must take at least $\frac{1}{\mu^k}$ samples:

\begin{lemma}\label{lowerBound1}
For any $\mu \leq \frac{1}{40k^2}$ and $n \geq 40k^2$ there are two mixture of at most $k$ Mallows models $M$ and $M'$
with the following properties:
\begin{enumerate}

\item[(1)] Each mixture is $(\mu, \frac{1}{10 \cdot 2^{2k}})$-non degenerate

\item[(2)] $d_{TV}(M, M') \leq 4(8\mu k)^{2k-1}$

\item[(3)] $M$ and $M'$ are not component-wise $\mu$-close

\end{enumerate}
\end{lemma}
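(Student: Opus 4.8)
\textit{Overall approach.} The plan is to make both mixtures have \emph{every} component centered at the identity permutation $\mathrm{id}=(1,2,\dots,n)$, and to put all the structure into the scaling parameters, which I place on an arithmetic progression of spacing $h=\Theta(\mu/n)$. Set $\phi_j=(j{+}1)h$ for $j=0,1,\dots,2k-1$ and define
$$M=\sum_{j\ \mathrm{even}}\frac{\binom{2k-1}{j}}{2^{2k-2}}\,M(\phi_j,\mathrm{id}),\qquad M'=\sum_{j\ \mathrm{odd}}\frac{\binom{2k-1}{j}}{2^{2k-2}}\,M(\phi_j,\mathrm{id}).$$
These are mixtures of $k$ components each, with disjoint sets of scaling parameters and minimum mixing weight $2^{-(2k-2)}\ge \frac{1}{10\cdot 2^{2k}}$. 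The binomial weights are chosen so that the $(2k{-}1)$st finite difference kills low degree polynomials: for any $p$ of degree $\le 2k-2$, $\sum_{j=0}^{2k-1}(-1)^j\binom{2k-1}{j}p(\phi_j)=0$. Splitting this by the parity of $j$ says precisely that the probability measures $\nu=\sum_{j\ \mathrm{even}}\frac{\binom{2k-1}{j}}{2^{2k-2}}\delta_{\phi_j}$ and $\nu'=\sum_{j\ \mathrm{odd}}\frac{\binom{2k-1}{j}}{2^{2k-2}}\delta_{\phi_j}$ have matching moments of orders $0,1,\dots,2k-2$.

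\textit{Non-degeneracy and condition (3).} I would get both from one elementary fact: for $a<b$ in the tiny range $[0,2kh]$, keeping only the $\pi=\mathrm{id}$ term, $d_{TV}(M(a,\mathrm{id}),M(b,\mathrm{id}))\ge \tfrac12\,|Z_n(a)^{-1}-Z_n(b)^{-1}|$, and since $Z_n(x)=1+(n-1)x+O(x^2)$ with $Z_n(a)Z_n(b)=1+O(k n h)=1+o(1)$, this is at least $(1-o(1))(n-1)(b-a)$. Choosing $h$ a small constant times $\mu/n$ forces this above $\mu$ whenever $b-a\ge h$. Within each of $M,M'$ consecutive parameters differ by $2h$ and across the two mixtures any two differ by at least $h$, so this gives pairwise TV distance $\ge \mu$ inside each mixture (non-degeneracy) and TV distance $>\mu$ on every matched pair under any cross-matching of components (condition (3)). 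For distance to the uniform distribution $U$, the single adjacent pair marginal $\Pr_{M(\phi,\mathrm{id})}[1\text{ before }2]=\tfrac1{1+\phi}$ versus $\Pr_U[1\text{ before }2]=\tfrac12$ gives distance $\ge \tfrac{1-2kh}{4}\gg\mu$.

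\textit{Condition (2), the TV upper bound.} Write $v(M(\phi,\mathrm{id}))$ coordinatewise as $\Phi_\pi(\phi)=\phi^{d_{KT}(\mathrm{id},\pi)}/Z_n(\phi)$. The plan is to Taylor-expand $\phi\mapsto v(M(\phi,\mathrm{id}))$ about $0$ to order $2k-2$ with Lagrange remainder; because $\nu,\nu'$ match those moments, $v(M)-v(M')$ equals the difference of the two remainder terms, so
$$d_{TV}(M,M')\ \le\ \frac{(2kh)^{2k-1}}{(2k-1)!}\cdot\max_{\xi\in[0,2kh]}\Bigl\|\tfrac{d^{2k-1}}{d\phi^{2k-1}}v(M(\phi,\mathrm{id}))\big|_{\xi}\Bigr\|_1 .$$
Everything reduces to bounding $\sum_\pi|\Phi_\pi^{(2k-1)}(\xi)|$ for $\xi$ in the tiny interval $[0,\tfrac1{4n}]$. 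Here I would group permutations by inversion count $q$ (all $\pi$ with $d_{KT}(\mathrm{id},\pi)=q$ give the same function $\psi_q(\phi)=\phi^q/Z_n(\phi)$, with multiplicity $N_{n,q}$), apply Leibniz to $\psi_q=\phi^q\cdot Z_n^{-1}$ to get $\sum_q N_{n,q}|\psi_q^{(\ell)}(\xi)|\le\sum_i\binom\ell i|(Z_n^{-1})^{(\ell-i)}(\xi)|\,Z_n^{(i)}(\xi)$ (using $\sum_q N_{n,q}\phi^q=Z_n$), and finally bound the derivatives of $Z_n$ and $Z_n^{-1}$. These are tame precisely because $\xi=O(1/n)$ and $\log Z_n(\phi)=\sum_{l=1}^n\log(1+\phi+\dots+\phi^{l-1})$ is a sum of $n$ well-behaved terms whose $m$-th derivative is $(n-1)\frac{(m-1)!}{(1-\phi)^m}$ plus a correction of size $O((m{+}1)!)$, negligible since $n\ge 40k^2\gg(2k)^2$; feeding this into the Bell-polynomial expansions of $Z_n^{(i)}=Z_n B_i\!\big((\log Z_n)',\dots\big)$ and $(Z_n^{-1})^{(i)}$ and using $\sum_{\text{partitions }\pi\text{ of }[i]}x^{\#\pi}\prod_b(|b|-1)!=x(x+1)\cdots(x+i-1)$ gives $|Z_n^{(i)}(\xi)|,|(Z_n^{-1})^{(i)}(\xi)|\le O_k(n^i)$, hence $\|\tfrac{d^{2k-1}}{d\phi^{2k-1}}v(M(\phi,\mathrm{id}))\|_1\le O_k\big((2n)^{2k-1}\big)$ on the interval. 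Plugging back, $d_{TV}(M,M')\lesssim \frac{1}{(2k-1)!}(4knh)^{2k-1}$, and with $nh=\Theta(\mu)$ and the $(2k-1)!$ in the denominator this lands below $4(8\mu k)^{2k-1}$.

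\textit{Main obstacle.} The hard part is exactly this last estimate: controlling the $(2k{-}1)$st derivative of the rational (and, for generic $\phi$, rather wild) map $\phi\mapsto v(M(\phi,\mathrm{id}))$ in $\ell_1$, uniformly on the interval, with a constant good enough that dividing by $(2k-1)!$ beats the explicit bound in the lemma. Restricting $\phi$ to scale like $1/n$ is what makes it work, since there the partition function and all its derivatives are controlled and the factorial from the moment-matching order dominates; one must also check, via the $Z_n^{-1}$ estimate above, that $d_{TV}$ between two components at scaling parameters $\Theta(\mu/n)$ apart is genuinely of order $\mu$ (not $o(\mu)$), which is what pins down the admissible range for $h$. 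The small cases $k\le 2$, where $(2k-1)!$ is too small to absorb the implied constant in the general argument, are handled by the same construction together with a direct (rather than asymptotic) bound on the derivative, with $k=1$ being immediate.
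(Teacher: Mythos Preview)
Your construction is exactly the one the paper uses: all components centered at the identity, scaling parameters on an arithmetic progression $\phi_j=(j{+}1)h$ with $h=\Theta(\mu/n)$, and alternating-sign binomial weights. The non-degeneracy and ``not $\mu$-close'' arguments via $\Pr[\mathrm{id}]=Z_n(\phi)^{-1}\approx 1-(n{-}1)\phi$ are also the same. The real difference is in how you get the upper bound on $d_{TV}(M,M')$.

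The paper does something slicker than your Taylor-remainder route: it \emph{absorbs} $Z_n(\phi_j)$ into the coefficients, considering
\[
v=\sum_{j}(-1)^j\binom{2k-1}{j}\,Z_n(\phi_j)\,v(M(\phi_j,\mathrm{id})).
\]
With that choice the entry of $v$ at any permutation with exactly $q$ inversions is literally $h^q\sum_j(-1)^j\binom{2k-1}{j}(j{+}1)^q$, a $(2k{-}1)$st finite difference of a degree-$q$ polynomial, hence \emph{exactly zero} for $q\le 2k-2$ and crudely at most $(4kh)^q$ for $q\ge 2k-1$. Summing over the at most $n^q$ permutations with $q$ inversions gives a geometric series and the bound drops out in two lines; a short normalization step at the end turns the signed combination into a difference of honest mixtures. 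No derivatives, no Bell polynomials.

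Your route---keep the weights $\binom{2k-1}{j}/2^{2k-2}$ so $M,M'$ are mixtures from the outset, then match moments and bound the $(2k{-}1)$st $\ell_1$-derivative of $\phi\mapsto v(M(\phi,\mathrm{id}))$---also works, and buys you the minor convenience of not needing the normalization cleanup. One simplification worth noting: instead of the Bell-polynomial expansion of $Z_n$ and $Z_n^{-1}$, observe that both have Taylor coefficients dominated termwise by those of $(1-\phi)^{-(n-1)}$ (for $Z_n^{-1}$, expand each factor $[l]_\phi^{-1}=(1-\phi)\sum_{s\ge 0}\phi^{ls}$ and note its coefficients lie in $\{0,\pm1\}$). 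This gives $|Z_n^{(i)}(\xi)|,\,|(Z_n^{-1})^{(i)}(\xi)|\le e\,(2n)^i$ directly on the interval $[0,2kh]$, which when fed into your Leibniz inequality yields $\|\partial_\phi^{2k-1}v(M(\phi,\mathrm{id}))\|_1\le 9(4n)^{2k-1}$ and hence $d_{TV}(M,M')\le \tfrac{9}{(2k-1)!}(8knh)^{2k-1}$, comfortably inside the target for $k\ge 2$. This sidesteps the one soft spot in your sketch, the claim that the ``correction of size $O((m{+}1)!)$'' is negligible against $(n{-}1)(m{-}1)!$ uniformly over $m\le 2k-1$ when $n=\Theta(k^2)$.
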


\begin{proof} 
Our strategy is to find a linear combination of vectorizations of Mallows models whose coefficients are lower bounded, but whose $\ell_1$-norm can be much smaller. Then we will interpret this linear combination as a pair of mixtures of Mallows models. Let $r = 2k$ and let $\pi = (1, 2, \cdots, n)$. Now consider $M(\phi_1, \pi), \cdots ,M(\phi_r,\pi)$ where $\phi_1 = \lambda, \phi_2 = 2\lambda, \cdots, \phi_r = r\lambda$ for some $\lambda$ that will be chosen later. Finally let $Z_n(\phi) = (1+\phi) \cdots (1+\phi+ \cdots + \phi^{n-1})$.  Now consider the linear combination
$$v = \sum_{i = 0}^r (-1)^{i}\binom{r-1}{i}Z(\phi_i)v(M(\phi_i, \pi))$$

First, we will bound the $\ell_1$-norm of $v$. Then we will interpret it as the difference of two mixtures and prove that its components are far apart in total variation distance and that its mixing weights are not too small. 

\begin{claim}
$\|v\|_1 \leq \frac{(2nr\lambda)^{r-1}}{1-2nr\lambda}$
\end{claim}
\begin{proof}
Consider a permutation $\pi'$ that has $i$ inversions. Then the value of $v$ in the entry indexed by $\pi'$ is 
$$\lambda^i \sum_{j=0}^{r-1} (-1)^j\binom{r-1}{j} (j+1)^i$$
which is zero when $i \leq r-2$ and is at most $(2r \lambda)^i$ when $i \geq r-1$. We claim that there are at most $n^i$ permutations with exactly $i$ inversions, which can easily be seen by induction by generating the permutation from $i$ swaps of adjacent elements. Together this implies
$$\|v\|_1 \leq \sum_{i=r-1}^{\infty} (2nr\lambda)^i \leq \frac{(2nr\lambda)^{r-1}}{1-2nr\lambda}$$
which completes the proof. 
\end{proof}

Now set $\lambda = \frac{2\mu}{n}$. We can compute
$$\Pr_{M(\phi_i, \pi)}[\pi] = \frac{1}{(1+i\lambda) \cdots (1+ i\lambda + \cdots + (i\lambda)^{n-1})} = \frac{(1-i\lambda)^{n-1}}{(1-(i\lambda)^2) \cdots (1-(i\lambda)^{n})}$$
Using this expression, we can upper and lower bound the probability that the $i$th component generates $\pi$:
\begin{claim}
$1-2i\mu \leq \Pr_{M(\phi_i, \pi)}[\pi] \leq 1-(2i-1)\mu$
\end{claim}
\begin{proof}
First, the numerator can be upper bounded as
$$\Big(1-\frac{2i\mu}{n}\Big)^{n-1} \leq e^{-2i\mu} \leq 1-(2i-0.1)\mu$$
and lower bounded as
$$\Big(1-\frac{2i\mu}{n}\Big)^{n-1} \geq 1-2i\mu$$
The denominator is at most one and at least $1-\frac{(i\lambda)^2}{1-i\lambda}$, which using the lower bound on $n$ gives us the desired bounds. 
\end{proof}

The upper and lower bounds on $\Pr_{M(\phi_i, \pi)}[\pi]$ allow us to lower bound the total variation distance between two components as
$$d_{TV}(M(\phi_i, \pi), M(\phi_j, \pi)[\pi]) \geq |\Pr_{M(\phi_i, \pi)}[\pi] - \Pr_{M(\phi_j, \pi)}[\pi]| \geq \mu$$
and similarly for the total variation distance to the uniform distribution on permutations. 

Now we will interpret $v$ as the difference between two mixtures of at most $k$ Mallows models. If the sum of the entries of $v$ is non-zero, we can increase the coefficient of either $v(M(\phi_1, \pi)$ or $v(M(\phi_2, \pi)$ to make it zero. This increases the $\ell_1$-norm by at most a factor of two. Now we can take the positive terms, renormalize so that the sum of their entries is one, and interpret them as components of a mixture of $k$ Mallows models. We can do the same for the negative terms. It is easy to see that after renormalization, the minimum mixing weight is at least $\frac{1}{10\cdot 2^r}$ using the fact that $Z(\phi_i) \leq \frac{1}{(1-\phi_i)^n} \leq 10$ for all $1 \leq i \leq r$, which completes the proof. 
\end{proof}

\subsection{Lower Bounds Against Local Query Algorithms}
We now introduce a restricted model of learning, that is natural in the context of distributions over permutations, for which we can prove an $n^{\log k}$ lower bound for learning mixtures of $k$ Mallows models.  %As almost all of our algorithms focus on the distributions of small subsets of $[n]$, they can be implemented using the queries described.
\begin{definition}
In the \textit{local query model}, the learner queries a subset of elements $x_1,x_2, \cdots, x_c$ and locations $i_1,i_2, \cdots, i_c$ with a tolerance parameter $\tau$ and is answered with the probability, up to an additive $\tau$, that a sample from the mixture has $x_j$ in position $i_j$ for all $1 \leq j \leq c$. The cost of the query is $\frac{1}{\tau^2}$ and the total cost of an algorithm is the sum of its query costs. 
\end{definition}

We now show the following lower bound:

\begin{theorem} \label{SQL}
Suppose $k = 2^{\ell -1}$ and $2\ell$ divides $n$. Then there is a randomized construction of a pair of mixtures of $k$ Mallows models $M$ and $M'$ with the following properties:
\begin{enumerate}

\item[(1)] $M$ and $M'$ are $(\frac{1}{40},\frac{1}{k})$-non degenerate

\item[(2)] with probability $1/2$, $M = M'$ and otherwise $M,M'$ have the property that every component of $M$ is at least $\frac{1}{40}$-far in total variation distance from every component of $M'$.

\end{enumerate}
Yet any algorithm in the local query model that decides whether $M = M'$ with success probability at least $\frac{2}{3}$ must incur cost at least $\big( \frac{n}{2k} \big)^{\log_2k}$. 
\end{theorem}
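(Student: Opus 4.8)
The parameters in the statement ($k=2^{\ell-1}$, and $2\ell$ dividing $n$) strongly suggest a ``planted parity over $\ell$ blocks'' construction, and I would prove the theorem by a statistical-query-style indistinguishability argument around it. Fix a constant scaling parameter, say $\phi=1/2$ (any constant bounded away from $0$ and $1$ works). Partition $[n]$ into $\ell$ consecutive blocks $B_1,\dots,B_\ell$ of equal size, and inside each block $B_t$ pick a uniformly random \emph{pivot pair} of adjacent elements. A sign vector $\epsilon\in\{\pm1\}^\ell$ determines a base permutation $\pi_\epsilon$: keep the identity order everywhere except that, for each $t$, the pivot pair of $B_t$ appears in increasing order when $\epsilon_t=+1$ and in decreasing order when $\epsilon_t=-1$. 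Set $M=\frac1k\sum_{\epsilon:\prod_t\epsilon_t=+1}M(\phi,\pi_\epsilon)$, an equal-weight mixture over the $2^{\ell-1}=k$ even-parity sign vectors; in the ``far'' case put $M'=\frac1k\sum_{\epsilon:\prod_t\epsilon_t=-1}M(\phi,\pi_\epsilon)$, and in the ``equal'' case (chosen with probability $1/2$, with the same pivots) put $M'=M$.

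Non-degeneracy (property (1)) and the cross-mixture separation (property (2)) are routine. Any two distinct sign vectors yield base permutations differing by at least one adjacent transposition, so Claim~\ref{simple} with $\epsilon=1/2$ gives total variation distance at least $1/4>1/40$ between the corresponding components; a constant-$\phi$ Mallows model is $\Omega(1)$-far from the uniform distribution (e.g.\ an adjacent pair is in order with probability $\tfrac{1}{1+\phi}=\tfrac23\neq\tfrac12$), which handles the distance to uniform; and an even-parity vector is always distinct from an odd-parity vector, so in the far case every component of $M$ is far from every component of $M'$. The mixing weights are exactly $1/k$.

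For the lower bound I would use the standard reduction to a single ``distinguishing'' query. Fix the (adaptive) sequence of queries $q_1,q_2,\dots$ and tolerances $\tau_1,\tau_2,\dots$ that the algorithm makes when each query is answered by the ``null'' value $q_i(M)$. Since a discrepancy between $M$ and $M'$ can only be noticed through a query whose true answers on the two mixtures differ by more than the tolerance, correctness with probability $2/3$ forces that, with probability at least $1/3$ over the random construction, some $i$ has $|q_i(M)-q_i(M')|>\tau_i$. Granting the key variance bound
$$\Ex_{\text{pivots}}\big[(q(M)-q(M'))^2\big]\ \le\ \Big(\tfrac{2k}{n}\Big)^{\log_2 k}\qquad\text{for every local query }q,$$
Chebyshev gives $\Pr[|q_i(M)-q_i(M')|>\tau_i]\le (2k/n)^{\log_2 k}/\tau_i^2$, and a union bound yields $\sum_i 1/\tau_i^2\ge\frac13(n/2k)^{\log_2 k}$, i.e.\ total cost $\Omega\big((n/2k)^{\log_2 k}\big)$; matching the constant in the statement is where one must take the block sizes to be exactly $n/(2k)$ and restrict the pivot ranges accordingly.

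The variance bound is the heart of the argument and the step I expect to fight with. The reason it should hold: once the pivots are fixed, the answer of a local query $q$ depends on $\epsilon$ only through a function $f_q:\{\pm1\}^\ell\to[0,1]$, and the even/odd averaging collapses to $q(M)-q(M')=2\,\widehat{f_q}(\{1,\dots,\ell\})$, the top Fourier coefficient of $f_q$. This vanishes unless $f_q$ genuinely depends on all $\ell$ flip signs jointly, which requires $q$ to touch all $\ell$ blocks in a way that sees each pivot pair. Over the random pivots, the probability that a $c$-element query sees the pivot pairs of all blocks is at most $\prod_t O(\ell\,c_t/n)$, where $c_t$ is the number of its elements in $B_t$; by AM--GM this is maximized, for fixed $c$, when the elements are spread evenly. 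Combining this with the crude bound $|\widehat{f_q}|\le 1$ controls the small-$c$ regime, while for large $c$ one instead bounds $q(M)+q(M')=2\,\Ex_\epsilon[f_q(\epsilon)]$ directly (a marginal on many elements that can match at most one component is exponentially small). Reconciling the two regimes uniformly over all query sizes $c$, and carrying $\log_2 k$ rather than $\ell$ in the final exponent (one block's flip can be taken to be determined by the parity constraint), is the delicate bookkeeping I expect to be the main obstacle.
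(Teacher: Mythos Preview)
Your construction and argument are genuinely different from the paper's, and the differences are not cosmetic.

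The paper does \emph{not} randomize over pivot locations and does \emph{not} use a constant $\phi$. Instead it sets $\phi=1-\sqrt{k/n}$ (so $1-\phi$ is small), and in each of the $\ell$ blocks it flips \emph{all} $n/(2\ell)$ consecutive pairs, not a single one. The lower bound is then a clean deterministic two-case split: (i) any query on fewer than $\ell$ elements misses some block entirely, and pairing each even-parity component with the odd-parity component obtained by toggling that block shows $q(M)=q(M')$ exactly; (ii) any query on $\ell$ or more elements has true answer at most $(1-\phi)^\ell\cdot(\text{lower-order})\le(2k/n)^{\ell/2}$ for \emph{both} mixtures, because with $\phi$ this close to $1$ every single-element marginal is nearly uniform, so the oracle may legally return $0$. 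Thus distinguishing requires a query with tolerance below $(2k/n)^{\ell/2}$, i.e.\ cost at least $(n/2k)^\ell\ge(n/2k)^{\log_2 k}$. Flipping many pairs per block is what rescues non-degeneracy despite $\phi\to1$: two components differ on $\Theta(n/\ell)$ disjoint pairs, and a Chernoff-type count gives TV distance $\ge 1/40$.

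Your route has two real gaps that the paper's design sidesteps. First, the reduction: you fix the adaptive query path by answering with the ``null'' values $q_i(M)$, but $M$ itself depends on the random pivots, so the path is a random variable and the subsequent Chebyshev/union bound over ``the $i$th query'' is not well-posed. The standard SQ reduction needs a reference distribution independent of the instance, and there is no obvious pivot-free null here (neither $M(\phi,\mathrm{id})$ nor the full $2^\ell$-mixture works, since queries touching some-but-not-all pivots can have answers that depend on the pivot choice). Second, the variance bound: with $\phi=1/2$ the ``large $c$'' regime is not under control. The claim that a $c$-element marginal is exponentially small in $c$ is exactly what the paper obtains by pushing $\phi\to1$; at $\phi=1/2$ single-element marginals are $\Theta(1)$ and the paper's swap argument for $\ell$-element marginals degenerates (the $j$th conditional bound is $\approx 2^{j-2}$, so the product blows up). Without such a bound, a query with $c_t\approx n/(2\ell)$ elements per block hits every pivot with probability near $1$ while still having a non-negligible answer, and neither of your two regimes covers it.

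In short, the paper's insight is to trade a delicate SQ-style averaging argument for a single parameter choice: take $\phi$ close enough to $1$ that every high-order marginal is tiny, and compensate for the resulting weakness of a single flip by flipping all pairs in a block. This makes both parts of your ``delicate bookkeeping'' unnecessary.
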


We will construct $M$ and $M'$ whose components are far in total variation distance and yet for any local query an algorithm makes, if it has low cost, the query can be answered the same for both $M$ and $M'$. The theorem then follows by flipping a fair coin to decide whether to output the pair $(M, M)$ or the pair $(M, M')$. 
First, we describe the construction. Set the scaling parameter $\phi = 1- (\frac{k}{n})^{1/2}$  for all of the components we will use. Start from the identity permutation $\pi = (1, 2, \cdots, n)$ and divide the $n$ elements into $\ell$ blocks of $\frac{n}{\ell}$ consecutive elements. In each block,  create $\frac{n}{2\ell}$ pairs of consecutive elements. There are $2^{\ell}$ permutations we can generate by choosing a subset of the blocks and for each chosen block, flipping every pair inside. Now the two mixtures $M$ and $M'$ are defined as follows:\
\begin{enumerate}

\item[(1)] Set $M = \sum_{i = 1}^{2^{\ell-1}} \frac{1}{2^{\ell-1}} M(\phi, \pi_i)$ where $\pi_i$ are the $2^{\ell-1}$ permutations where for an even number of blocks, the paired elements are flipped. 

\item[(2)] Similarly set $M' = \sum_{i = 1}^{2^{\ell-1}}\frac{1}{2^{\ell-1}}  M(\phi, \pi'_i)$ where $\pi'_i$ are the $2^{\ell-1}$ permutations where for an odd number of blocks, the paired elements are flipped. 

\end{enumerate}

\noindent We will break up the proof into two parts, where we first show that the mixtures are non-degenerate. Then we establish a lower bound on the cost in the local query model. 

\begin{proof}[Proof of First Part]
We want to show that the components of the mixture are separated from the uniform distribution and from each other in total variation distance.  For any of the components in either of the mixtures, the first element of its base permutation maps to each of the positions $1,2, \dots , n$ with probabilities  $$\Big(\frac{1-\phi}{1-\phi^n}, \dots ,  \frac{(1-\phi)\phi^{n-1}}{1-\phi^n}\Big)$$  We will denote this vector by $v_n$.  Note $\phi^{\sqrt{\frac{n}{4k}}} > \frac{1}{2}$ and thus we can ensure that the first $(\frac{n}{4k})^{1/2}$ entries of $v_n$ are all at least $\frac{1}{2}(\frac{k}{n})^{1/2}$.  Let $e$ be the vector $(\frac{1}{n} , \dots , \frac{1}{n})$. Thus
$$\|v_n - e\|_1 \geq  \sqrt{\frac{n}{4k}}\Big(\frac{1}{2}\sqrt{\frac{k}{n}} - \frac{1}{n}\Big) > \frac{1}{5}$$  In particular, this implies that the total variation distance between any of the components and the uniform distribution is at least $\frac{1}{5}$.

Next we bound the total variation distance between any pair of components.  For every pair of components $M(\pi_1,\phi)$ and $ M(\pi_2,\phi)$ coming from either of the two mixtures, the permutations $\pi_1$ and $ \pi_2$ differ in that for some block, its pairs of consecutive elements are flipped in $\pi_2$.  Let $S$ be the set of all pairs of consecutive elements that are flipped between $\pi_1$ and $\pi_2$.  In our construction, we ensured that $|S| \geq \frac{n}{2\ell}$.  Let $|S| = m$.  To bound the total variation distance between $M(\pi_1,\phi)$ and $M(\pi_2,\phi)$, we look at which of the $m$ pairs in $S$ occur in reverse order (not necessarily consecutive) in a sample.  Since the pairs are disjoint, the probability that a sample from $M(\pi_1,\phi)$ has exactly $i$ of the pairs reversed is $\frac{\phi^i}{(1+\phi)^m}$ while the probability that a sample from $M(\pi_2,\phi)$ has exactly $i$ of them reversed is $\frac{\phi^{m-i}}{(1+\phi)^m}$.  Thus, the total variation distance between the distributions generated by $M_1,M_2$ is at least
$$ d_{TV}(M(\pi_1,\phi), M(\pi_2,\phi)) \geq \frac{1}{2}\frac{\sum_{i=0}^m \binom{m}{i}|\phi^i - \phi^{m-i}|}{(1+\phi)^m}$$
For $i \leq \frac{m-\sqrt{m}}{2}$ we have $|\phi^i - \phi^{m-i}| =\phi^i(1-\phi^{m-2i}) \geq  \phi^i(1-\phi^{\sqrt{m}}) \geq \phi^i(1-\phi^{\sqrt{\frac{n}{2l}}}) \geq \frac{1}{4}\phi^i$. Thus we conclude  
$$d_{TV}(M(\pi_1,\phi), M(\pi_2,\phi)) \geq \frac{1}{2}\frac{\sum_{i=0}^m \binom{m}{i}|\phi^i - \phi^{m-i}|}{(1+\phi)^m} \geq \frac{1}{4} \frac{\sum_{i=0}^{\frac{m-\sqrt{m}}{2}} \binom{m}{i}\phi^i}{(1+\phi)^m} \geq \frac{1}{4} \frac{\sum_{i=0}^{\frac{m-\sqrt{m}}{2}} \binom{m}{i}}{2^m} \geq \frac{1}{40}$$
It is easy to see that the same argument works for lower bounding $ d_{TV}(M(\pi_1,\phi), U)$ where $U$ is the uniform distribution on permutations. This completes the proof of the first part. 
\end{proof}

\begin{proof}[Proof of Second Part] Now we prove a lower bound on the cost to distinguish between $M$ and $M'$.  First, we observe that any local query involving strictly less than $\ell$ elements has the same answer for both mixtures.  Indeed, one of the $\ell$ blocks must have no queried elements and thus we can pair the components in $M$ with the components in $M'$ that are the same except that all pairs of consecutive elements in this block are flipped.  For each such pair of components, the distribution of the queried elements is the same and thus overall, the distributions of the queried elements in the even mixture and in the odd mixture must be the same.  

Next, we show that for both mixtures, the answer to any query involving at least $\ell$ elements can be answered with zero unless the tolerance parameter is at least $(\frac{2k}{n})^{\ell/2}$.  It suffices to upper bound the probability that some $\ell$ elements $i_1, \dots , i_l$ map to some $\ell$ positions $x_1, \dots , x_l$.  Let $p$ be the probability that $i_1$ maps to position $x_1$.  The probability that $i_1$ maps to an adjacent position is at least $\phi p$ since for any permutation with $i_1$ in position $x_1$, swapping $i_1$ with an adjacent element creates at most $1$ more inversion.  Repeating this argument, the probability that $i_1$ maps to a position $2$ away from $x_1$ is at least $\phi^2 p$ and so on.  Overall, we conclude that $p(1+\phi+ \dots + \phi^{n-1}) \leq 1$ and thus $p \leq \frac{1}{1+\phi+ \dots + \phi^{n-1}}$.  

We can repeat a similar argument for the probability that $i_2$ maps to potion $x_2$ conditional on $i_1$ mapping to position $x_1$.  In this case, the denominator has one less term since $i_2$ cannot map to $x_1$ and we can upper bound the probability by $\frac{1}{1+ \phi^2 + \dots + \phi^{n-1}} \leq \frac{1}{\phi+ \phi^2 \dots \phi^{n-1}}$.  Bounding the probabilities for each of the $\ell$ elements $i_j$ mapping to $x_j$, the overall probability that $i_1, \dots , i_l$ map to positions $x_1, \dots , x_l$ is at most 
\begin{align*}
\frac{1}{1+\phi+ \dots + \phi^{n-1}} \cdot \frac{1}{\phi+ \phi^2 \dots \phi^{n-1}} \dots \frac{1}{\phi^{l-1}+ \dots + \phi^{n-1}} = \frac{(1-\phi)^l}{(1-\phi^n)(\phi-\phi^n) \dots (\phi^{l-1} - \phi^n)} \\
= \bigg( \sqrt{\frac{k}{n}} \bigg)^l\frac{1}{(1-\phi^n)(\phi-\phi^n) \dots (\phi^{l-1} - \phi^n)} \leq \bigg( \sqrt{\frac{2k}{n}} \bigg)^l
\end{align*}
which completes the proof. 
\end{proof}

\section{Beyond Worst-Case Analysis}
Motivated by our lower bound against local algorithms, it is natural to ask whether there is some notion of beyond worst-case analysis whereby we can get much faster algorithms that work under tame conditions on the input. Here we give such an algorithm in the case when all of the scaling parameters are separated from each other (and from the value one, which causes a different sort of degeneracy). 

\begin{definition}
We say a mixture of Mallows models
$M = w_1 M(\phi_1,\pi_1) + \cdots + w_k M(\phi_k,\pi_k)$
is $(\gamma, \alpha)$-separated if $w_i \geq \alpha$ and $\phi_i \leq 1-\gamma$ for all $i$ and additionally $|\phi_i - \phi_j| \geq \gamma$.
\end{definition}

In our main lemmas, we will also need some new notions of component-wise closeness that will arise due to some subtleties at intermediate steps. 

\begin{definition}
We say that two mixtures of Mallows models
$M = w_1 M(\phi_1,\pi_1) + \cdots + w_k M(\phi_k,\pi_k)$ and $M' = w'_1 M(\phi'_1,\pi'_1) + \cdots + w'_k M(\phi'_k,\pi'_k)$ with the same number of components are component-wise $\theta$-close in parameters if there is a relabelling of components in one of the mixtures after which $|w_i - w'_i|, |\phi_i - \phi'_i| \leq \theta$ and $\pi_i = \pi'_i $ for all $i$. If all but the condition on the mixing weights holds, we will say that they are component-wise $\theta$-close in base parameters. 
\end{definition}

\begin{theorem} \label{smoothed}
Consider a mixture of $k$ Mallows models
$M = w_1 M(\phi_1, \pi_1) + \cdots + w_k M(\phi_k, \pi_k) $ that is $(\gamma, \alpha)$-separated and has $n \geq 10k$. There is an algorithm that runs in time polynomial in $n$, $\log 1/\delta$ and $(\frac{1}{\theta \gamma \alpha})^{k^2}$ time 
and outputs a mixture $M'$ that is component-wise $\theta$-close in parameters to $M$ with probability at least $1-\delta$. 
\end{theorem}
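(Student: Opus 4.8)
The guiding principle is that when the scaling parameters are known and $\gamma$-separated they play the role of distinct interpolation nodes, so that the whole problem decomposes into a short sequence of Vandermonde-type linear-algebra problems whose condition numbers are controlled by $\prod_{i<j}|\phi_i-\phi_j|^{-1}\le\gamma^{-\binom k2}$; this is the origin of the $(1/\gamma)^{k^2}$ in $f$. As a preliminary observation, a $(\gamma,\alpha)$-separated mixture is automatically $(\gamma/4,\alpha)$-non degenerate: distinct base permutations are handled by Claim~\ref{simple} (using $\phi_i\le1-\gamma$), and for equal base permutations one picks a consecutive pair $x,y$, for which $\Pr[x\text{ before }y]=\tfrac1{1+\phi_i}$, a quantity that differs from $\tfrac12$ and between any two components by at least $\gamma/4$. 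Hence all of the identifiability and test-function machinery of Sections~4 and~5 applies with $\mu=\gamma/4$; the whole point of Theorem~\ref{smoothed} is the quantitative improvement: the general algorithm suffers exponents that tower to $2^{\Theta(k\log k)}$ (from the iterated identifiability bounds of Theorem~\ref{fullIdentifiability} and from grid-searching the scaling parameters on a superfine grid), and knowing the separated $\phi_i$ collapses all of these to $\mathrm{poly}(k)$. The plan is three phases. (As a cheap preprocessing step I would first strip off any components with $\phi_i<\tfrac1{2n}$ using Fact~\ref{fact:list}/Claim~\ref{removal}, since such a component outputs its own base permutation with probability $\ge\alpha/2$; from now on assume $\phi_i\ge\tfrac1{2n}$.)

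Phase 1 --- learn $\{\phi_i\}$ and $\{w_i\}$. The aim is to produce a family of $O(k)$ scalar statistics of $M$ whose value, for each component, is an explicit function of $\phi_i$ alone and not of $\pi_i$, organized into a generalized-moment (power-sum) system. The natural source is the product structure for disjoint consecutive pairs used in the second part of the proof of Theorem~\ref{SQL}, combined with the fact (Fact~\ref{block}) that conditioning a sample on a fixed block of consecutive positions returns a Mallows model with the same $\phi_i$; arranging these so that component $i$ contributes a known coefficient times $\phi_i^0,\phi_i^1,\dots,\phi_i^{O(k)}$, one recovers the multiset $\{\phi_i\}$ as the roots of a degree-$k$ polynomial whose coefficients are estimable elementary-symmetric combinations, and then the $w_i$ from the $k\times k$ Vandermonde system in the $\phi_i$. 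Perturbation bounds for polynomial root-finding and for inverting Vandermonde matrices then show that $\mathrm{poly}(n)\cdot(\theta\gamma\alpha)^{-O(k^2)}$ samples and time recover every $\phi_i,w_i$ to accuracy $\theta'=(\theta\gamma\alpha/n)^{O(k^2)}$; in particular we learn which $\phi$ goes with which component.

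Phase 2 --- learn each base permutation $\pi_\ell$. Here I would re-run the test-function construction from the proof of Lemma~\ref{learn1} (built on Lemma~\ref{Kruskal} and the two cases of Lemma~\ref{Identifiability}) specialized to the fully separated regime, with two changes: there is no grid search over scaling parameters (we plug in the $\phi_i$ from Phase~1), and to isolate component $\ell$ we only need, for each other component $m$, one pair $(x_m,y_m)$ that is consecutive in $\pi_\ell$ and reversed in $\pi_m$ --- found by brute force over the $\mathrm{poly}(n)$ choices. The orthogonalizing rank-one tensor $T=\bigotimes_m v'_m$ of Lemma~\ref{learn1}, built from the known $\phi_m$, annihilates every $T_m$ with $m\neq\ell$ while $|\langle T,T_\ell\rangle|\ge\prod_m|\phi_\ell-\phi_m|/4^k\ge(\gamma/4)^k$; contracting the empirical analogue of $\sum_i w_i S_i$ against $T$ leaves a scalar multiple of a single Mallows model $M(\phi_\ell,\pi_\ell|_X)$ on $X=[n]\setminus\{x_m,y_m:m\neq\ell\}$, and Lemma~\ref{SpecialLearn} with one component recovers the relative order of every pair in $\pi_\ell|_X$; the $O(k)$ excluded elements are placed by brute force. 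Every probability this construction queries is an event about $O(k)$ elements, so it is estimated directly to accuracy $\gamma^{O(k^2)}$ without ever forming $v_c(M)$. Ranging over all guesses and over $\ell$ gives a $\mathrm{poly}(n)$-size list of candidate mixtures, at least one component-wise $\theta$-close to $M$ with exactly correct base permutations.

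Phase 3 --- test. For each candidate, accept iff a fixed family of $O(k)$-element statistics of $M$ (again estimated directly) matches the values predicted by the candidate to within the threshold of Corollary~\ref{localIdentifiability}; since the proof of that corollary uses only $O(k)$-element events, passing forces component-wise $\theta$-closeness, the near-true candidate passes, and outputting any accepted candidate finishes. A union bound over the polynomially many candidates and guesses, with $\delta$ split accordingly, gives the claimed failure probability. I expect the genuine obstacle to be Phase~1: the only manifestly clean handles on an individual $\phi_i$ are events in which the unknown $\pi_i$ participates, so the real work is to design honestly $\pi_i$-free conditional statistics and to verify that solving the resulting moment system, and then propagating its solution through the Vandermonde contractions of Phases~2 and~3, costs only $\mathrm{poly}(n)\cdot(1/(\theta\gamma\alpha))^{O(k^2)}$ and never re-introduces a towering exponent.
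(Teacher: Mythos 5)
Your overall architecture (find the permutations via test functions keyed to the separated $\phi_i$'s, then hypothesis-test) matches the paper's, but two of your three phases have genuine gaps. The acknowledged one is real: Phase~1's $\pi_i$-free power-sum system for the $\phi_i$'s is exactly the step the paper does \emph{not} know how to do, and it does not attempt it. Instead it sidesteps the problem: the scaling parameters are obtained by a plain grid search on a $\beta$-grid with $\beta=(\theta\alpha\gamma^k/10^k)^{O(1)}$ (affordable precisely because separation means only singly-exponential-in-$k$ accuracy is ever needed downstream), and the permutations are accessed through their first $10k$ elements, which are found as heavy hitters among the observed $10k$-prefixes of samples (Claim~\ref{prefixsep}: each true prefix appears with probability $\ge\gamma^{10k}\alpha$). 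Your Phase~2 then extends prefixes to full permutations essentially as in Lemma~\ref{prefixtofull}, though note two corrections: the pairs used to build the annihilating tensor need not (and cannot always) be ``reversed in $\pi_m$'' --- $(\gamma,\alpha)$-separation permits $\pi_\ell=\pi_m$ --- and the paper takes them simply as consecutive pairs from a \emph{known} prefix, discriminating components only through $|\phi_\ell-\phi_m|\ge\gamma$; brute-forcing $k-1$ pairs over all of $[n]$ would also cost $n^{\Theta(k)}$, breaking the fixed-polynomial-in-$n$ claim.

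The second gap is Phase~3. Testing against the threshold of Corollary~\ref{localIdentifiability} requires estimating an $\ell_1$ distance to accuracy $(\epsilon\theta\alpha/nk)^{(10k)^{6k}}$, i.e.\ a doubly-exponential-in-$k$ number of samples, which destroys the $(1/(\theta\gamma\alpha))^{O(k^2)}$ bound you are trying to prove; the entire point of the beyond-worst-case section is to avoid invoking the generic identifiability machinery. The paper instead proves a bespoke test for separated mixtures (Lemma~\ref{testwithsep}) whose rejection thresholds are only $\alpha\theta(\gamma/8)^{O(k)}$, obtained by contracting order-$O(k)$ tensors against test vectors built from the \emph{candidate's} parameters. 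This creates a further subtlety you miss: that cheap test only certifies closeness in base parameters (permutation and $\phi$), not mixing weights, so a candidate with correct $(\pi_i,\phi_i)$ but wrong $w_i$ could pass. The paper's fix (Claim~\ref{nobrutemix}) is that once the base parameters are fixed, the mixing weights are \emph{computed} from the data as ratios $\langle T',Z'\rangle/\langle T_k',Z'\rangle$ rather than guessed; your Phase~1 Vandermonde inversion would have served the same purpose, but only if Phase~1 existed.
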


We give a brief outline of our algorithm. First we run the algorithm in Claim~\ref{prefixsep} to find candidate prefixes for all the base permutations and estimates of their scaling parameters. Next we run the algorithm in Lemma~\ref{prefixtofull} to extend the prefixes to a list of candidate full base permutations. Then we run the algorithm in Claim~\ref{nobrutemix} so that for every $k$ tuple of candidate base permutations and scaling parameters, we can uniquely determine accurate estimates of the mixing weights. Finally we run the algorithm in Lemma~\ref{testwithsep} to test whether a given mixture is component-wise close in parameters to the true mixture. 

\subsection{Finding the Prefixes}

Our first step is to determine the first $10k$ elements of each base permutation. More precisely, given samples from $M$, we want to find a list of candidate prefixes, so that for each base permutation its prefix appears on the list. Along the way, we will also perform a grid search over the scaling parameters. 

\begin{claim}\label{prefixsep}
Consider a mixture of $k$ Mallows models
$M = w_1 M(\phi_1, \pi_1) + \cdots + w_k M(\phi_k, \pi_k) $ that is $(\gamma, \alpha)$-separated. There is an algorithm that takes $$m = \Big( \frac{1}{2k  \alpha \gamma^{10k} \log \frac{1}{\delta}} \Big)^2$$ samples and runs in time polynomial in $n$, $1/(\alpha \beta)^k$, $1/\gamma^{k^2}$ and $\log 1/\delta$ and outputs a list of size $$s = 2^k\frac{1}{\gamma^ {10k^2}} \frac{1}{(\alpha \beta)^k }$$ where each entry is a candidate prefix of $10k$ elements and a scaling parameter. Moreover with probability at least $1-\delta$, for each $i$, the first $10k$ elements of $\pi_i$ appear as an entry in the list along with an estimate of $\phi_i$ that is close within an additive $\beta$. 
\end{claim}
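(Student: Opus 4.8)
The plan is to read the prefixes off the samples directly and to attach scaling parameters by a brute-force grid, so that no work is needed to disentangle components that happen to share a prefix. First I would draw $m$ samples from $M$ and, for each ordered $10k$-tuple $P$ of distinct elements of $[n]$, record the fraction $\widehat p(P)$ of samples whose first $10k$ entries, \emph{in order}, equal $P$. Since a permutation has a unique such prefix, the events ``$\mathrm{prefix}=P$'' are disjoint, so $\sum_P \widehat p(P)=1$. I would then keep $\mathcal P=\{P:\widehat p(P)\ge \tfrac12\alpha\gamma^{10k}\}$, which therefore automatically satisfies $|\mathcal P|\le 2/(\alpha\gamma^{10k})$, and output the list $\{(P,\psi):P\in\mathcal P,\ \psi\in G\}$, where $G$ is an equally spaced grid of $[0,1-\gamma]$ of spacing $\beta$, so $|G|\le 2/\beta$. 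Since $\phi_i\le 1-\gamma$, some $\psi\in G$ lies within $\beta$ of $\phi_i$; hence it suffices to show that with probability $\ge 1-\delta$ we have $\pi_i\vert_{[10k]}\in\mathcal P$ for every $i$. The resulting list has size at most $4/(\alpha\beta\gamma^{10k})$, which is below the claimed $s=2^k\gamma^{-10k^2}(\alpha\beta)^{-k}$ (using $\gamma,\alpha\beta<1$ and $k\ge 2$; the case $k=1$ is immediate), and the running time is dominated by bucketing the $m$ samples, which is polynomial in $n$, $m$ and $1/\beta$.

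The heart of the argument is an $n$-independent lower bound on $q_i:=\Pr_{M(\phi_i,\pi_i)}[\text{first }10k\text{ entries, in order}, =\pi_i\vert_{[10k]}]$, which I would obtain from the iterative insertion process recalled in Section~\ref{sec:notation}, exactly in the style of Lemma~\ref{fix}. Consider the event that when the $10k$ highest-ranked elements of $\pi_i$ are inserted, each is placed at the very bottom of the current block, and that every later (lower-ranked) element is then inserted into a position strictly below position $10k$. The first event has probability $\prod_{t=0}^{10k-1}\frac{1}{1+\phi_i+\cdots+\phi_i^{t}}=\frac{(1-\phi_i)^{10k}}{\prod_{s=1}^{10k}(1-\phi_i^{s})}$. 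For the second, when the $m$-th element is inserted (for $m>10k$) the chance it lands among positions $10k+1,\dots,m$ is $\frac{1+\phi_i+\cdots+\phi_i^{m-1-10k}}{1+\phi_i+\cdots+\phi_i^{m-1}}=\frac{1-\phi_i^{m-10k}}{1-\phi_i^{m}}$, and the product of these over $m=10k+1,\dots,n$ telescopes; a short manipulation then gives $q_i\ge (1-\phi_i)^{10k}\ge\gamma^{10k}$, with no dependence on $n$ (and no factorials). Therefore $\Pr_M[\mathrm{prefix}=\pi_i\vert_{[10k]}]\ge w_i q_i\ge\alpha\gamma^{10k}$, and a Chernoff bound --- which needs only $O\!\big(\log(k/\delta)/(\alpha\gamma^{10k})\big)$ samples, comfortably covered by the stated $m$ --- together with a union bound over the $\le k$ components shows that $\widehat p(\pi_i\vert_{[10k]})>\tfrac12\alpha\gamma^{10k}$ for every $i$ with probability $\ge 1-\delta$, so $\pi_i\vert_{[10k]}\in\mathcal P$ for all $i$ as required.

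I expect the insertion-process computation for $q_i$ to be the main obstacle: one must set up the conditioning so that Fact~\ref{block}-style reasoning applies, keep the insertion probabilities in the convenient form $\tfrac{1}{1+\phi+\cdots+\phi^{t}}=\tfrac{1-\phi}{1-\phi^{t+1}}$, and verify that the telescoping really does eliminate all $n$-dependent factors (so that the sample complexity carries no $n$). Everything else --- disjointness of the prefix events to control $|\mathcal P|$, the Chernoff/union bound, and the trivial grid over the scaling parameters --- is routine.
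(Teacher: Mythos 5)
Your proposal is correct and follows essentially the same route as the paper: lower-bound the probability that a sample's first $10k$ entries equal the first $10k$ elements of $\pi_i$ in order by $w_i(1-\phi_i)^{10k}\ge \alpha\gamma^{10k}$, keep the empirically heavy prefixes, and attach a $\beta$-grid of scaling parameters. Your insertion-process telescoping computation simply makes explicit the one-line probability bound the paper asserts, and the remaining bookkeeping (Chernoff plus union bound, list size, grid) matches the paper's argument.
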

\begin{proof}
The first $10k$ elements of a permutation $\pi_i$ appear as the first $10k$ elements of a random sample from the mixture with probability at least 
$$\frac{\alpha}{(1+\phi+\cdots + \phi^{n-1})^{10k}}  \geq (1-\phi)^{10k}\alpha \geq \gamma^{10k} \alpha$$
Given $m$ samples, with probability $1-\delta$, the prefixes of each permutation $\pi_i$ appears as the first $10k$ elements at least a $\frac{1}{2}\gamma^ {10k} \alpha$ fraction of the time. Now we can simply take all prefixes of $10k$ elements that appear at least a $\frac{1}{2}\gamma^{10k} \alpha$ fraction of the time and all choices of the scaling parameter in a $\beta$-grid.  This gives a list of size at most $s$ and completes the proof of the claim. 
\end{proof}

This is the only step in our algorithm which is not obvious how to implement using local queries. Nevertheless, it can be: First, query the probability that each of the $n$ elements appears as the first element in a draw from $M$. Now for each of the elements that occurs first with probability at least $\frac{1}{2}\gamma^ {10k} \alpha$, query the probability of all possible second elements (conditioned on the choice of the first element). We can repeat this process to find the heavy hitters among all possible prefixes of $10k$ elements, and can remove any prefix that does not occur with probability at least $\frac{1}{2}\gamma^ {10k} \alpha$ thus ensuring that the list of queries we need to make does not ever become too large.

\subsection{Finding the Full Permutations and Mixing Weights}

In this subsection, we show how to find all the base permutations from their prefixes. We will also show how to recover the mixing weights using the base permutations and scaling parameters. %It might seem like we can just grid search over the mixing weights. But there is a subtle issue: If we grid search over the mixing weights, we could end up with candidate mixtures where the base permutations and scaling parameters are correct, and just the mixing weights are different from the true mixture. 

\begin{lemma}\label{prefixtofull}
Suppose the conditions of Theorem~\ref{smoothed} hold. Suppose that $n \geq 10k$ and the first $10k$ elements of each permutation are known and we are given estimates $\phi'_i$ of the scaling parameters that, for each $i$, satisfy $|\phi_i-\phi_i'| \leq \beta$ with $\beta \leq \alpha\big( \frac{\gamma}{10} \big)^{3k}$. There is an algorithm that runs in time polynomial in $n$, $1/\alpha$, $1/\beta$, $\log 1/\delta$ and $1/\gamma^k$ , uses $m = \big( 2^kn  \frac{1}{\beta} \log \frac{1}{\delta} \big)^2$ samples, and outputs a list of $2^{k-1}$ permutations for each $i$ so that with probability at least $1-\delta$, each $\pi_i$ is included on the corresponding list. 
\end{lemma}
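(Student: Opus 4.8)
The plan is to recover each base permutation $\pi_i$ by determining, for every pair of elements, their relative order in $\pi_i$, and then stitching these orders together with the known length-$10k$ prefix of $\pi_i$. Every pair that meets the set of prefix elements has its order already fixed by the prefix, so the real work is to decide, for a pair $a,b$ both outside the prefix of $\pi_i$, whether $a$ precedes $b$ in $\pi_i$. The key point is that the known prefixes $p_1,\dots,p_k$, together with the approximate scaling parameters $\phi'_1,\dots,\phi'_k$, carry exactly the information needed to build a test function isolating the contribution of component $i$ to the joint distribution of a small set of elements --- in the same spirit as the tensor test functions of Lemma~\ref{Kruskal} and Lemma~\ref{Identifiability}. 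The new ingredient over those lemmas is that here we \emph{know} the scaling parameters up to $\beta$, which lets us separate component $i$ even from another component that shares its prefix.

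Concretely, fix $i$ and a pair $a,b$ outside the prefix of $\pi_i$. For each other component $j$ we produce a ``killing factor'': an element pair $P_j$, chosen disjoint from $\{a,b\}$ and from all the other $P_{j'}$, together with a two–dimensional test vector $u_j$ perpendicular to $v(M(\phi'_j,\pi_j\vert_{P_j}))$. There are three cases. If the element sets of $p_i$ and $p_j$ differ, take $P_j=\{c,d\}$ with $c\in p_i\setminus p_j$ and $d\in p_j\setminus p_i$; then $c$ precedes $d$ in $\pi_i$ and $d$ precedes $c$ in $\pi_j$, both orders are known, and one checks $|\langle u_j, v(M(\phi_i,\pi_i\vert_{P_j}))\rangle|=\Omega(\gamma)$ since two distinct probability vectors on two points are never parallel. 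If the prefixes have the same element set but differ as sequences, use an inverted pair inside the common set, again with known opposite orders. If $p_i=p_j$ as sequences, then $P_j$ must lie outside the prefix, its order in $\pi_i$ (and in $\pi_j$) is a priori unknown, but because $\phi_i$ and $\phi_j$ are $\gamma$–separated a test vector of the form $(\phi'_j,-1)$ or $(1,-\phi'_j)$ still kills component $j$ \emph{once those orders are known}, with the retained coefficient on component $i$ of magnitude at least $\tfrac{|\phi_i-\phi_j|-\beta}{2}\ge\gamma/4$. We handle the unknown orders by enumeration: over all $j$ there are at most $k-1$ such pairs, so branching over their $2^{k-1}$ possible orderings produces the list, and the branch guessing all of them correctly yields a genuine test function.

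Given a correct branch, form $T=u_{j_1}\otimes\cdots\otimes u_{j_{k-1}}\otimes(\text{indicator of }a\prec b)$ and its variant with $a\prec b$ replaced by $b\prec a$. By Corollary~\ref{blockStructure}, pairing $T$ against the joint distribution of the $\le 2k$ relevant elements under $M$ annihilates every component except $i$ and leaves $w_i$ times a product of retained coefficients times $\Pr_{M(\phi_i,\pi_i)}[a\prec b]$ (respectively $[b\prec a]$). These two values differ by at least $w_i\cdot\Omega(\gamma)^{k-1}\cdot\Omega(\gamma)$, with sign revealing the order of $a,b$ in $\pi_i$; we estimate both from the $m$ samples (each is a fixed linear combination of probabilities of events on $\le 2k$ elements) and output the order of whichever is larger. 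Ranging over all $\binom{n}{2}$ pairs and reconstructing the total order --- which is consistent on the correct branch, being the order of an actual permutation --- gives $\pi_i$, and doing this for every $i$ yields the $k$ lists.

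The step I expect to be the main obstacle is the numerology combined with the disjointness bookkeeping. The signal separating the two pairings is at least $\alpha(\gamma/10)^{O(k)}$, and it must dominate (i) the perturbation of the test vectors from using $\phi'_j$ in place of $\phi_j$, of total size $O(k\beta)$, and (ii) the sampling error in the pairings; the hypothesis $\beta\le\alpha(\gamma/10)^{3k}$ makes (i) negligible, and $m=\big(2^k n\,\tfrac1\beta\log\tfrac1\delta\big)^2$ samples make (ii) smaller than $\beta$ except with probability $\delta$ after a union bound over the polynomially many pairs, components and branches. The genuinely delicate part is arranging the $\le k-1$ killing pairs and $\{a,b\}$ to be pairwise disjoint while still knowing (or, for same‑prefix components, being able to guess with only one extra bit apiece) all the relative orders they require; this is where $n\ge 10k$ (so the prefix supplies enough disjoint pairs) and the case analysis on how $p_j$ compares to $p_i$ enter, and where the precise count $2^{k-1}$ has to be squeezed out.
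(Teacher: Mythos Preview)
Your high-level strategy --- isolate component $i$ with a tensor test function built from $k-1$ ``killing pairs'', decide the order of each external pair $(a,b)$ by comparing two contractions, and branch over at most $2^{k-1}$ unknown orientations --- is exactly the paper's plan. The gap is in how you choose the killing pairs.

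When you invoke Corollary~\ref{blockStructure}, the contribution of component $i$ is not $w_i\cdot\prod_j\langle u_j,\cdot\rangle\cdot\Pr_{M(\phi_i,\pi_i)}[a\prec b]$ as you write; there is an extra factor $\Pr_{M_i}[\pi\in\mathcal{S}_{\mathcal{B}}]$, the probability that the block structure formed by your pairs is actually satisfied. To lower-bound that factor (via Lemma~\ref{fix} or directly) you need each $P_j$ to be a pair of \emph{adjacent} elements in $\pi_i$. Your case~1 takes $P_j=\{c,d\}$ with $c$ in the prefix of $\pi_i$ and $d$ outside it, so $c$ and $d$ are certainly not adjacent in $\pi_i$; your case~3 places $P_j$ entirely outside the prefix, where you know nothing about adjacency in $\pi_i$. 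In either situation $\Pr_{M_i}[\pi\in\mathcal{S}_{\mathcal{B}}]$ can be exponentially small in $n$ (imagine $d$ at position $n$ in $\pi_i$), and the signal vanishes. If instead you intended to pair against bare relative orders with no block structure, the rank-one factorization you rely on simply fails: relative orders of disjoint pairs in a Mallows model are not independent unless the pairs are adjacent in the base permutation, and the marginal on a non-adjacent pair is not $(\tfrac{1}{1+\phi},\tfrac{\phi}{1+\phi})$, so $u_j$ no longer annihilates component $j$.

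The paper sidesteps all of this by dropping the prefix case analysis and using \emph{only} the $\phi$-separation, which is available for every $j\neq i$ under the $(\gamma,\alpha)$-separated hypothesis. To recover $\pi_i$ it takes the first $2k-2$ elements $x_1,\dots,x_{2k-2}$ of $\pi_i$'s own prefix and conditions on them occupying the first $2k-2$ positions as consecutive pairs; this block structure is satisfied by $\pi_i$ by construction, so $\Pr_{M_i}[\mathcal{S}]\ge\gamma^{2k}$. The killing vector $v_a$ for each other component $a$ is chosen orthogonal to $v(M(\phi_a,\pi_a\vert_{\{x_{2a-1},x_{2a}\}}))$, and since $|\phi_a-\phi_i|\ge\gamma$ one still gets $|\langle v_a,v(M(\phi_i,\cdot))\rangle|\ge\gamma/4$ regardless of how the prefixes of $\pi_a$ and $\pi_i$ compare. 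The one bit that is genuinely unknown is the order of $x_{2a-1},x_{2a}$ in $\pi_a$ (needed to write down $v_a$), and guessing those $k-1$ bits is exactly where the $2^{k-1}$ list size comes from.
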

\begin{proof}
To simplify the exposition, we first give an algorithm that works when the scaling parameters $\phi_i$ and moments of the distribution are exactly known. We will then show that it continues to work even when we use the estimates $\phi'_i$ instead. Our algorithm will recover each base permutation $\pi_i$ separately. Consider the first $2k-2$ elements of the last permutation $x_1,x_2, \cdots, x_{2k-2}$ and two additional elements $x$ and $y$ that we wish to recover the relative order of, in the first permutation. Now consider the set $\mathcal{S}$ of permutations where the first $2k-2$ elements are $x_1,x_2, \cdots, x_{2k-2}$ in that order except up to possibly reversing the order  of some pairs $x_{2a-1}$ and $x_{2a}$. We can now form a $2 \times 2 \times \cdots \times 2$ tensor $T_{k,x}$ of order $k-1$ where
$$T_{k,x} = p_{k,x} v(M(\phi_k, \pi_k \vert_{\{x_1, x_2\}})) \otimes \cdots \otimes v(M(\phi_k, \pi_k \vert_{\{x_{2k-3}, x_{2k-2}\}}))$$
where $p_{k,x} = \Pr_{M(\phi_k, \pi_k)}[\pi \in \mathcal{S} \mbox{ and } \pi(x) < \pi(y)]$.
We can analogously define $T_{k,y}$ but where we require $\pi(y) < \pi(x)$. Finally we set 
$$T_x = \sum_i w_i T_{i,x} \mbox{ and } T_y = \sum_i w_i T_{i,y}$$

Now we guess, for each $a$ from $1$ to $k-1$, the relative ordering of $x_{2a-1}$ and $x_{2a}$. As usual we can construct test functions for isolating a single component. In particular, if $x_{2a-1}$ occurs before $x_{2a}$ in $\pi_a$ set $v_a = (\frac{\phi_a}{1+\phi_a}, \frac{-1}{1+\phi_a})$. Otherwise set $v_a = (\frac{1}{1+\phi_a}, \frac{-\phi_a}{1+\phi_a})$. It is easy to see that $v_a$ is orthogonal to $v(M(\phi_a, \pi_a \vert_{\{x_{2a-1}, x_{2a}\}}))$ and furthermore
$$| \langle v_a, v(M(\phi_b, \pi_b \vert_{\{x_{2b-1}, x_{2b}\}})) \rangle | \geq \frac{|\phi_a - \phi_b|}{4}$$
Now set
$Z = v_1 \otimes v_2 \otimes \cdots \otimes v_{k-1}$
and it is easy to see that if $x$ occurs before $y$ in $\pi_k$ we have
$$|\langle Z, T_x \rangle| = |\langle Z, T_{k,x} \rangle| \geq \Big (\frac{1}{\phi_k}\Big ) |\langle Z, T_y \rangle| = |\langle Z, T_{k,y} \rangle|$$
and if $y$ occurs before $x$, the same inequality holds but with the roles of $x$ and $y$ interchanged. Thus we can decide if $x$ occurs before $y$ in $\pi_k$ by comparing the values of $\langle Z, T_x \rangle$ and $\langle Z, T_y \rangle$. Moreover we can repeat this procedure for every pair of elements (using the same guess) to recover the entire base permutation $\pi_k$. 

Now we remove the assumption that the scaling parameters are known, and also bound the error introduced by estimating the tensors $T_x$ and $T_y$ from samples. First, if we take  $m = \big( 2^kn  \frac{1}{\beta} \log \frac{1}{\delta} \big)^2$ samples then with probability at least $1 - \frac{\delta}{kn^2}$ we have that $\|T_x - T'_x\|_1, \|T_y - T'_y\|_1 \leq \beta$ where $T'_x$ and $T'_y$ are the obvious empirical estimates. Next we construct $Z'$ using the $\phi'_i$'s instead of the $\phi_i$'s. It is easy to see that $\|Z-Z'\|_1 \leq k 2^k \beta$. Finally we have
$$\Big| |\langle Z, T_x \rangle| - |\langle Z, T_y \rangle| \Big| \geq \frac{1-\phi_k}{1+\phi_k}w_k\Big(p_{k,x}+p_{k,y}\Big)\Big(\frac{\gamma}{4}\Big)^{k-1}$$
and moreover
$$p_{k,x}+p_{k,y} \geq \frac{1}{(1+ \phi_k + \cdots + \phi_k^{n-1})^{2k}} \geq \gamma^ {2k}$$
which follows because $p_{k,x}+p_{k,y}$ is at least the probability that the first $2k-2$ elements of $\pi_k$ occur in order, for a draw from $M(\phi_k, \pi_k)$. Putting it all together we have
$$\Big| |\langle Z, T_x \rangle| - |\langle Z, T_y \rangle| \Big| \geq \alpha \Big ( \frac{\gamma}{4} \Big)^{k-1}$$
Now since we set $\beta \leq \alpha\big( \frac{\gamma}{10} \big)^{3k}$, it follows that we can compare the values of $\langle Z', T'_x \rangle$ and $\langle Z', T'_y \rangle$ to recover the relative order of $x$ and $y$ in $\pi_k$ with failure probability at most $\frac{\delta}{n^2k}$.  Now union bounding over the $n^2k$ steps (when our guess is correct), we conclude that our algorithm succeeds with probability at least $1-\delta$.
\end{proof}

It might seem like we can now just grid search over the mixing weights. But there is a subtle issue: If all the base permutations and scaling parameters are correct, but the mixing weights are not, our testing algorithm might not be able to tell. For this reason we need to make sure that once we have a candidate set of base permutations and their scaling parameters, we do not need to do any guessing to determine the mixing weights. The following claim shows how the ideas in the proof of Lemma~\ref{prefixtofull} can be adapted to resolve this issue. 

\begin{claim}\label{nobrutemix}
Suppose the conditions of Theorem~\ref{smoothed} hold. Furthermore suppose the base permutations are known and we are given estimates $\phi'_i$ of the scaling parameters that, for each $i$, satisfy $|\phi_i-\phi_i'| < \beta$ with $\beta \leq \alpha^2\big( \frac{\gamma}{10} \big)^{6k}$. There is an algorithm that runs in time polynomial in $n$, $1/\beta$ and $\log 1/\delta$ time and uses $m = \big( 2^k n  \frac{1}{\beta} \log \frac{1}{\delta} \big)^2$ samples and outputs estimates of the mixing weights that satisfy $$|w_i - w'_i| \leq  \frac{\beta}{\alpha}\Big( \frac{10}{ \gamma} \Big)^{4k}$$ for each $i$, with probability at least $1-\delta$. 
\end{claim}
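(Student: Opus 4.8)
The plan is to reuse the test‑function machinery from the proof of Lemma~\ref{prefixtofull}, but to exploit that the base permutations are now known \emph{exactly}, so that for each index $\ell$ we can write down a single tensor that isolates the $\ell$th component and then read off $w_\ell$ by a division, with no brute‑force search over mixing weights. Fix $\ell$. Let $x_1,\dots,x_{2k-2}$ be the first $2k-2$ elements of $\pi_\ell$, grouped into pairs $P_a=\{x_{2a-1},x_{2a}\}$ for $a=1,\dots,k-1$, and let $i_1,\dots,i_{k-1}$ enumerate $\{1,\dots,k\}\setminus\{\ell\}$. Let $\mathcal{S}$ be the set of permutations whose first $2k-2$ positions are occupied by $x_1,\dots,x_{2k-2}$ in that order up to possibly reversing some pair $P_a$; these $2^{k-1}$ sub‑events are disjoint, their union is $\mathcal{S}$, and each is an event about the locations of $2k-2$ fixed elements and hence estimable from samples (and realizable as a local query). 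As in Definition~\ref{def:tensor} and Corollary~\ref{blockStructure}, the associated $2\times\cdots\times 2$ tensor of order $k-1$ decomposes as $T=\sum_i w_i T_i$, where $T_i=p_i\,v(M(\phi_i,\pi_i\vert_{P_1}))\otimes\cdots\otimes v(M(\phi_i,\pi_i\vert_{P_{k-1}}))$ and $p_i=\Pr_{M(\phi_i,\pi_i)}[\pi\in\mathcal{S}]$. For each $a$ let $v_a$ be the bounded vector orthogonal to $v(M(\phi'_{i_a},\pi_{i_a}\vert_{P_a}))$ (computable since $\pi_{i_a}$ is known) and put $Z=v_1\otimes\cdots\otimes v_{k-1}$.

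Two elementary facts drive everything. First, $\langle Z,T_\ell\rangle=p_\ell\prod_a\langle v_a,v(M(\phi_\ell,\pi_\ell\vert_{P_a}))\rangle$, and by the separation hypothesis each of these inner products has magnitude at least $\gamma/4$ (a short case split on whether $P_a$ is reversed in $\pi_{i_a}$, using $|\phi'_\ell-\phi'_{i_a}|\ge\gamma-2\beta\ge\gamma/2$), while $p_\ell\ge\gamma^{2k}$ by the insertion‑model bound already used in the proof of Lemma~\ref{prefixtofull} (the event $\mathcal{S}$ contains ``$x_1,\dots,x_{2k-2}$ occur as the first $2k-2$ positions in order,'' which is consistent with $\pi_\ell$). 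Hence $|\langle Z,T_\ell\rangle|\ge\gamma^{2k}(\gamma/4)^{k-1}$. Second, for $i\ne\ell$ the factor indexed by the $a$ with $i_a=i$ equals $\langle v_a,\,v(M(\phi_i,\pi_i\vert_{P_a}))-v(M(\phi'_i,\pi_i\vert_{P_a}))\rangle=O(\beta)$, since a two‑dimensional restriction vector is $O(1)$‑Lipschitz in its scaling parameter, so $|\langle Z,T_i\rangle|=O(\beta)$ and $\langle Z,T\rangle=w_\ell\langle Z,T_\ell\rangle+O(k\beta)$. The algorithm therefore takes $m$ samples, forms the empirical tensor $T'$, builds $Z'$ from the $\phi'_i$, evaluates the explicit rational function $\widehat{\langle Z,T_\ell\rangle}$ of $\phi'_\ell$, and outputs $w'_\ell=\langle Z',T'\rangle/\widehat{\langle Z,T_\ell\rangle}$.

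For the error analysis I would control three quantities. (i) $\|T-T'\|_1\le\beta$ with probability $1-\delta/(k2^k)$: each of the $2^{k-1}$ entries is a probability, so $m=(2^k n\beta^{-1}\log\tfrac1\delta)^2$ samples suffice by a Chernoff bound and a union bound over entries and over $\ell$. (ii) $\|Z-Z'\|_\infty=O(2^k\beta)$, again from the $O(\beta)$‑Lipschitzness of the $v_a$'s. (iii) $|\langle Z,T_\ell\rangle-\widehat{\langle Z,T_\ell\rangle}|=O(\gamma^{-O(k)}\beta)$: $p_\ell$ and each $\langle v_a,v(M(\phi_\ell,\pi_\ell\vert_{P_a}))\rangle$ are rational functions of $\phi'_\ell$ whose derivatives on $[0,1-\gamma]$ are bounded by $\poly(1/\gamma)$ with \emph{no} dependence on $n$, using $1+\phi+\cdots+\phi^{j}=(1-\phi^{j+1})/(1-\phi)$ to see that $(j+1)\phi^{j}\le O(1/\gamma)$ uniformly in $j$. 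Plugging these into the quotient estimate $|\tfrac ab-\tfrac{a'}{b'}|\le\tfrac{|a-a'|}{|b'|}+\tfrac{|a|}{|b|}\tfrac{|b-b'|}{|b'|}$ with $|\widehat{\langle Z,T_\ell\rangle}|\ge\tfrac12\gamma^{2k}(\gamma/4)^{k-1}$ and $w_\ell\le1$ yields $|w_\ell-w'_\ell|=O(2^{O(k)}\gamma^{-O(k)}\beta)$, which is comfortably below the target $\tfrac{\beta}{\alpha}(10/\gamma)^{4k}$ because $1/\alpha\ge k$ and $(10/\gamma)^{4k}$ carries a factor $(10/\gamma)^{k}\ge 10^{k}$ of slack. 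Repeating over $\ell=1,\dots,k$ and union‑bounding the sample events finishes the proof; the running time is a fixed polynomial in $n$, $1/\beta$ and $\log\tfrac1\delta$ once one notes that the $2^{O(k)}$ overheads are $\poly(1/\beta)$ under the hypothesis $\beta\le\alpha^2(\gamma/10)^{6k}$.

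The step I expect to be the real obstacle is reconciling (ii)--(iii) with the annihilation property: because we only hold approximations $\phi'_i$, the test tensor $Z$ fails to kill the other components exactly, and at the same time the quantity $\langle Z,T_\ell\rangle$ we must divide by is known only through $\phi'_\ell$. Keeping both errors at scale $2^{O(k)}\gamma^{-O(k)}\beta$ --- in particular, making sure no stray $\poly(n)$ enters the Lipschitz constants or the lower bound on $p_\ell$ --- is the delicate part, and it is exactly what forces the hypothesis $\beta\le\alpha^2(\gamma/10)^{6k}$.
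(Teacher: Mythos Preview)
Your proposal is correct and follows essentially the same approach as the paper: both construct the order-$(k-1)$ tensor $T$ on the prefix pairs of $\pi_\ell$, build the annihilating test tensor $Z$ from the known orderings and approximate $\phi'_i$'s (no guessing needed now), and output $w'_\ell$ as the ratio $\langle T',Z'\rangle/\langle T'_\ell,Z'\rangle$, with the same three error sources (sampling error in $T$, the $O(\beta)$ leakage of $Z$ on the other components, and the $O(\gamma^{-O(k)}\beta)$ perturbation of the explicit denominator). Your write-up has a minor notational slip---$Z$ is defined via the $\phi'_i$'s in the second paragraph yet a separate $Z'$ reappears in item~(ii)---but the substance matches the paper's argument and bounds.
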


\begin{proof}
Recall we defined $T_{k,x}$ and $T_k$ in Lemma~\ref{prefixtofull}. Here we will use a variant of these constructions. In particular, $T_k$ has the same definition as $T_{k,x}$ except that (for the same definition of $\mathcal{S}$) we replace $p_{k,x}$ with $p_k = \Pr_{M(\phi_k, \pi_k)}[\pi \in \mathcal{S}]$. Similarly let $T = \sum_i w_i T_i$. Now we can use $m$ samples to compute $T'$ with the property $\|T - T' \|_1 \leq \beta$. We can also compute $Z'$ with $\|Z-Z'\|_1 \leq k 2^k \beta$, but this time without any guessing because we are assuming that the base permutations are all known. Finally note that the entries of $T_k$ only depend on $\pi_k$ and $\phi_k$. Let $T'_k$ be the result of replacing $\phi_k$ (which we do not know) with $\phi'_k$. 

Next we bound the difference between $T_k$ and $T'_k$: Note that each entry of $T_k$ is of the form
$$\frac{\phi_k^i}{(1+\phi_k + \cdots + \phi_k^{n-1}) \cdots (1+ \cdots + \phi_k^{n-2k+2})}$$
where $i$ is the number of pairs $x_{2a-1}$ and $x_{2a}$ (from the $2k-2$ prefix of $\pi_k$) whose order is exchanged. In $T'_k$, the entries are of the same form but with $\phi_k$ replaced with $\phi'_k$. Suppose $\phi_k \leq \phi'_k$. Then if $a$ and $a'$ are two corresponding entries in $T_k$ and $T'_k$ respectively we have
$$\frac{a}{a'} \leq \frac{(1+\phi_k' + \cdots + \phi_k'^{n-1}) \cdots (1+ \cdots + \phi_k'^{n-2k+2})}{(1+\phi_k + \cdots + \phi_k^{n-1}) \cdots (1+ \cdots + \phi_k^{n-2k+2})} \leq \frac{(1-\phi_k)^{2k-2}}{(1-\phi_k')^{2k-2}} \leq (1+\frac{\beta}{\gamma})^{2k-2} \leq 1+\frac{4k\beta}{\gamma}$$
for $\beta \leq \frac{1}{2k \gamma}$. Thus it follows that the total difference between $T_k$ and $T'_k$ over entries where the former is larger, is at most $ \frac{4k \beta}{\gamma}$ since the sum of entries of $T'_k$ is one. Similarly the difference between $T_k$ and $T'_k$ over entries where the latter is larger is at most
$$(1+\phi_k')^{k-1} - (1+\phi_k)^{k-1} \leq (k-1)(1+\phi_k')^{k-2} (\phi_k'-\phi_k) \leq k2^k\beta$$
Now we can estimate the mixing weight of the $k$th component as
$$w'_k = \frac{\langle T', Z' \rangle}{\langle T'_k, Z' \rangle}$$
Note that $\langle T, Z \rangle = w_k \langle T_k, Z \rangle$. Also $w_k \geq \alpha$ and $$|\langle T_k, Z \rangle| \geq p_k \Big(\frac{\gamma}{4}\Big)^{k-1} \geq \Big(\frac{\gamma}{4}\Big)^{3k}$$.

Now note that for real numbers $x,y,x',y'$ with $|x|, |y|, |x'|, |y'| \leq 1$
$$|xy - x'y'| = |x(y-y') + (x-x')y'| \leq |y-y'| + |x - x'|$$
Since all entries of $T_k, T, Z, T_k', T', Z'$ are less than $1$, we have
$$| \langle T_k, Z \rangle| - | \langle T_k', Z' \rangle| \leq \| T_k - T_k' \|_1 + \| Z - Z' \|_1 \leq k2^k \frac{\beta}{\gamma}$$

$$| \langle T, Z \rangle| - | \langle T', Z' \rangle| \leq \| T - T' \|_1 + \| Z - Z' \|_1 \leq (k2^k + 1) \beta$$

Putting everything together and using our assumption about $\beta$, we have
$$|w_k - w_k'| = \Big| \frac{\langle T', Z' \rangle}{\langle T'_k, Z' \rangle} - \frac{\langle T, Z \rangle}{\langle T_k, Z \rangle}\Big|  \leq \frac{ |(k2^k + 1)\beta \langle T_k, Z \rangle| + |k2^k\frac{\beta}{\gamma}\langle T, Z \rangle|}{|\langle T_k, Z \rangle \langle T_k', Z' \rangle|} \leq \frac{4k2^k\frac{\beta}{\gamma}}{\alpha (\frac{\gamma}{4})^{3k}} \leq  \frac{\beta}{\alpha}\Big( \frac{10}{ \gamma} \Big)^{4k}$$
which completes the proof.
\end{proof}

\subsection{Testing Closeness for Separated Mixtures}

The final piece of our algorithm is a method for testing if two $(\frac{\gamma}{2}, \alpha)$-separated mixtures $M$ and $M'$ are component-wise close in parameters.

\begin{lemma}\label{testwithsep}
Suppose we are given sample access to a mixture of $k$ Mallows models $M = w_1 M(\phi_1, \pi_1) + \cdots + w_k M(\phi_k, \pi_k) $ and an estimate $M' = w'_1 M(\phi'_1, \pi'_1) + \cdots + w_k M(\phi'_k, \pi'_k) $ and both are $(\frac{\gamma}{2}, \alpha)$-separated. Finally suppose $n \geq 10k$ and $\theta \leq \frac{\gamma}{10}$. There is an algorithm which given 
$$m = \Big( \frac{nk 10^k \log \frac{1}{\delta}}{\theta \alpha  \gamma^k} \Big)^{20}$$
samples from $M$ runs in polynomial in $n$, $1/\gamma^k$, $1/\alpha$, $1/\theta$ and $\log 1/\delta$ time and if $M$ and $M'$ are not component-wise $\theta$-close in base parameters, rejects with probability at least $1-\delta$. And if they are component wise $\theta'$-close in parameters, for
$$\theta' = \big( \frac{\theta \alpha \gamma^k}{10^k} \big)^{50}$$
it accepts with probability at least $1-\delta$. 
\end{lemma}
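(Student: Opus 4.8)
The plan is to run the separated-mixture test functions of Lemma~\ref{prefixtofull} and Claim~\ref{nobrutemix} in a \emph{verification} rather than a \emph{recovery} mode: those constructions already show how to isolate a single component of a $(\gamma/2,\alpha)$-separated mixture from a handful of statistics on $O(k)$ positions once approximate prefixes and scaling parameters are available, and here the candidate $M'$ supplies exactly those. First I would draw the $m$ samples from $M$ and form empirical estimates, to $\ell_1$-error $\eta_0:=\poly(\theta\alpha\gamma^k/10^k)$, of precisely the statistics used there --- the order-$(k-1)$ tensors $T_x,T_y,T$ together with their one-extra-factor augmentations recording the relative order of an arbitrary pair $(x,y)$; each depends on only $O(k)$ positions, so $m=\Theta(\eta_0^{-2}\log(\poly(n,k)/\delta))$ samples suffice, matching the stated bound. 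From the explicitly known $M'$ I would compute the closed forms of the same statistics.

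Next, for each $i$ I would build the test tensor $Z^{(i)}=v_1\otimes\cdots\otimes v_{k-1}$ as in Lemma~\ref{prefixtofull}, seeded by the prefix of $\pi'_i$, the scaling parameters $\phi'_1,\dots,\phi'_k$ of $M'$, and the pairwise orders read off $\pi'_1,\dots,\pi'_k$; since $v_a\perp v(M(\phi'_a,\pi'_a\vert_{\{\cdot\}}))$, the functional $\langle Z^{(i)},\cdot\rangle$ annihilates every component of $M'$ other than the $i$-th. Evaluating $\langle Z^{(i)},\cdot\rangle$ and its pair-augmented variants on the estimated statistics of $M$, and using the $\langle T,Z\rangle=w_k\langle T_k,Z\rangle$-type identities of Claim~\ref{nobrutemix} together with the ratio trick (for a pair adjacent in $\pi'_i$, the value with that pair in order divided by the value with it reversed equals $1/\phi$ of the isolated component), I would read off for each $i$ a candidate triple $(\tilde w_i,\tilde\phi_i,\tilde\pi_i)$. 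The algorithm \textbf{accepts} iff $\tilde\pi_i=\pi'_i$, $|\tilde\phi_i-\phi'_i|\le\theta/2$ and $|\tilde w_i-w'_i|\le\sqrt{\theta'}$ for every $i$ (and iff the auxiliary grid of test tensors introduced below is also consistent).

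For completeness, suppose $M$ and $M'$ are component-wise $\theta'$-close in parameters. Then $Z^{(i)}$, built from $\phi'_j$ within $\theta'$ of $\phi_j$ and from $\pi'_j=\pi_j$, satisfies $|\langle v_a,v(M(\phi_a,\pi_a\vert_{\{\cdot\}}))\rangle|\le\theta'$, so the residual from the $k-1$ ``annihilated'' components of $M$ is $O(k\theta')$; together with $\eta_0$ this lies far below the signal $w_i|\langle Z^{(i)},T_i\rangle|\ge\alpha(\gamma/4)^{O(k)}$ established in Lemma~\ref{prefixtofull}. Hence every order decision and ratio read-off comes out as it would on $M'$ itself, giving $\tilde\pi_i=\pi_i=\pi'_i$, $|\tilde\phi_i-\phi_i|\le\theta/4$, and $\tilde w_i\approx w_i$; combined with $|\phi_i-\phi'_i|,|w_i-w'_i|\le\theta'$ all the checks pass, and a Chernoff bound over the $\poly(n,k)$ statistics yields failure probability $\le\delta$. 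The displayed choices of $\eta_0$ and $\theta'$ are exactly what close up these slack estimates.

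For soundness I need the converse: if the test accepts then $M$ and $M'$ are component-wise $\theta$-close in base parameters. The danger is that when $M'$ is far from $M$ the tensors $Z^{(i)}$ built from $M'$ need not cleanly isolate any component of $M$, so $(\tilde\phi_i,\tilde\pi_i)$ could be a spurious blend of several components of $M$ that still meets the checks. I would preclude this by \emph{symmetrizing} the check: besides the $Z^{(i)}$ coming from $M'$, also build the analogous test tensors over a $\beta$-grid of scaling-parameter vectors (with $\beta:=\poly(\theta')$) and all $2^{O(k)}$ pairwise-order patterns, and require that \emph{every} one of them take, up to $2\eta_0$, the same value on the estimated statistics of $M$ as on those of $M'$. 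Taking the grid point $\vec\phi^\star$ within $\beta$ of the true $(\phi_1,\dots,\phi_k)$ and the order pattern of $(\pi_1,\dots,\pi_k)$ (with the $i$-th slot omitted), the corresponding tensor isolates on $M$'s statistics a genuine signal $\ge\alpha(\gamma/4)^{O(k)}$ from which $(\phi_i,\pi_i)$ is read off; but it annihilates any Mallows model whose scaling parameter misses the relevant grid coordinate by more than $2\beta$ or whose queried pair is oppositely ordered, so matching that value forces $M'$ to contain a component with scaling parameter within $O(\beta)$ of $\phi_i$ agreeing with $\pi_i$ on the queried pair, and ranging over all pairs and all $i$ one extracts a bijection under which base permutations are equal and scaling parameters agree to within $\theta$ (here $n\ge10k$ and $\theta\le\gamma/10$ guarantee that the ``closest scaling parameter'' matching is well-defined). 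The main obstacle is making this last step rigorous: bounding the cross terms in $\langle Z,\cdot\rangle$ for grid points that only approximately match $M$, accounting for the possibly several non-annihilated $M'$-terms, and checking that the read-off maps are Lipschitz enough that $2\eta_0$-agreement of values forces $\theta$-agreement of parameters. This is precisely where the quantitative $\gamma^{O(k)}$ lower bound on the signal --- not the far weaker general identifiability bound --- is indispensable, and it is what pins down the admissible $\eta_0$ and hence the sample size $m$.
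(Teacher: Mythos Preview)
Your approach diverges from the paper's in a way that leaves a real gap in the soundness direction. The paper does \emph{not} try to read off parameters of $M$ through the test tensors of $M'$. Instead, its algorithm simply compares $\ell_1$-distances of tensors: for each component $i$ of $M'$ it takes the first $4k-4$ elements of $\pi'_i$ as the block structure, forms the order-$(2k-2)$ tensors $T_x,T_y$ (from samples of $M$) and $T'_x,T'_y$ (analytically from $M'$) for every pair $(x,y)$ outside that prefix, and rejects if any $\|T_x-T'_x\|_1$ or $\|T_y-T'_y\|_1$ is large; similarly with an order-$(2k-1)$ tensor $T$ for the case of a mismatched scaling parameter. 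The crucial idea in the \emph{analysis} is a ``double-width'' test tensor
\[
Z = v_{2,1}\otimes\cdots\otimes v_{k,k-1}\otimes v'_{2,k}\otimes\cdots\otimes v'_{k,2k-2},
\]
where the first $k-1$ factors are orthogonal to the non-target components of $M$ and the last $k-1$ are orthogonal to the non-target components of $M'$. This $Z$ is never computed (it depends on the unknown $\phi_i,\pi_i$), but its existence shows that $\langle Z,T_x\rangle$ and $\langle Z,T'_x\rangle$ each see exactly one component, so a disagreement between $\pi_1$ and $\pi'_1$ on the pair $(x,y)$ forces $\|T_x-T'_x\|_1+\|T_y-T'_y\|_1\ge\alpha(\gamma/8)^{O(k)}$.

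Your order-$(k-1)$ tensors $Z^{(i)}$ isolate a component of $M'$ but not of $M$; you acknowledge this and propose a grid of $Z$'s in lieu. The step that does not go through is ``matching that value forces $M'$ to contain a component with scaling parameter within $O(\beta)$ of $\phi_i$ agreeing with $\pi_i$ on the queried pair.'' For a grid tensor $Z$ that nearly isolates $M_1$, the quantity $\langle Z,T_{M'}\rangle=\sum_j w'_j\langle Z,T'_j\rangle$ is a sum of $k$ generally nonzero terms (each of size up to $p'_j\prod_a|\phi^{\mathrm{grid}}_a-\phi'_j|/4$), and there is no mechanism preventing these from reproducing the single value $w_1\langle Z,T_1\rangle$ by cancellation without any $M'_j$ being close to $M_1$. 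Varying the grid point gives a family of scalar constraints, but turning those into component-wise closeness is precisely an identifiability statement; the only such statement available with the needed $\gamma^{O(k)}$ strength is Lemma~\ref{RobustKruskal}, which requires the $\phi$'s to be nearly equal and hence does not apply to the $(\gamma/2,\alpha)$-separated regime. The double-width $Z$ is exactly the device that bypasses this: by doubling the number of prefix pairs to $2(k-1)$ one can annihilate the $k-1$ unwanted components of \emph{each} mixture separately, and then no cancellation argument is needed at all.
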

\begin{proof}
Suppose that $M$ and $M'$ are not component-wise $\theta$-close in base parameters. From the fact that $\theta \leq \frac{\gamma}{10}$, there is a component, say $M(\phi'_1, \pi'_1)$ that is not $\theta$-close (i.e. has the same base permutation and scaling parameter within an additive $\theta$) to any component of $M$. Suppose that $M(\phi_1, \pi_1)$ is the component of $M$ whose scaling parameter is the closest to $\phi'_1$.  We will break into two cases depending on whether $|\phi_1 - \phi'_1| \leq \theta$.

\textit{Proof of First Case} First consider when $|\phi_1 - \phi'_1| \leq\theta$. Then there must be a pair of elements, say $x$ and $y$, whose order in $\pi_1$ is different from their order in $\pi'_1$. We will suppose without loss of generality that $x$ and $y$ are not in the first $4k$ elements of $\pi'_1$ (as otherwise they will not be in the last $4k$ elements, and we can repeat the entire argument but globally flipping all the orderings).  Now suppose the first $4k-4$ elements of $\pi'_1$ are $x_1,x_2, \cdots, x_{4k-5},x_{4k-4}$. We will use a variant of a construction we have used many times already, e.g. in Lemma~\ref{prefixtofull} where we consider the set $\mathcal{S}$ of permutations where these first $4k-4$ elements appear first, except up to flipping pairs $x_{2a-1}$ and $x_{2a}$. For each component $M(\phi_i, \pi_i)$ we can define the tensor $T_{i, x}$ to be the $2 \times 2 \times \cdots \times 2$ order $2k-2$ tensor where
$$T_{i,x} = p_{i,x} v(M(\phi_i, \pi_i \vert_{\{x_1, x_2\}})) \otimes \cdots \otimes v(M(\phi_i, \pi_i \vert_{\{x_{4k-5}, x_{4k-4}\}}))$$
And as usual $p_{i,x} = \Pr_{M(\phi_i, \pi_i)}[\pi \in \mathcal{S} \mbox{ and } \pi(x) < \pi(y)]$. Define $T_{i, y}$ analogously but with the roles of $x$ and $y$ interchanged. Also let $T'_{i,x}$ and $T'_{i,y}$ denote the corresponding tensors using the components of $M'$ instead. And finally let 
$$T_x = \sum_i w_i T_{i,x}, T_y = \sum_i w_i T_{i,y}, T'_x = \sum_i w'_i T'_{i,x} \mbox{ and } T'_y = \sum_i w'_i T'_{i,y}$$
Each of these tensors can be approximated by sampling from $M$ or $M'$. Now if $x_{2a-1}$ occurs before $x_{2a}$ in $\pi_i$ set $v_{i,a} = (\frac{\phi_i}{1+\phi_i}, \frac{-1}{1+\phi_i})$. Otherwise set $v_{i,a} = (\frac{1}{1+\phi_i}, \frac{-\phi_i}{1+\phi_i})$. Similarly we can define $v'_{i,a}$. With this definition in hand, we set
$$Z = v_{2, 1} \otimes \cdots \otimes v_{k, k-1} \otimes v'_{2, k-2}, \otimes \cdots \otimes v'_{k, 2k-2}$$
By construction (if, say $x$ occurs before $y$ in $\pi_1$) we have that $|\langle Z, T_x \rangle | > |\langle Z, T_y \rangle |$ and $|\langle Z, T'_x \rangle | < |\langle Z, T'_y \rangle | $. Repeating essentially the same calculations as in Lemma~\ref{prefixtofull} it is easy to see that the gap between these two inequalities is at least $\alpha\big( \frac{\gamma}{8} \big)^{6k}$. Using the largest entry in absolute value in $Z$ is at most one, we conclude that if we take $m$ samples then either $\|T_x - T'_x\|_1$ or $\|T_y - T'_y\|_1$ will be at least  $\frac{1}{3}\alpha\big( \frac{\gamma}{8} \big)^{6k}$ in which case we reject. It is easy to see that the failure probability in this case is at most $\frac{\delta}{10kn^2}$.

Now we must show that if $M$ and $M'$ are component wise $\theta'$-close in parameters then $\|T_x - T'_x\|_1$ and $\|T_y - T'_y\|_1$ will both be less than $\frac{1}{4}\alpha\big( \frac{\gamma}{8} \big)^{6k}$ with failure probability at most $\frac{\delta}{10kn^2}$.

We will bound each error $\|T_{i,x} - T'_{i,x}\|_1$ separately.  WLOG $\phi_i < \phi_i'$ and $x$ is ranked ahead of $y$ in $\pi_i$.  Let $X = [n] \backslash \{ x_1, x_2, \dots , x_{4k-4} \}$.    Consider an entry of $T_{i,x}$.    The value can be expressed in the form
$$
\frac{\phi_i^{c_i}}{(1+ \phi_i+ \dots + \phi_i^{n-1}) \dots (1+ \phi_i+ \dots + \phi_i^{n-(4k-4)})} \Pr_{M(\phi_i, \pi_i|_X)}[ \pi(x) < \pi(y)]
$$ 
where $c_i$ is some integer.   We will bound the error incurred by replacing $\phi_i$ with $\phi_i'$ in each of the above terms.  First note that 
$$
\begin{multlined}
\phi_i^{c_i} - \phi_i'^{c_i} \leq \\
\frac{\phi_i^{c_i}}{(1+ \phi_i+ \dots + \phi_i^{n-1}) \dots (1+ \phi_i+ \dots + \phi_i^{n-(4k-4)})} - \frac{\phi_i'^{c_i}}{(1+ \phi_i'+ \dots + \phi_i'^{n-1}) \dots (1+ \phi_i'+ \dots + \phi_i'^{n-(4k-4)})} \\ 
\leq \frac{(1+ \phi_i'+ \dots + \phi_i'^{n-1}) \dots (1+ \phi_i'+ \dots + \phi_i'^{n-(4k-4)})}{(1+ \phi_i+ \dots + \phi_i^{n-1}) \dots (1+ \phi_i+ \dots + \phi_i^{n-(4k-4)})} - 1 \leq \frac{(1-\phi_i')^{4k-4}}{(1 - \phi_i)^{4k-4}} - 1
\end{multlined}
$$
and $|\phi_i'^{c_i} - \phi_i^{c_i}| \leq c_i\phi_i'^{c_i - 1}$.  The maximum of $c_i\phi_i'^{c_i - 1}$ occurs when $c_i = \frac{1}{-\log \phi_i'}$ and thus
$$
|\phi_i'^{c_i} - \phi_i^{c_i}| \leq \frac{1}{-\log \phi_i'} \leq \frac{1}{1 - \phi_i'}
$$
Now, we show an explicit method to compute $\Pr_{M(\phi_i, \pi_i|_X)}[ \pi(x) < \pi(y)]$.  We note that if $x$ ranked exactly $d$ elements ahead of $y$ in $\pi_i|X$ then 
$$
\Pr_{M(\phi_i, \pi_i|_X)}[ \pi(x) < \pi(y)] = \Pr_{M(\phi_i, (1,2, \dots d))}[ \pi(1) < \pi(d)]
$$
The last expression can be expressed as
$$
\frac{\sum_{1 \leq r < s \leq d}\phi^{r-1 + d-s}}{(1+ \phi_i + \dots + \phi_i^{d-1})(1+ \phi_i + \dots + \phi_i^{d-2})}
$$
since each term in the numerator represents the probability that $1$ maps to position $r$ and $d$ maps to position $s$. Note that $\Pr_{M(\phi_i, (1,2, \dots d))}[ \pi(1) < \pi(d)] \geq  \Pr_{M(\phi_i', (1,2, \dots d))}[ \pi(1) < \pi(d)]$.  However,
$$
\frac{\Pr_{M(\phi_i, (1,2, \dots d))}[ \pi(1) < \pi(d)]}{\Pr_{M(\phi_i', (1,2, \dots d))}[ \pi(1) < \pi(d)]} \leq \frac{(1+ \phi_i' + \dots + \phi_i'^{d-1})(1+ \phi_i' + \dots + \phi_i'^{d-2})}{(1+ \phi_i + \dots + \phi_i^{d-1})(1+ \phi_i + \dots + \phi_i^{d-2})} \leq \frac{(1- \phi_i')^2}{(1 - \phi_i)^2}
$$
and since clearly $\Pr_{M(\phi_i', (1,2, \dots d))}[ \pi(1) < \pi(d)] \leq 1$,
$$
|\Pr_{M(\phi_i, (1,2, \dots d))}[ \pi(1) < \pi(d)] - \Pr_{M(\phi_i, (1,2, \dots d))}[ \pi(1) < \pi(d)]  | \leq \frac{(1- \phi_i')^2}{(1 - \phi_i)^2} - 1
$$
Combining all of the above inequalities, we can bound $\|T_{i,x} - T'_{i,x}\|_1$ entrywise.  Similarly, we can bound $\|T_{i,y} - T'_{i,y}\|_1$ and altogether, we can check that with our choice of $\theta'$, if the mixtures $M$ and $M'$ are $\theta'$-close then $\|T_{i,x} - T'_{i,x}\|_1, \|T_{i,y} - T'_{i,y}\|_1 < \frac{1}{4}\alpha\big( \frac{\gamma}{8} \big)^{6k}$ with failure probability at most $\frac{\delta}{10kn^2}$.  This completes the proof of the first case.

\textit{Proof of Second Case} The second case where $|\phi_1 - \phi'_1| > \theta$ can be handled similarly: We construct a $2 \times 2 \times \cdots \times 2$ order $2k-1$ tensor $T$ based on the first $4k-2$ elements (i.e. we do not  use $x$ and $y$ at all, as we did above). We can then construct a $Z$ so that $\langle Z, T \rangle = 0$ but $$|\langle Z, T' \rangle| \geq \alpha \theta \Big( \frac{\gamma}{8} \Big)^{6k}$$
Again using $m$ samples we can reject if $\|T - T'\|_1 \geq |\langle Z, T' \rangle| - |\langle Z, T \rangle|\geq \frac{1}{2}\alpha \theta \Big( \frac{\gamma}{8} \Big)^{6k}$ which will fail with probability at most $\frac{\delta}{10kn^2}$. Similar to the first case, it is easy to see that if $M$ and $M'$ are component wise $\theta'$-close in parameters then $\|T' - T\|_1 \leq  \frac{1}{4}\alpha \theta \Big( \frac{\gamma}{8} \Big)^{6k}$ with failure probability at most $\frac{\delta}{10kn^2}$, in which case $|\langle Z, T' \rangle|  - |\langle Z, T \rangle|  \leq  \frac{1}{4}\alpha \theta \Big( \frac{\gamma}{8} \Big)^{6k}$ for all $Z$ where the largest entry in absolute value is at most one. This completes the proof. 
\end{proof}

We are now ready to complete the proof of Theorem~\ref{mainsmooth}:

\begin{proof}
First we run the algorithm from Claim~\ref{prefixsep} to generate a list of candidate prefixes. Then for each $k$ tuple of prefixes, we run the algorithm from Lemma~\ref{prefixtofull} which in turn generates a list of candidate base permutations. Then we consider all $k$ tuples of candidate base permutations and do a grid search of side length $\beta = \big( \frac{\theta \alpha \gamma^k}{10^k} \big)^{100}$ over the possible scaling parameters. For $k$ tuple of candidate base permutations and estimates for the scaling parameters, we run the algorithm from Claim~\ref{nobrutemix} to estimate the mixing weight of each component. This generates a list of candidate mixtures, at least one of which is component-wise $\theta' = \big( \frac{\theta \alpha \gamma^k}{10^k} \big)^{50}$ close in parameters to the true mixture $M$. 

We then run the testing algorithm in Lemma~\ref{testwithsep} on each candidate mixture with the failure probability $\delta'$. Then the total failure probability is at most $ 10 \delta' ( \frac{2^k}{\alpha \beta \gamma^k} )^{10k}$. We can set $\delta'$ appropriately to make this quantity be at most $\delta$. Now if the testing algorithm does not accept a mixture that should have been rejected, or vice-versa, it outputs at least one mixture $M'$ which must be component-wise $\theta$-close in base parameters to the true mixture. From the guarantees of  Claim~\ref{nobrutemix} we know that its mixing weights must also be close to the true mixing weights, which finally completes the proof of correctness.
\end{proof}

\bibliographystyle{plain}

\appendix 

\section{Omitted Proofs} \label{appendix}

Here we give the proofs deferred from Section~\ref{TVdistanceBounds}. First we prove Claim~\ref{simple}:

\begin{proof}
Assume without loss of generality that $\phi_2 \geq \phi_1$. Now let $x$ and $y$ be a pair of elements where $x$ is ranked higher than $y$ in $\pi_1$, and $x$ is ranked lower than $y$ in $\pi_2$. Then the total variation distance between $M(\phi_1, \pi_1)$ and  $M(\phi_2, \pi_2)$ is at least the difference between the probabilities they rank $x$ higher than $y$, which is
$$\frac{1}{1+\phi_1}- \frac{\phi_2}{1+\phi_2} \geq \frac{\epsilon}{2}$$
which completes the proof. 
\end{proof}

Next we prove Lemma~\ref{TVbound}:

\begin{proof}
Consider each permutation $\pi'$. Let $d = d_{KT}(\pi', \pi)$ and set $Z_i(\phi) = 1 + \phi + \cdots + \phi^{i-1}$. The probability of generating $\pi'$ under $M_1$ is 
\[
\Pr_{M_1}[\pi'] = \frac{\phi_1^d}{Z_1(\phi_1) \cdots Z_n(\phi_1)} = \frac{\phi_1^d(1-\phi_1)^{n-1}}{(1-\phi_1^2) \cdots (1-\phi_1^n)}
\]
while the probability of generating it under $M_2$ is 
\[ 
\Pr_{M_2}[\pi'] = \frac{\phi_2^d}{Z_1(\phi_2) \cdots Z_n(\phi_2)} = \frac{\phi_2^d(1-\phi_2)^{n-1}}{(1-\phi_2^2) \cdots (1-\phi_2^n)}
\]
First if $\phi_i < \frac{\mu}{2(n-1)}$ then the probability of generating $\pi$ is at least $(1-\frac{\mu}{2(n-1)})^{n-1} \geq 1-\frac{\mu}{2}$ so if both $\phi_1, \phi_2 < \frac{\mu}{2(n-1)}$ then the total variation distance between the distributions is at most $\mu$.

Now it suffices to consider the case where both $\phi_i \geq \frac{\mu}{2n}$.  Without loss of generality $\phi_1 \geq \phi_2$.  We will bound the ratio between the two probabilities above.  Note $\phi_1^d \geq \phi_2^d$ and $\frac{(1-\phi_1)^{n-1}}{(1-\phi_1^2) \cdots (1-\phi_1^n)} \leq \frac{(1-\phi_2)^{n-1}}{(1-\phi_2^2) \cdots (1-\phi_2^n)}$.  We have
$$\frac{\Pr_{M_1}[\pi']}{\Pr_{M_2}[\pi']} \leq \Big (\frac{\phi_1}{\phi_2}\Big)^d \leq \Big (1+\frac{\frac{\mu^2}{10n^3}}{\phi_2} \Big)^d \leq \Big(1+\frac{\mu}{5n^2}\Big)^d \leq \Big(1+\frac{\mu}{5n^2}\Big)^{n^2} \leq 1+\frac{\mu}{2}
$$
where the last inequality is true when $n \geq 1, \mu <1$ (which is effectively always).  Also
\[
\frac{\Pr_{M_1}[\pi']}{\Pr_{M_2}[\pi']} \geq \frac{1+\phi_2}{1+\phi_1} \cdots  \frac{1+\phi_2+ \cdots +\phi_2^{n-1}}{1+\phi_1+ \cdots +  \phi_1^{n-1}}
\]
We now bound each term separately
\begin{eqnarray*}
\frac{1+\phi_2 + \cdots +\phi_2^{i-1}}{1+\phi_1 + \cdots +\phi_1^{i-1}} &=& 1 - \frac{(\phi_1-\phi_2)+(\phi_1^2 - \phi_2^2)+ \cdots (\phi_1^{i-1}-\phi_2^{i-1})}{1+\phi_1 + \cdots +\phi_1^{i-1}} \\
& \geq& 1- \Big (\frac{\mu^2}{10n^3} + 2\frac{\mu^2}{10n^3} + \cdots + (i-1)\frac{\mu^2}{10n^3}\Big ) \geq 1-\frac{\mu^2}{10n}
\end{eqnarray*}

Therefore we conclude
$$\frac{\Pr_{M_1}[\pi']}{\Pr_{M_2}[\pi']} \geq \Big (1-\frac{\mu^2}{10n}\Big)^{n-1} \geq 1- \frac{\mu}{2}$$
Combining both bounds on the ratio $\frac{\Pr_{M_1}[\pi']}{\Pr_{M_2}[\pi']}$, we see that the sum of  $|\Pr_{M_1}[\pi]-\Pr_{M_2}[\pi]|$ over $\pi'$ with $\Pr_{M_1}[\pi'] \geq \Pr_{M_2}[\pi']$ is at most $\frac{\mu}{2}$ and similarly for the sum over $\pi'$ with $\Pr_{M_1}[\pi'] < \Pr_{M_2}[\pi']$.  Thus the total variation distance between the two distributions is at most $\mu$, completing the proof.
\end{proof}

\end{document}